\documentclass
[
    oneside,                 
    openright,               
    cleardoublepage = empty, 
    fontsize = 12 pt,        
    american,                
    captions = tableheading, 
    numbers = noenddot,      
    footheight = 35 pt,      
]
{scrbook}


\newif\ifprintVersion   
\newif\ifprofessionalPrint 
\newif\iffancyTheorems  
\newif\ifboldNumberSets 
\newif\ifbachelorThesis 

\printVersionfalse
\professionalPrintfalse
\fancyTheoremstrue
\boldNumberSetstrue
\bachelorThesistrue


\newcommand*{\printTitle}{}
\newcommand*{\printGermanTitle}{}
\newcommand*{\myTitle}[2]{\renewcommand*{\printTitle}{#1}\renewcommand*{\printGermanTitle}{#2}}
\newcommand*{\printTitleBold}{\textbf{\printTitle}}

\newcommand*{\printAuthor}{}
\newcommand*{\myName}[1]{\renewcommand*{\printAuthor}{#1}}

\newcommand*{\printProgram}{}
\newcommand*{\myProgram}[1]{\renewcommand*{\printProgram}{#1}}

\newcommand*{\printDateReceived}{}
\newcommand*{\dateOfHandingIn}[1]{\renewcommand*{\printDateReceived}{#1}}

\newcommand*{\printSubject}{}
\newcommand*{\mySubject}[1]{\renewcommand*{\printSubject}{#1}}

\newcommand*{\printKeywords}{}
\newcommand*{\myKeywords}[1]{\renewcommand*{\printKeywords}{#1}}

\newcommand*{\printNameOfSupervisor}{}
\newcommand*{\nameOfMySupervisor}[1]{\renewcommand*{\printNameOfSupervisor}{#1}}

\newcommand*{\printAdditionalExaminers}{}
\newcommand*{\additionalExaminers}[1]{\renewcommand*{\printAdditionalExaminers}{#1}}

\newlength{\extraborderlength}
\newcommand*{\extraBorder}[1]{\setlength{\extraborderlength}{#1}}

\newlength{\mybindingcorrection}
\newcommand*{\bindingCorrection}[1]{\setlength{\mybindingcorrection}{#1}} 



    \extraBorder{3 mm}

    \bindingCorrection{6 mm}



    \myTitle{Analysis of a Random Local Search Algorithm for Dominating Set}{Analyse einer randomisierten lokalen Suche für Dominating Set}

    \myName{Hendrik Higl}

    \myProgram{IT Systems Engineering}

    \dateOfHandingIn{30. Oktober 2025}

    \nameOfMySupervisor{Prof.\,Dr. Tobias Friedrich}

    \additionalExaminers{Dr. Timo Kötzing\newline Jurek Sander}

    \mySubject{A bachelor thesis for the degree of IT Systems Engineering at the university of Potsdam. We analyze the run time of a random local search algorithm for the dominating set problem on cycle graphs.}

    \myKeywords{bachelor thesis | random local search | run time analysis | dominating set | cycle graph | markov chain}


%


\usepackage[utf8]{inputenc} 
\usepackage[T1]{fontenc}    
\usepackage
[
    ngerman,         
    main = american, 
]
{babel}                     

\input glyphtounicode
\pdfgentounicode=1


\widowpenalties 2 10000 10000


\usepackage{calc} 

\newlength{\myparindent}
\newlength{\myparskip}
\setlength{\myparindent}{1 em}
\setlength{\myparskip}{0 em}

\setlength{\parindent}{\myparindent}
\setlength{\parskip}{\myparskip}
\setlength{\parskip}{0 pt plus 1 pt minus 0 pt}


\setfootnoterule{0 cm}

\deffootnote[1.2 em]{1.2 em}{0 em}{\makebox[1.4 em][l]{\textbf{\thefootnotemark}}}

\makeatletter%
    \@removefromreset{footnote}{chapter}%
\makeatother


\usepackage[dvipsnames]{xcolor} 

\definecolor{stroke1}{HTML}{2574A9} 

\colorlet{captionlabel}{black}
\colorlet{footerpagenr}{black}
\colorlet{footerchapter}{stroke1}
\colorlet{footerchaptername}{black}
\colorlet{footersection}{stroke1}
\colorlet{footersectionname}{black}
\colorlet{chapternumber}{stroke1}


\newlength{\mypaperwidth}
\setlength{\mypaperwidth}{210 mm}

\newlength{\mypaperheight}
\setlength{\mypaperheight}{297 mm}

\newlength{\mybodywidth}
\setlength{\mybodywidth}{140 mm}

\newlength{\mybodyheight}
\setlength{\mybodyheight}{198 mm}

\newlength{\myoutermargin}
\ifprintVersion
    \ifprofessionalPrint
        \setlength{\myoutermargin}{(\mypaperwidth - \mybodywidth) / \real{1.5} + \extraborderlength}
    \else
        \setlength{\myoutermargin}{(\mypaperwidth - \mybodywidth) / \real{1.5} - \mybindingcorrection}
    \fi
\else
    \setlength{\myoutermargin}{(\mypaperwidth - \mybodywidth) / \real{1.5}}
\fi

\newlength{\mytopmargin}
\setlength{\mytopmargin}{(\mypaperheight - \mybodyheight) / 3}
\ifprintVersion
    \ifprofessionalPrint
        \setlength{\mytopmargin}{(\mypaperheight - \mybodyheight) / 3 + \extraborderlength}
    \fi
\fi

\newlength{\myinnermargin}
\setlength{\myinnermargin}{\mypaperwidth - \mybodywidth - \myoutermargin}
\ifprintVersion
    \ifprofessionalPrint
        \setlength{\myinnermargin}{\mypaperwidth + \mybindingcorrection + 2\extraborderlength - \mybodywidth - \myoutermargin}
    \fi
\fi

\newlength{\mybottommargin}
\setlength{\mybottommargin}{\mypaperheight - \mybodyheight - \mytopmargin}
\ifprintVersion
    \ifprofessionalPrint
        \setlength{\mybottommargin}{\mypaperheight + 2\extraborderlength - \mybodyheight - \mytopmargin}
    \fi
\fi


\newcommand{\goldenratio}{1.618}

\newlength{\myheadsep} 
\setlength{\myheadsep}{\mytopmargin / \real{\goldenratio} / \real{\goldenratio} - 1 ex}
\ifprintVersion
    \ifprofessionalPrint
        \setlength{\myheadsep}{(\mytopmargin - \extraborderlength) / \real{\goldenratio} / \real{\goldenratio} - 1 ex}
    \fi
\fi

\newlength{\myfootskip} 
\setlength{\myfootskip}{\mybottommargin / \real{\goldenratio} - 1 ex}
\ifprintVersion
    \ifprofessionalPrint
        \setlength{\myfootskip}{(\mybottommargin - \extraborderlength) / \real{\goldenratio} - 1 ex}
    \fi
\fi


\newlength{\mymargininnersep} 
\setlength{\mymargininnersep}{7 mm}

\newlength{\mymarginoutersep} 
\setlength{\mymarginoutersep}{12 mm}
\ifprintVersion
    \ifprofessionalPrint
        \setlength{\mymarginoutersep}{12 mm + \extraborderlength}
    \fi
\fi

\newlength{\mymarginwidth} 
\setlength{\mymarginwidth}{\myoutermargin - \mymargininnersep - \mymarginoutersep}

\newlength{\mymarginwidthwithinnersep} 
\setlength{\mymarginwidthwithinnersep}{\mymarginwidth + \mymargininnersep}

\usepackage
[
    \ifprintVersion
        \ifprofessionalPrint
            paperwidth = \mypaperwidth + 2\extraborderlength + \mybindingcorrection,
            paperheight = \mypaperheight + 2\extraborderlength,
        \else
            paperwidth = \mypaperwidth,
            paperheight = \mypaperheight,
        \fi
    \else
        paperwidth = \mypaperwidth,
        paperheight = \mypaperheight,
    \fi
    textwidth = \mybodywidth,
    textheight = \mybodyheight,
    outer = \myoutermargin,
    top = \mytopmargin,
    headsep = \myheadsep,
    footskip = \myfootskip,
    marginparsep = \mymargininnersep,
    marginparwidth = \mymarginwidth,
]
{geometry} 


\usepackage
[
]
{scrlayer-scrpage} 

\clearpairofpagestyles


\KOMAoptions
{%
    headwidth = \textwidth + \mymarginwidthwithinnersep,%
    footwidth = \myoutermargin : \textwidth,%
}


\automark[chapter]{chapter}
\automark*[section]{}


\lehead%
{%
    \begin{minipage}[b]{\mymarginwidth}%
        \small\raggedleft\normalfont\textsf{\textbf{\color{footerchapter}\chaptername\ \thechapter}}
    \end{minipage}
}
\cehead{\hspace*{\mymarginwidthwithinnersep}\parbox{\textwidth}{\raggedright\leftmark}}

\rohead%
{%
    \Ifstr{\rightmark}{\leftmark}%
    {%
        \begin{minipage}[b]{\mymarginwidth}%
            \small\raggedright\normalfont\textsf{\textbf{\color{footersection}Chapter\ \thechapter}}%
        \end{minipage}%
    }%
    {%
        \begin{minipage}[b]{\mymarginwidth}%
            \small\raggedright\normalfont\textsf{\textbf{\color{footersection}Section\ \thesection}}%
        \end{minipage}%
    }%
}
\cohead{\hspace*{-\mymarginwidthwithinnersep}\parbox{\textwidth}{\raggedleft\rightmark}}


\lefoot*%
{%
    \vspace*{1 ex}%
    {\color{stroke1}\rule{\myoutermargin - \mymargininnersep}{0.5 mm}}\\
    \begin{minipage}[b]{\myoutermargin - \mymargininnersep}%
        \raggedleft\normalfont\color{footerpagenr}\textbf{\thepage}%
    \end{minipage}%
}
\rofoot*%
{%
    {\color{stroke1}\rule{\myoutermargin - \mymargininnersep}{0.5 mm}}\\
    \begin{minipage}[b]{\myoutermargin - \mymargininnersep}%
        \raggedright\normalfont\color{footerpagenr}\textbf{\thepage}%
    \end{minipage}%
}


\usepackage{caption}
\captionsetup
{
    font = small,
    labelfont = {bf, sf, color = captionlabel},
    format = plain,
    singlelinecheck = off,
}

\usepackage{tikz} 

\newlength{\mytmpa}
\setlength{\mytmpa}{1 mm}
\newlength{\mytmpb}


\renewcommand*{\partlineswithprefixformat}[3]%
{%
    #2
    \thispagestyle{empty}
    \setlength{\mytmpa}{0.618\mypaperwidth}%
    \setlength{\mytmpb}{0.382\mypaperheight}%
    \ifprintVersion
        \ifprofessionalPrint
            \setlength{\mytmpa}{0.618\mypaperwidth + \mybindingcorrection + \extraborderlength}%
            \setlength{\mytmpb}{0.382\mypaperheight + \extraborderlength}%
        \fi
    \fi
    \begin{tikzpicture}[overlay, remember picture]%
        \node [inner sep = 0, outer sep = 0, anchor = north] at (current page.north west)%
        {%
            \begin{tikzpicture}[overlay, remember picture]%
            \draw[color = stroke1, line width = 0.7 mm] (\mytmpa, 0) -- (\mytmpa, -\mytmpb);%
            \end{tikzpicture}%
        };%
        \node (align) [align = right, below = \mytmpb - 2 ex, inner sep = 0, outer sep = 0, anchor = north west] at (current page.north west)%
        {%
            \hspace{\mytmpa}\hspace{0.5 em}\partname\ \thepart\\[1 ex]
            \color{stroke1}#3%
        };%
    \end{tikzpicture}%
}
\RedeclareSectionCommand%
[%
    font = \normalfont\Huge\sffamily,
    prefixfont = \normalfont\Huge\sffamily,
]
{part}


\usepackage{etoolbox}

\newbool{chapterHasANumber}
\newbool{chapterHasAStar}
\renewcommand*{\chapterlinesformat}[3]%
{%
    \Ifnumbered{#1}{\setbool{chapterHasANumber}{true}}{\setbool{chapterHasANumber}{false}}%
    \Ifstr{#2}{}{\setbool{chapterHasAStar}{true}}{\setbool{chapterHasAStar}{false}}%
    \ifboolexpr{bool{chapterHasANumber} and not bool{chapterHasAStar}}%
    {%
        \begin{tikzpicture}[overlay, remember picture]%
            \node [right = \myinnermargin, below = \mytopmargin, inner sep = 0, outer sep = 0, anchor = north west] (numbernode) at (current page.north west)%
            {%
                \hspace{\myinnermargin}%
                \sffamily\fontsize{60}{60}\selectfont%
                \color{chapternumber}%
                \thechapter%
            };%
            \node [inner sep = 0, outer sep = 0, anchor = north west] at (numbernode.south west)%
            {%
                \begin{tikzpicture}[overlay, remember picture]%
                    \draw[color = stroke1, line width = 0.7 mm] (\myinnermargin, -1 ex) -- (\paperwidth, -1 ex);%
                \end{tikzpicture}%
            };%
            \node (align) [text width = \textwidth - 2 cm, align = right, right = \myinnermargin + \mybodywidth, inner sep = 0, outer sep = 0, anchor = east] at (numbernode.west)%
            {%
                #3%
            };%
        \end{tikzpicture}%
    }%
    {%
        \begin{tikzpicture}[overlay, remember picture]%
            \node [right = \myinnermargin, below = \mytopmargin, inner sep = 0, outer sep = 0, anchor = north west] (numbernode) at (current page.north west)%
            {%
                \hspace{\myinnermargin}%
                \sffamily\fontsize{60}{60}\selectfont%
                \color{white}%
                \thechapter%
            };%
            \node [inner sep = 0, outer sep = 0, anchor = north west] at (numbernode.south west)%
            {%
                \begin{tikzpicture}[overlay, remember picture]%
                    \draw[color = stroke1, line width = 0.7 mm] (\myinnermargin, -1 ex) -- (\paperwidth, -1 ex);%
                \end{tikzpicture}%
            };%
            \node (align) [align = left, right = \myinnermargin, inner sep = 0, outer sep = 0, anchor = south west] at (numbernode.south west)%
            {%
                #3%
            };%
        \end{tikzpicture}%
    }%
}
\RedeclareSectionCommand%
[%
    font = \color{stroke1}\normalfont\huge\sffamily,
    afterskip = 20 pt,
]
{chapter}


\BeforeStartingTOC[toc]{\pagestyle{plain}}
\AfterStartingTOC{\thispagestyle{plain}}                        
%

\usepackage
[
    sortcites,              
    style = alphabetic,     
    giveninits=true,
    defernumbers,           
    safeinputenc,           
    backref = true,         
    backrefstyle = three,   
    hyperref = true,        
    maxbibnames = 99,       
    maxcitenames = 2,       
]
{biblatex} 



\addbibresource{references/strings.bib}                     
\addbibresource{references/references.bib}                  


\renewbibmacro{in:}%
{%
    \ifentrytype{article}{}{\printtext{\bibstring{in}\intitlepunct}}%
}

\renewbibmacro*{volume+number+eid}%
{%
    \printfield{volume}%
    \iffieldundef{number}{}{\addcolon}%
    \printfield{number}%
    \setunit*{\addcomma\space}%
    \printfield{eid}%
}

\DefineBibliographyStrings{english}%
{%
    backrefpage  = {\lowercase{s}ee page}, 
    backrefpages = {\lowercase{s}ee pages} 
}


\DeclareFieldFormat[article]{title}{\textbf{\color{stroke1}#1}}
\DeclareFieldFormat[inproceedings]{title}{\textbf{\color{stroke1}#1}}
\DeclareFieldFormat[thesis]{title}{\textbf{\color{stroke1}#1}}
\DeclareFieldFormat[book]{title}{\textbf{\color{stroke1}#1}}
\DeclareFieldFormat[unpublished]{title}{\textbf{\color{stroke1}#1}}
\DeclareFieldFormat[report]{title}{\textbf{\color{stroke1}#1}}
\DeclareFieldFormat[inbook]{chapter}{\textbf{\color{stroke1}#1}}
\DeclareFieldFormat[inbook]{title}{#1}
\DeclareFieldFormat{pages}{#1}


\newtoggle{authorend}
\togglefalse{authorend}

\DeclareBibliographyDriver{article}%
{%
  \usebibmacro{bibindex}%
  \usebibmacro{begentry}%
  \iftoggle{authorend}{}{\usebibmacro{author/translator+others}}%
  \setunit{\labelnamepunct}\newblock
  \usebibmacro{title}%
  \newunit
  \newunit\newblock
  \usebibmacro{byauthor}%
  \newunit\newblock
  \usebibmacro{bytranslator+others}%
  \newunit\newblock
  \printfield{version}%
  \newunit\newblock
  \usebibmacro{in:}%
  \usebibmacro{journal+issuetitle}%
  \newunit
  \usebibmacro{byeditor+others}%
  \newunit
  \usebibmacro{note+pages}%
  \newunit\newblock
  \iftoggle{bbx:isbn}
  {\printfield{issn}}
  {}%
  \newunit\newblock
  \usebibmacro{doi+eprint+url}%
  \newunit\newblock
  \usebibmacro{addendum+pubstate}%
  \setunit{\bibpagerefpunct}\newblock
  \usebibmacro{pageref}%
  \newunit\newblock
  \iftoggle{bbx:related}
  {\usebibmacro{related:init}%
    \usebibmacro{related}}
  {}%
  \usebibmacro{finentry}%
  \iftoggle{authorend}{\usebibmacro{author/translator+others}}{}%
}

\DeclareBibliographyDriver{inbook}%
{%
  \usebibmacro{bibindex}%
  \usebibmacro{begentry}%
  \iftoggle{authorend}{}{\usebibmacro{author/translator+others}}%
  \setunit{\labelnamepunct}\newblock
  \usebibmacro{chapter+pages}%
  \newunit
  \newunit\newblock
  \usebibmacro{byauthor}%
  \newunit\newblock
  \usebibmacro{in:}%
  \usebibmacro{bybookauthor}%
  \newunit\newblock
  \usebibmacro{maintitle+booktitle}%
  \newunit\newblock
  \usebibmacro{byeditor+others}%
  \newunit\newblock
  \printfield{edition}%
  \newunit
  \iffieldundef{maintitle}
  {\printfield{volume}%
    \printfield{part}}
  {}%
  \newunit
  \printfield{volumes}%
  \newunit\newblock
  \usebibmacro{series+number}%
  \newunit\newblock
  \printfield{note}%
  \newunit\newblock
  \usebibmacro{publisher+location+date}%
  \newunit\newblock
  \newunit\newblock
  \iftoggle{bbx:isbn}
  {\printfield{isbn}}
  {}%
  \newunit\newblock
  \usebibmacro{doi+eprint+url}%
  \newunit\newblock
  \usebibmacro{addendum+pubstate}%
  \setunit{\bibpagerefpunct}\newblock
  \usebibmacro{pageref}%
  \newunit\newblock
  \iftoggle{bbx:related}
  {\usebibmacro{related:init}%
    \usebibmacro{related}}
  {}%
  \usebibmacro{finentry}%
  \iftoggle{authorend}{\usebibmacro{author/translator+others}}{}%
}

\DeclareBibliographyDriver{inproceedings}%
{%
  \usebibmacro{bibindex}%
  \usebibmacro{begentry}%
  \iftoggle{authorend}{}{\usebibmacro{author/translator+others}}%
  \setunit{\labelnamepunct}\newblock
  \usebibmacro{title}%
  \newunit
  \newunit\newblock
  \usebibmacro{byauthor}%
  \newunit\newblock
  \usebibmacro{in:}%
  \usebibmacro{maintitle+booktitle}%
  \newunit\newblock
  \usebibmacro{event+venue+date}%
  \newunit\newblock
  \usebibmacro{byeditor+others}%
  \newunit\newblock
  \iffieldundef{maintitle}
  {\printfield{volume}%
    \printfield{part}}
  {}%
  \newunit
  \printfield{volumes}%
  \newunit\newblock
  \usebibmacro{series+number}%
  \newunit\newblock
  \printfield{note}%
  \newunit\newblock
  \printlist{organization}%
  \newunit
  \usebibmacro{publisher+location+date}%
  \newunit\newblock
  \usebibmacro{chapter+pages}%
  \newunit\newblock
  \iftoggle{bbx:isbn}
  {\printfield{isbn}}
  {}%
  \newunit\newblock
  \usebibmacro{doi+eprint+url}%
  \newunit\newblock
  \usebibmacro{addendum+pubstate}%
  \setunit{\bibpagerefpunct}\newblock
  \usebibmacro{pageref}%
  \newunit\newblock
  \iftoggle{bbx:related}
  {\usebibmacro{related:init}%
    \usebibmacro{related}}
  {}%
  \usebibmacro{finentry}%
  \iftoggle{authorend}{\usebibmacro{author/translator+others}}{}%
}

\DeclareBibliographyDriver{thesis}%
{%
  \usebibmacro{bibindex}%
  \usebibmacro{begentry}%
  \iftoggle{authorend}{}{\usebibmacro{author}}%
  \setunit{\labelnamepunct}\newblock
  \usebibmacro{title}%
  \newunit
  \newunit\newblock
  \usebibmacro{byauthor}%
  \newunit\newblock
  \printfield{note}%
  \newunit\newblock
  \printfield{type}%
  \newunit
  \usebibmacro{institution+location+date}%
  \newunit\newblock
  \usebibmacro{chapter+pages}%
  \newunit
  \printfield{pagetotal}%
  \newunit\newblock
  \iftoggle{bbx:isbn}
  {\printfield{isbn}}
  {}%
  \newunit\newblock
  \usebibmacro{doi+eprint+url}%
  \newunit\newblock
  \usebibmacro{addendum+pubstate}%
  \setunit{\bibpagerefpunct}\newblock
  \usebibmacro{pageref}%
  \newunit\newblock
  \iftoggle{bbx:related}
  {\usebibmacro{related:init}%
    \usebibmacro{related}}
  {}%
  \usebibmacro{finentry}%
  \iftoggle{authorend}{\usebibmacro{author}}{}%
}



\providebool{bbx:subentry}
\newbibmacro*{citenum}%
{
  \printtext[bibhyperref]{
    \printfield{prefixnumber}%
    \printfield{labelnumber}%
    \ifbool{bbx:subentry}
    {\printfield{entrysetcount}}
    {}}%
}

\DeclareCiteCommand{\conline}[\mkbibbrackets]
{\usebibmacro{prenote}}
{\usebibmacro{citeindex}%
  \usebibmacro{citenum}}
{\multicitedelim}
{\usebibmacro{postnote}}       
%


\usepackage
[
    babel = true, 
]
{microtype}           
\usepackage{csquotes} 


\usepackage{amsmath}
\usepackage{amssymb}
\usepackage{amsthm}
\usepackage{thmtools}
\usepackage{mathtools}
\usepackage{thm-restate}
\usepackage{dsfont}        
\usepackage{braceMnSymbol} 


\usepackage
[
    ttscale = 0.85, 
]
{libertine} 
\usepackage
[
    libertine,    
    slantedGreek, 
    vvarbb,       
    libaltvw,     
]
{newtxmath} 
\usepackage{url} 
\usepackage{bm}  


\usepackage{graphicx} 
\usepackage
[
    subrefformat = simple, 
    labelformat = simple,  
]
{subcaption}         
\usepackage{wrapfig} 

\columnsep = \mymargininnersep


\usepackage{array}     
\usepackage{booktabs}  
\usepackage{longtable} 
\usepackage{pdflscape} 


\usepackage{enumitem} 


\usepackage
[
    ruled,         
    vlined,        
    linesnumbered, 
]
{algorithm2e} 


\usepackage{xspace}   
\usepackage
[
    shortcuts, 
]
{extdash}             
\usepackage{setspace} 


\usepackage{xparse}    
\usepackage{footnote}  
\usepackage{afterpage} 
\usepackage
[
    textsize = scriptsize, 
]
{todonotes}            


\usepackage{dirtytalk}
\usepackage{tabularx}


\usepackage
[
    bookmarks = true,                 
    bookmarksopen = false,            
    bookmarksnumbered = true,         
    pdfstartpage = 1,                 
    pdftitle = {{\printTitle}},       
    pdfauthor = {{\printAuthor}},     
    pdfsubject = {{\printSubject}},   
    pdfkeywords = {{\printKeywords}}, 
    breaklinks = true,                
    \ifprintVersion
        hidelinks,                    
    \else
    colorlinks = true,            
    allcolors = stroke1,          
    \fi
]
{hyperref} 

\usepackage
[
    noabbrev,   
    nameinlink, 
]
{cleveref} 
%


\newcommand*{\colloquialDegreeName}{Master}
\newcommand*{\colloquialDegreeNameLowercase}{master}

\newcommand*{\degreeAbbreviation}{M.}

\ifbachelorThesis
    \renewcommand*{\colloquialDegreeName}{Bachelor}
    \renewcommand*{\colloquialDegreeNameLowercase}{bachelor}
    \renewcommand*{\degreeAbbreviation}{B.}
\fi



\makeatletter
    \def\IfEmptyTF#1%
    {%
        \if\relax\detokenize{#1}\relax%
            \expandafter\@firstoftwo%
        \else%
            \expandafter\@secondoftwo%
        \fi%
    }
\makeatother

\NewDocumentCommand{\mathOrText}{m}
{%
    \ensuremath{#1}\xspace%
}

\let\originalleft\left
\let\originalright\right
\renewcommand{\left}{\mathopen{}\mathclose\bgroup\originalleft}
\renewcommand{\right}{\aftergroup\egroup\originalright}

\makeatletter
    \DeclareRobustCommand{\bfseries}%
    {%
        \not@math@alphabet\bfseries\mathbf%
        \fontseries\bfdefault\selectfont%
        \boldmath%
    }
\makeatother

\xspaceaddexceptions{]\}}

\urlstyle{rm}

\allowdisplaybreaks

\crefname{ineq}{inequality}{inequalities}
\creflabelformat{ineq}{#2{\upshape(#1)}#3} 

\crefname{term}{term}{terms}
\creflabelformat{term}{#2{\upshape(#1)}#3}


\let\oldfootnote\footnote

\newlength{\spaceBeforeFootnote} 
\newlength{\spaceAfterFootnote}  

\RenewDocumentCommand{\footnote}{o o o m}%
{%
    \IfNoValueTF{#1}%
    {%
        \oldfootnote{#4}%
    }%
    {%
        \setlength{\spaceBeforeFootnote}{\IfEmptyTF{#1}{0}{#1} em}%
        \IfNoValueTF{#2}%
        {%
            \hspace*{\spaceBeforeFootnote}\oldfootnote{#4}%
        }%
        {%
            \setlength{\spaceAfterFootnote}{\IfEmptyTF{#2}{0}{#2} em}%
            \hspace*{\spaceBeforeFootnote}\IfNoValueTF{#3}{\oldfootnote{#4}}{\oldfootnote[#3]{#4}}\hspace*{\spaceAfterFootnote}%
        }%
    }%
}

\makesavenoteenv{figure}
\makesavenoteenv{table}
\makesavenoteenv{tabular}


\iffancyTheorems
    \declaretheoremstyle
    [
        spaceabove = \topsep,
        spacebelow = \topsep,
        headfont = \bfseries,
        headformat = \textcolor{stroke1}{$\blacktriangleright$} \NAME~\NUMBER \NOTE,
        notefont = \bfseries,
        notebraces = {(}{)},
        bodyfont = \normalfont,
        postheadspace = 0.5 em,
        qed = \textcolor{stroke1}{\bfseries$\blacktriangleleft$},
    ]
    {myTheoremStyle}
    

    \declaretheorem
    [
        style = myTheoremStyle,
        name = Conjecture,
        numberwithin = chapter,
    ]
    {conjecture}
    \declaretheorem
    [
        style = myTheoremStyle,
        name = Proposition,
        sharenumber = conjecture,
    ]
    {proposition}
    \declaretheorem
    [
        style = myTheoremStyle,
        name = Claim,
        sharenumber = conjecture,
    ]
    {claim}
    \declaretheorem
    [
        style = myTheoremStyle,
        name = Lemma,
        sharenumber = conjecture,
    ]
    {lemma}
    \declaretheorem
    [
        style = myTheoremStyle,
        name = Corollary,
        sharenumber = conjecture,
    ]
    {corollary}
    \declaretheorem
    [
        style = myTheoremStyle,
        name = Theorem,
        sharenumber = conjecture,
    ]
    {theorem}
    \declaretheorem
    [
        style = myTheoremStyle,
        name = Definition,
        sharenumber = conjecture,
    ]
    {definition}
    \declaretheorem
    [
        style = myTheoremStyle,
        name = Example,
        sharenumber = conjecture,
    ]
    {example}
    \declaretheorem
    [
        style = myTheoremStyle,
        name = Remark,
        sharenumber = conjecture,
    ]
    {remark}
\else
    \theoremstyle{plain}
    
    \newtheorem{conjecture}{Conjecture}[chapter]
    \newtheorem{proposition}[conjecture]{Proposition}
    
    \newtheorem{lemma}[conjecture]{Lemma}
    \newtheorem{corollary}[conjecture]{Corollary}
    \newtheorem{theorem}[conjecture]{Theorem}
    \newtheorem{definition}[conjecture]{Definition}

\fi


\NewDocumentCommand{\functionTemplate}{m m m m o}%
{%
    \IfNoValueTF{#5}%
    {%
        \mathOrText{#1\left#2{#4}\right#3}%
    }%
    {%
        \mathOrText{#1#5#2{#4}#5#3}%
    }%
}

\newcommand*{\leftBracketType}{(}
\newcommand*{\rightBracketType}{)}

\NewDocumentCommand{\createFunction}{m m o o}%
{%
    \renewcommand*{\leftBracketType}{\IfNoValueTF{#3}{(}{#3}}%
    \renewcommand*{\rightBracketType}{\IfNoValueTF{#4}{)}{#4}}%
    \NewDocumentCommand{#1}{o o}%
    {%
        \IfNoValueTF{##1}%
        {%
            \mathOrText{#2}%
        }%
        {%
            \functionTemplate{#2}{\leftBracketType}{\rightBracketType}{##1}[##2]%
        }%
    }%
}

\DeclareDocumentCommand{\probabilisticFunctionTemplate}{m m O{} o}
{%
    \functionTemplate{#1}%
    {\lbrack}%
    {\rbrack}%
    {#2\IfEmptyTF{#3}{}{\ \IfNoValueTF{#4}{\left}{#4}\vert\ \vphantom{#2}#3\IfNoValueTF{#4}{\right.}{}}}%
    [#4]%
}



\ifboldNumberSets
    \newcommand*{\N}{\mathOrText{\mathbf{N}}}

    \newcommand*{\R}{\mathOrText{\mathbf{R}}}
    
    \newcommand*{\indicatorFunctionSymbol}{\mathbf{1}}
\else
    \newcommand*{\N}{\mathOrText{\mathds{N}}}

    \newcommand*{\R}{\mathOrText{\mathds{R}}}
    
    \newcommand*{\indicatorFunctionSymbol}{\mathds{1}}
\fi


\RenewDocumentCommand{\Pr}{m O{} o}%
{%
    \probabilisticFunctionTemplate{\mathrm{Pr}}{#1}[#2][#3]%
}

\NewDocumentCommand{\E}{m O{} o}%
{%
    \probabilisticFunctionTemplate{\mathrm{E}}{#1}[#2][#3]%
}

\NewDocumentCommand{\Var}{m O{} o}%
{%
    \probabilisticFunctionTemplate{\mathrm{Var}}{#1}[#2][#3]%
}


\DeclareDocumentCommand{\bigO}{m o}%
{%
    \functionTemplate{\mathcal{O}}{(}{)}{#1}[#2]%
}

\DeclareDocumentCommand{\smallO}{m o}%
{%
    \functionTemplate{\mathrm{o}}{(}{)}{#1}[#2]%
}

\DeclareDocumentCommand{\bigTheta}{m o}%
{%
    \functionTemplate{\upTheta}{(}{)}{#1}[#2]%
}

\DeclareDocumentCommand{\bigOmega}{m o}%
{%
    \functionTemplate{\upOmega}{(}{)}{#1}[#2]%
}

\DeclareDocumentCommand{\smallOmega}{m o}%
{%
    \functionTemplate{\upomega}{(}{)}{#1}[#2]%
}



\DeclareDocumentCommand{\eulerE}{o}%
{%
    \mathOrText{\mathrm{e}\IfNoValueTF{#1}{}{^{#1}}}%
}



\DeclareDocumentCommand{\poly}{m o}%
{%
    \functionTemplate{\mathrm{poly}}{(}{)}{#1}[#2]%
}

\createFunction{\id}{\mathrm{id}}

\NewDocumentCommand{\ind}{m o o}%
{%
    \IfNoValueTF{#2}%
    {%
        \mathOrText{\indicatorFunctionSymbol_{#1}}%
    }%
    {%
        \functionTemplate{\indicatorFunctionSymbol_{#1}}{(}{)}{#2}[#3]%
    }%
}

\DeclareDocumentCommand{\dom}{m o}%
{%
    \functionTemplate{\mathrm{dom}}{(}{)}{#1}[#2]%
}

\DeclareDocumentCommand{\rng}{m o}%
{%
    \functionTemplate{\mathrm{rng}}{(}{)}{#1}[#2]%
}

\DeclareDocumentCommand{\d}{o}%
{%
    \mathrm{d}\IfNoValueTF{#1}{}{^{#1}}%
}

\DeclareDocumentCommand{\set}{m m o}
{
    \mathOrText{\IfNoValueTF{#3}{\left}{#3}\{#1\ \IfNoValueTF{#3}{\left}{#3}\vert\
    \vphantom{#1}#2\IfNoValueTF{#3}{\right.}{}\IfNoValueTF{#3}{\right}{#3}\}}
}      

\definecolor{mygreen}{RGB}{1, 150, 122}
\definecolor{myblue}{RGB}{64, 54, 247}

\definecolor{hpired}{RGB}{177,6,58}

\newcommand{\Wlog}[0]{w.l.o.g. }
\newcommand{\NP}{\mathrm{NP}}
\newcommand{\ooea}{\textrm{(1+1) EA}\xspace}

\newcommand{\Bern}[1]{\mathrm{Bern}\left(#1\right)}
\newcommand{\Bin}[2]{\mathrm{Bin}\left(#1,#2\right)}
\newcommand{\Geo}[1]{\mathrm{Geo}\left( #1 \right)}
\newcommand{\Ex}[1]{\E{ #1 }}
\newcommand{\ceil}[1]{\left\lceil #1 \right\rceil}
\newcommand{\floor}[1]{\left\lfloor #1 \right\rfloor}
\renewcommand{\epsilon}{\varepsilon}
\newcommand{\with}[0]{~\vert~}

\newcommand{\stocProc}[1]{\left( #1_t\right)_{t \in \N}}
\renewcommand{\N}{\mathbb{N}}
\renewcommand{\R}{\mathbb{R}}

\newcommand{\stNeigh}[1]{\mathcal{N}\left[ #1 \right]}
\newcommand{\fitness}[0]{f}
\newcommand{\minS}[1]{\min\left\{ #1 \right\}}

\newcommand{\clNeigh}[1]{N\left[ #1 \right]}
\newcommand{\oNeigh}[1]{N\left( #1 \right)}
\newcommand{\pn}[2]{\text{pn}\left[ #1 , #2 \right]}
\newcommand{\Dk}[0]{\mathcal{D}_k\left(C_n\right)}
\newcommand{\D}[0]{\mathcal{D}\left(C_n\right)}
\newcommand{\redVert}[1]{\mathcal{R}\left( #1 \right)}
\newcommand{\adj}[1]{\mathcal{A}\left( #1 \right)}

\newcommand{\cw}[2][]{\textrm{cw}_{#1}\left( #2 \right)}
\newcommand{\ccw}[2][]{\textrm{ccw}_{#1}\left( #2 \right)}
\newcommand{\dccw}[2]{d_{\textrm{ccw}}\left(#1, #2 \right)}
\newcommand{\dcw}[2]{d_{\textrm{cw}}\left(#1, #2 \right)}

\newcommand{\effRes}[2]{\mathcal{R}\left( #1 \leftrightarrow #2 \right)} 

\begin{document}

    \frontmatter

\ifprintVersion
    \ifprofessionalPrint
        \newgeometry
        {
            textwidth = 134 mm,
            textheight = 220 mm,
            top = 38 mm + \extraborderlength,
            inner = 38 mm + \mybindingcorrection + \extraborderlength,
        }
    \else
        \newgeometry
        {
            textwidth = 134 mm,
            textheight = 220 mm,
            top = 38 mm,
            inner = 38 mm + \mybindingcorrection,
        }
    \fi
\else
    \newgeometry
    {
        textwidth = 134 mm,
        textheight = 220 mm,
        top = 38 mm,
        inner = 38 mm,
    }
\fi

\begin{titlepage}
    \sffamily
    \begin{center}
        \includegraphics[height = 3.2 cm]{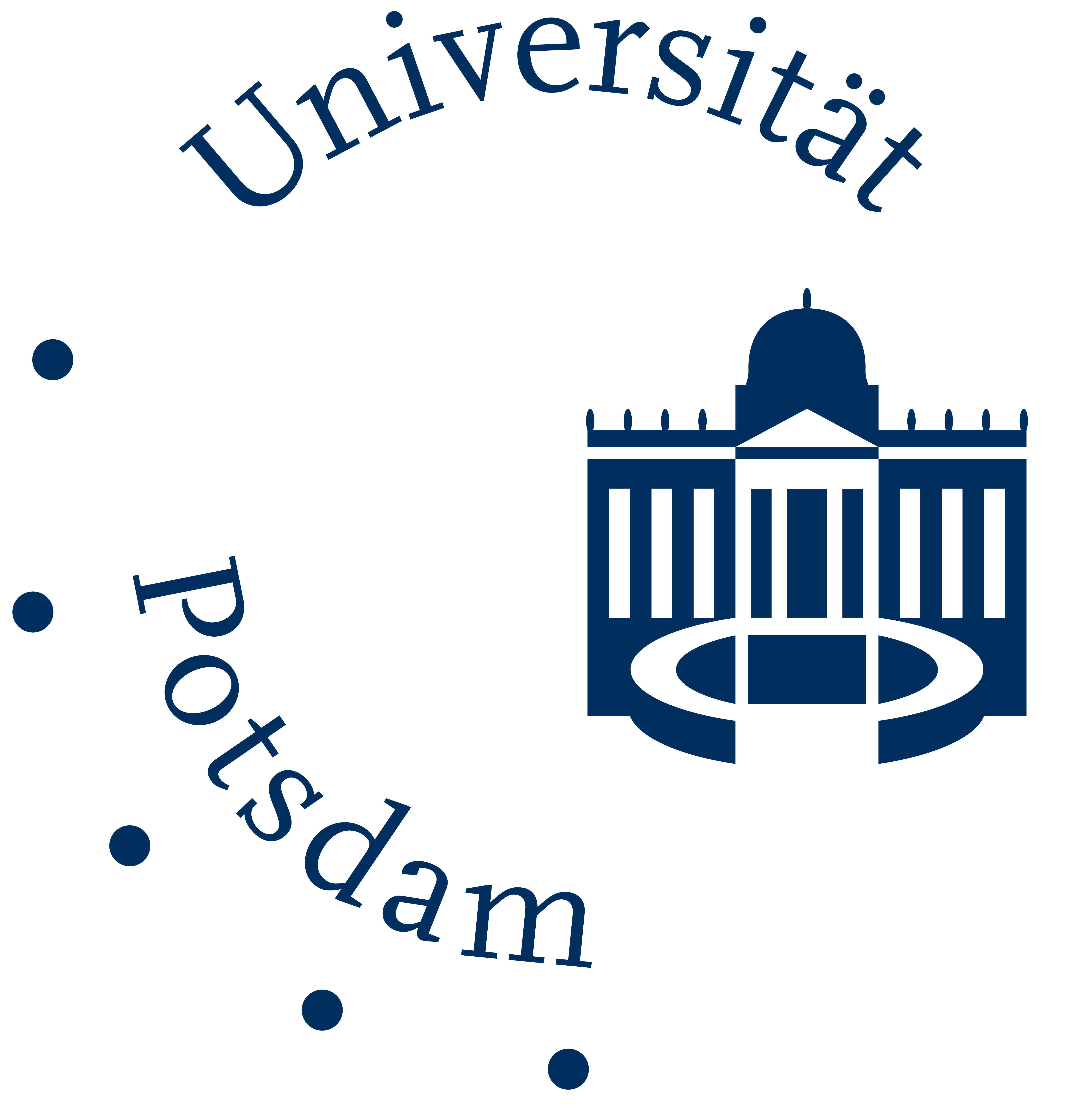} \hfill \includegraphics[height = 3 cm]{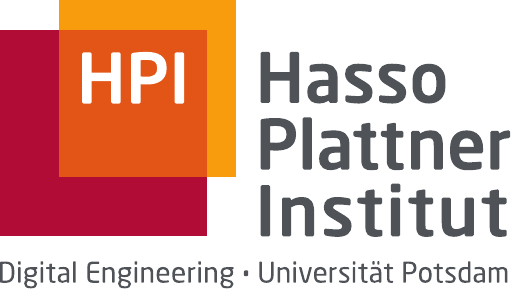}\\
        \vfil
        {\LARGE
            \rule[1 ex]{\textwidth}{1.5 pt}
            \onehalfspacing\printTitleBold\\[1 ex]
            {\vspace*{-1 ex}\Large \printGermanTitle}\\
            \rule[-1 ex]{\textwidth}{1.5 pt}
        }
        \vfil
        {\Large\textbf{\printAuthor}}
        \vfil
        {\large Universitäts\colloquialDegreeNameLowercase arbeit\\[0.25 ex]
        zur Erlangung des akademischen Grades}\\[0.25 ex]
        \bigskip
        {\Large \colloquialDegreeName{} of Science}\\[0.5 ex]
        {\large\emph{(\degreeAbbreviation\,Sc.)}}\\
        \bigskip
        {\large im Studiengang\\[0.25 ex]
        \printProgram}
        \vfil
        {\large eingereicht am \printDateReceived{} am\\[0.25 ex]
        Fachgebiet Algorithm Engineering der\\[0.25 ex]
        Digital-Engineering-Fakultät\\[0.25 ex]
        der Universität Potsdam}
    \end{center}
    
    \vfil
    \begin{table}[h]
        \centering
        \large
        \sffamily 
        {\def\arraystretch{1.2}
            \begin{tabular}{>{\bfseries}p{3.8 cm}p{5.3 cm}}
                Gutachter               & \printNameOfSupervisor\\
                Betreuer                & \printAdditionalExaminers
            \end{tabular}
        }
    \end{table}
\end{titlepage}

\restoregeometry

    \pagestyle{plain}

    \addchap{Abstract}

Dominating Set is a well-known combinatorial optimization problem which finds application in computational biology or mobile communication. Because of its $\NP$-hardness, one often turns to heuristics for good solutions. Many such heuristics have been empirically tested and perform rather well. However, it is not well understood why their results are so good or even what guarantees they can offer regarding their runtime or the quality of their results. For this, a strong theoretical foundation has to be established. We contribute to this by rigorously analyzing a Random Local Search (RLS) algorithm that aims to find a minimum dominating set on a graph. We consider its performance on cycle graphs with $n$ vertices. We prove an upper bound for the expected runtime until an optimum is found of $\bigO{n^4\log^2(n)}$. In doing so, we introduce several models to represent dominating sets on cycles 
that help us understand how RLS explores the search space to find an optimum.
For our proof we use techniques which are already quite popular for the analysis of randomized algorithms. We further apply a special method to analyze a reversible Markov Chain, which arises as a result of our modeling. This method has not yet found wide application in this kind of runtime analysis.

    \selectlanguage{ngerman}
    \addchap{Zusammenfassung}

Dominating Set ist ein bekanntes Problem der kombinatorischen Optimierung, welches für die Bioinformatik oder Mobilkommunikation relevant ist. Aufgrund der $\NP$-Schwere des Problems werden oft heuristische Ansätze verfolgt, um gute Lösungen zu finden. Viele Heuristiken wurden bereits empirisch getestet und liefern oft gute Ergebnisse. Allerdings ist wenig darüber bekannt, warum die Ergebnisse dieser Heuristiken so gut sind, oder welche Garantien es an deren Laufzeit oder die Qualität der Lösungen gibt. Dafür braucht es eine stabile theoretische Grundlage. Wir tragen mit dieser Arbeit dazu bei, indem wir eine randomisierte lokale Suche analysieren, die kleinste dominierende Mengen auf Graphen finden soll. Dafür betrachten wir ihre Laufzeit auf Kreisgraphen mit $n$ Knoten. Wir beweisen, dass die erwartete Laufzeit, bis eine kleinste dominierende Menge gefunden wurde, in $\bigO{n^4\log^2(n)}$ liegt. Dafür führen wir mehrere Modelle ein, um dominierende Mengen auf Kreisgraphen darzustellen und betrachten, wie sich diese in Bezug auf Mutationen der lokalen Suche verhalten. Dadurch erlangen wir ein besseres Verständnis dafür, wie die lokale Suche den Lösungsraum erkundet, um zum Optimum zu finden. Für unseren Beweis verwenden wir Techniken, die für die Analyse von randomisierten Algorithmen bereits weit verbreitet sind. Darüber hinaus wenden wir eine spezielle Methode zur Analyse einer reversiblen Markov-Kette an, die sich aus unserer Modellierung ergibt. Diese Methode hat für diese Art von Laufzeitanalyse noch keine breite Anwendung gefunden.
    \selectlanguage{american}


    \setuptoc{toc}{totoc}
    \tableofcontents

    \pagestyle{headings}
    \mainmatter

    \chapter{Introduction}

\emph{Randomized Algorithms} play a large role when it comes to finding solutions to hard problems. It is widely believed that, for the elements of the class $\NP$, no efficient algorithms exist. 
However, finding solutions to these problems is of great practical relevance: determining where to best open a new store (Facility Location), finding a shortest roundtrip between several cities (Traveling Salesperson), or where to build radio towers for optimal coverage (Dominating Set) are only a handful of examples. Since these problems need to be solved, regardless of their complexity, one has to take different approaches. Some of these include limiting the domain to instances where these problems are tractable \cite{blasius_solving_2023,hedetniemi_linear_1986} or accepting solutions which are not exact, i.e. approximate solutions which are some distance away from the optimum \cite{hochba_approximation_1997}. 

To utilize both approaches, we turn to \emph{Heuristics}. They offer guiding information on how a good solution can be found in less time than exact methods \cite{eiselt_heuristic_2000}. To do so, they can combine approaches that work well on certain problem instances with ones that offer robust guarantees on the quality of the resulting solution. A simple greedy heuristic is to always add an element to the solution that maximizes a certain objective function. Examples of such heuristics include the $2$-approximation for Vertex Cover and the $1 + \log(n)$-approximation for Dominating set \cite{ausiello_complexity_1999,chvatal_greedy_1979}. Another heuristic approach is to not always look for the \emph{best} next candidate solution, but just for one that is \emph{better}.

A prominent algorithm that operates according to this paradigm is \emph{Random Local Search} (RLS). Like other randomized search heuristics, it has found application in real problems such as clustering and inventory routing \cite{franti_randomised_2000,benoist_randomized_2011}, where it provides good results in a short amount of time while still being easy to implement. However, most research regarding such randomized heuristics is strictly empirical, leaving, as \citeauthor{droste_analysis_2002} put it, even the best known results subject to controversial discussion \cite{droste_analysis_2002}. They stress the importance of a rigorous theoretical basis to mitigate this. 

As its name implies, the RLS heuristic traverses the solution space of a problem by exploring only those solutions that are in some sense close to its current candidate \cite{aarts_local_2003}.
Contrary to other randomized search algorithms, such as the \ooea, the neighborhood of a solution is not the whole solution space. This makes the analysis of the trajectory of such algorithms even more dependent on their current solutions and the corresponding neighborhood, as now some transitions are not only improbable but outright impossible. We will use the well-known concept of Markov Chains to model these local dependencies.

In \Cref{chap:2} we provide the necessary concepts in graph and probability theory that we use throughout this thesis. 
We define the concept of networks in \Cref{chap:3} which relate to reversible Markov Chains. We then analyse a random walk on a triangular grid which arises when analysing the trajectory of the RLS in the subsequent chapter. We begin \Cref{chap:4} by discussing important properties of Dominating Sets on cycle graphs. We leverage these to prove a bound for the RLS discovering a dominating set that already has only $1.5$ times the size of the optimum in expected time $\bigO{n \log(n)}$ in \Cref{thm:half}.  
From there, we develop several models to better understand the trajectory of the RLS when trying to optimize further. We capture when it is possible to swap some vertices and how single swaps help the RLS move towards dominating sets than can further be reduced. 
Using these models, we prove a bound of $\bigO{n^4\log^2(n)}$ for the expected time RLS improves this solution to an optimal one in \Cref{thm:optimal}. 

\begin{section}{Dominating Set}
    The problem Dominating Set asks, given a simple undirected graph, for a subset of vertices such that every vertex is either in this subset or has a neighbor in it. The optimization variant asks for such a set with smallest cardinality over all sets that have this property. This problem finds application in computational biology, to identify proteins that play a major role in human diseases in large networks of protein-protein interactions \cite{nacher_minimum_2016}. In a similar fashion, in ad-hoc wireless network routing, where connections can be unreliable, finding a dominating set of the network can be used to determine those nodes that are most likely to successfully deliver a package to its recipient \cite{samuel_dtn_2009}. In document summarization, where a graph can represent sentences and their similarity, a minimum dominating set may be interpreted as a maximally representative and minimally redundant summary of the contents \cite{shen_multi-document_2010}.
    
    The problem is, however, well-known to be $\NP$-complete and as such believed not to be efficiently solvable. It is even $\NP$-complete on graph classes that are easy for other $\NP$-hard problems, such as bipartite, chordal or split graphs, as surveyed by \citeauthor{haynes_fundamentals_2023} \cite{haynes_fundamentals_2023}. 
    It is even hard to approximate within a factor of $(1-\varepsilon)\ln(n)$ for any $\varepsilon > 0$ \cite{chlebik_approximation_2008}. A $1+\log(n)$-approximation is achieved by a simple greedy algorithm that, in every iteration, adds that vertex to the set that covers the most uncovered nodes \cite{chvatal_greedy_1979}.

    Other approximation heuristics have been developed and tested for this problem: \citeauthor{sanchis_experimental_2002}~\cite{sanchis_experimental_2002} analyses five heuristics on known graph examples as well as random graphs. All of them are essentially variants of the well-known greedy heuristic. On her test set of random graphs, the initial greedy heuristics still offered the best results, most of the time actually finding an optimal dominating set. On another class of random graphs with a predetermined domination number, another procedure incorporating some problem knowledge performed slightly better.

    These heuristics aimed only to find good, not optimal, dominating sets. Another approach, more in the realm of randomized heuristics, is presented by \citeauthor{hedar_hybrid_2010}. They combine a classic genetic framework, i.e., randomly altering solutions and combining them to obtain new ones, with some knowledge specific to the problem of Dominating Set. Their heuristics aim to continually improve their solution to eventually arrive at an optimum. In each iteration, a Local Search procedure removes a random vertex from the dominating set and adds one not in the set if the result is no longer a dominating set. A filtering procedure successively removes each vertex from the dominating set and computes the resulting fitness value, removing possibly redundant vertices. A selection mechanism they call \say{Elite Inspiration} finds the intersection of the best dominating sets and extends this solution until it becomes dominating again, to identify the most relevant vertices. They test this algorithm on randomly generated graphs and conclude that it significantly outperforms a genetic algorithm that only randomly mutates and combines solutions each iteration \cite{hedar_hybrid_2010}. 

    Other heuristics that have been applied to this problem include Ant-colony Optimization \cite{ho_enhanced_2006}, Evolutionary Algorithms with different mutation operators \cite{chaurasia_hybrid_2015} or other Local Search procedures \cite{chalupa_order-based_2018, casado_iterated_2023}. All of these papers provide an empirical analysis of their algorithms.
    To the best of our knowledge, no rigorous theoretical analysis of a random search heuristic has been conducted for the problem of Dominating Set.  To introduce this topic and some related techniques, we present other instances where this has been done.
\end{section}   

\begin{section}{Analysis of Randomized Search Heuristics}
    Randomized search heuristics, like Random Local Search, are often analysed on pseudo-boolean functions that have certain characteristics \cite{droste_analysis_2002}. This helps researchers understand what properties help these heuristics in discovering good solutions and what might hinder them. However, we concern ourselves with their application to a problem rooted in graph theory. Other prominent problems with such roots, for which there are theoretical time complexity results, include Maximum Matchings \cite{giel_evolutionary_2003}, Minimum Spanning Trees \cite{neumann_randomized_2007}, Graph Coloring \cite{sudholt_analysis_2010}, and Vertex Cover \cite{baguley_analysis_2022}. 
    
P    Their analysis often involves some \say{simple} graph classes. \citeauthor{sudholt_analysis_2010} prove results of a local search operator for coloring on bipartite, planar, and sparse random graphs with a low chromatic number. \citeauthor{baguley_analysis_2022} analyse their operator for Vertex Cover on paths and bipartite graphs. These analyses are, however, not motivated by a need to discover ever faster algorithms, but again by an interest to better understand the behaviour of randomized search heuristics.

    As such, several techniques have been developed and utilized for these analyses. For evolutionary algorithms such as the \ooea, stochastic drift can often aid in their analysis. Drift and respective results formalize the idea that if you know how fast a process moves towards a target, you can infer how long it takes to reach the target. \citeauthor{baguley_analysis_2022} use it to show how long a heuristic takes to sample a proper vertex cover of a graph \cite{baguley_analysis_2022}. \citeauthor{doerr_drift_2011} \cite{doerr_drift_2011} also shows how drift arguments can yield a simpler proof with additional bounds for the Minimum Spanning Tree problem tackled by \citeauthor{neumann_randomized_2007} \cite{neumann_randomized_2007}.

    However, this simpler proof still utilizes many results regarding the general problem structure. Analyses of such heuristics often incorporate specific domain knowledge.
    In \cite{giel_evolutionary_2003}, a crucial part of the runtime proof also relies on known results about matchings in graphs and how they can be improved. \citeauthor{giel_evolutionary_2003} then use these to argue how the \ooea can leverage these with specific random mutations. \citeauthor{neumann_randomized_2007} use properties of non-minimum spanning trees to show how much their search heuristic can improve the solution \cite{neumann_randomized_2007}. This suggests that the underlying problem structure can be used to aid the analysis of these heuristics. We will employ many of these techniques in this thesis to analyse the performance of our Random Local Search algorithm. We now introduce the necessary preliminaries.

\end{section}



    \chapter{Preliminaries}\label{chap:2}

In this chapter, we introduce the definitions and concepts in graph theory and probability theory that are necessary for this thesis. We also define and motivate our Random Local Search heuristic. We first state some basic definitions. Let $\N \coloneq \{0,1,2,\dots\}$ be the set of natural numbers including $0$. We write $\N^+$ for the set $\N \setminus \{0\}$. By $[n]$ we denote the set $\{1,\dots,n\}$. For pairs of natural numbers $(a,b) \in \N^2$ we define $|(a,b)| = a + b$. For a finite set $S$ we denote its cardinality by $|S|$. Finally, by $\log(\cdot)$ we denote the natural logarithm and by $\log^2(x)$ we mean $\log(x)\log(x)$.

\begin{section}{Graph Theory and Dominating Set}
    A graph $G = (V,E)$ is a pair of a vertex set $V$ and an edge set $E \subseteq V \times V$. For vertices $u,v \in V$ with $\{u,v\} \in E$ we say that they are adjacent or neighbors and denote this by $x \sim y$. We allow vertices $v \in V$ to have self loops and denote them by $v \sim v$.
    
    For a vertex $v \in V$, by $\oNeigh{v}$ we denote its open neighborhood $\oNeigh{v} \coloneq \{u \with \{u,v\} \in E\}$. The closed neighborhood of a vertex $v \in V$ is the set $\clNeigh{v} \coloneq N(v) \cup \{v\}$. We define the closed neighborhood of a set $T \subseteq V$ as $\clNeigh{T} \coloneq \bigcup_{v \in T} \clNeigh{v}$.
    
    We consider two simple graph classes, paths and cycles.
    \begin{definition}[Path]
        A path of length $k \in \N^+$ is a graph with vertex set $V \coloneq \{v_1,\dots,v_k\}$ and edge set $E \coloneq \bigcup_{1\leq i < k} \{v_i, v_{i+1}\}$. We denote it by $P_k$.
    \end{definition}

    \begin{definition}[Cycle]
        A cycle of length $k \in \N^+$ is a path $P_k$ with an additional edge $\{v_k, v_1\}$. We denote it by $C_k$.
    \end{definition}

    A dominating set of a graph $G = (V,E)$ is a subset of its vertices $D \subseteq V$ such that for every vertex $v \in V$, either $v \in D$ or there exists a vertex $u \in D$ with $\{u,v\} \in E$. An equivalent formulation is that $\clNeigh{D} = V$. We call a dominating set a minimum dominating set when there is no other dominating set with strictly smaller cardinality. We call a dominating set $D$ a minimal dominating set when there is no proper subset $S \subset D$, such that $S$ is also a dominating set.
    \newpage
    We note the following result for the size of a minimum dominating set on cycles, as given in \cite{haynes_fundamentals_2023} \emph{(Observation 2.21)}: 
    \begin{proposition}
        A minimum dominating set of $C_n$ for $n \in \N^+$ has cardinality $\ceil{\frac{n}{3}}$.
    \end{proposition}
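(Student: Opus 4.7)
The plan is to establish the claim by matching upper and lower bounds. For the lower bound, I would exploit the fact that every vertex of $C_n$ has exactly two neighbors, so its closed neighborhood has size three. Consequently, for any dominating set $D \subseteq V(C_n)$ we have $V(C_n) = \clNeigh{D} = \bigcup_{v \in D} \clNeigh{v}$, and by the union bound on cardinalities $n \leq \sum_{v \in D} |\clNeigh{v}| = 3|D|$. Since $|D|$ is an integer, this yields $|D| \geq \ceil{n/3}$.

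For the upper bound, I would exhibit an explicit dominating set of size $\ceil{n/3}$. Label the vertices of $C_n$ as $v_0, v_1, \dots, v_{n-1}$ with indices mod $n$, and consider $D \coloneq \{v_{3i} \with 0 \leq i < \floor{n/3}\}$, possibly augmented by one further vertex when $n$ is not divisible by $3$. The natural construction depends on the residue of $n$ modulo $3$: if $n = 3k$, the set $\{v_0, v_3, \dots, v_{3k-3}\}$ has size $k = \ceil{n/3}$ and every $v_j$ is either in $D$ or adjacent to some $v_{3i}$; if $n = 3k+1$ or $n = 3k+2$, I would add one more vertex (for instance $v_{3k}$) to cover the remaining one or two uncovered vertices at the end of the cycle, giving a set of size $k+1 = \ceil{n/3}$.

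The main verification step is showing that the augmented construction still dominates, particularly at the wrap-around point where the residue $n \bmod 3$ introduces the irregularity. This is a short case distinction: in each case one checks that the uncovered vertices $v_{3k}, v_{3k+1}$ (or $v_{3k}$) lie in the closed neighborhood of the single added vertex, which is immediate from the structure of the cycle. Combining the two bounds gives $|D| = \ceil{n/3}$ as the optimum, completing the proof.
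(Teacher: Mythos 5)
Your argument is correct and complete. Note that the paper does not actually prove this proposition: it cites it as \emph{Observation 2.21} from Haynes et al.\ and offers only the informal remark that one can select every third vertex. Your proposal supplies both halves of the standard proof — the explicit every-third-vertex construction for the upper bound (which is exactly the paper's intuition, made precise with the case distinction on $n \bmod 3$), and a counting argument for the lower bound that the paper omits entirely. The lower bound via $n = |\clNeigh{D}| \leq \sum_{v \in D} |\clNeigh{v}| \leq 3|D|$ is the right idea; the only nitpick is that you should write the last step as an inequality rather than an equality, since for the degenerate cases $n \in \{1,2\}$ (which the paper's definition of $C_n$ for $n \in \N^+$ admits) a closed neighborhood has fewer than three elements. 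The bound $|D| \geq \ceil{\frac{n}{3}}$ survives unchanged. Your case checks at the wrap-around point are also sound: in each residue class the one or two vertices left uncovered by $\{v_0, v_3, \dots, v_{3\floor{n/3}-3}\}$ lie in the closed neighborhood of the single added vertex.
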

    Intuitively, one can select every third vertex to cover all vertices. Then every vertex covers its own neighborhood uniquely if $n$ is a multiple of $3$, and this still holds for most vertices if $n$ has a remainder $1$ or $2$ modulo $3$.
\end{section}

\begin{section}{Random Local Search Algorithm}
    Random Local Search differs from other randomized search heuristics in that it only explores solutions that are somewhat \say{close} to it. To define what close means, we first have to define how we will represent our solutions.

    For a graph $G = (V,E)$ with $|V| = n$, we represent a solution or state of the search algorithm as a bitstring $y \in \{0,1\}^n$. Given some numbering of the vertices, the $i$-th bit defines whether the $i$-th vertex is part of the solution. For a solution $y \in \{0,1\}^n$, we define $S(y)$ to be the set induced by this solution, namely $S(y) \coloneq \{v_i \in V \with y_i = 1\}$. By $|y|_1$ we denote the number of bits set to $1$ in state $y$ and by $|y|_0$ the number of zero bits, respectively. 

    A typical measure of distance for two bitstrings is the Hamming distance, i.e., the number of bits where they differ. For RLS, one might consider exploring only those states with Hamming distance $1$ to the current state. In every step, the algorithm would then select a random bit and flip it.

    To decide whether to accept a new solution or to continue exploring from the old one, the algorithm has to judge how good a solution is. For this, we have to provide a so-called fitness function. This term stems from the roots of randomized search heuristics in natural evolution. We choose a simple function that will prioritize finding a feasible solution, i.e., an actual dominating set, and after that help guide the search towards a minimum dominating set. We define \[
    \fitness\colon \{0,1\}^n \rightarrow \N, x \mapsto |\{v \in V \with \clNeigh{v} \cap S(x) = \emptyset\}| \cdot (n+1) + |S(x)|
    \] as our fitness function. The first term weighs the number of uncovered vertices, those that are neither in the set nor have a neighbor in the set, with a factor of $n+1$. That way, every non-dominating set will have fitness at least $n+1$, which is greater than the maximum cardinality of a dominating set.
    
    These two considerations would yield an algorithm that could successfully solve the Dominating Set problem on some graph classes. However, as \citeauthor{neumann_randomized_2007}~\cite{neumann_randomized_2007} put it, only considering the neighborhood with Hamming distance $1$ often makes heuristics susceptible to many local optima which are not globally optimal. As is the case with our current algorithm. Consider the path $P_3$ with both endpoints selected. Then this is a valid dominating set, which is not the optimal solution consisting of only the middle vertex. However, no single flip will be accepted, since removing one of the selected vertices leaves this vertex uncovered, and adding the middle vertex increases the cardinality of the set. We therefore include some states with Hamming distance $2$. We allow the algorithm to choose a vertex in the current set and then select one from its neighborhood, removing one and adding the other. We limit selection to the immediate neighborhood, as this will simplify our analysis. With this, we define our Random Local Search in \Cref{alg:rls}.

\SetKwIF{WithProb}{ElseWithProb}{WithProbElse}{with probability}{do}{else with probability}{else}{endWithProbability}

\begin{algorithm}
    \caption{Random Local Search Algorithm}\label{alg:rls}
    Choose $x \in \{0,1\}^n$ uniformly at random\;
    \While{not happy}
    {
    \uWithProb{$1/2$}
    {
        pick a vertex $v \in V(G)$ u.a.r. \;
        $y \gets$ flip $x_v$\;
    }
    \WithProbElse
    {
        choose $u \in S$ and $v \in \oNeigh{v}$ u.a.r. \;
        \If{$\overline{x_v}$}
        {
            $y \gets$ flip $x_u$ and $x_v$\;
        }
    }
    \If{$\fitness(y) \leq \fitness(y)$} {
        $x \gets y$\;
    }
    }
\end{algorithm}

We refer to the part in the code where a vertex is chosen at random and flipped as the flip operator, or say that the RLS chose to flip a vertex. Similarly, the second part is the swap operator.

We note that \emph{not happy} is a placeholder for an actual stopping criterion. In practice, one could stop the algorithm after a specific time. However, we look at this algorithm from a purely theoretical point of view and are only interested in the expected time to find an optimal solution, so we let the search run indefinitely. 

\end{section}

\begin{section}{Probability Theory}

    We expect a rudimentary understanding of probability theory for most of this thesis. We refer the reader to \cite{mitzenmacher_probability_2017} for basic definitions. 

    We will often consider so-called stochastic processes, where we look at a collection of random variables over time.

    \begin{definition}[Stochastic Process]
        A stochastic process is a collection of random variables over a common sample space $\Omega$, indexed by times $t \in \N$. We usually denote it by $\stocProc{X}$. If $\Omega = \R$, we call this process integrable if for $t \in \N, \Ex{X_t}$ is finite.
    \end{definition}

    We use the following drift theorem, as given by \citeauthor{kotzing_theory_2024}~\cite{kotzing_theory_2024}, where it can be found as \emph{Theorem 2.5}.

    \begin{theorem}[Multiplicative Drift] \label{thm:mult_drift}
  Let $(X_t)_{t \in \N}$ be an integrable process over $\{0,1\} \cup S$ where $S \subset \R_{>1}$, and let  $T = \min\{t \in \N \mid X_t \leq 0\}$. Assume that there is a $\delta \in \R_+$ such that, for all $s \in S \cup \{1\}$ and all $t < T$ it holds that
  \[
  \Ex{X_{t} - X_{t+1} ~|~ X_0,\dots,X_t} \geq \delta X_t.
  \]
  Then
  \[
  \Ex{T} \leq \frac{1 + \log{\Ex{X_0}}}{\delta}.
  \]
\end{theorem}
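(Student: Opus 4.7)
The plan is to convert the multiplicative drift into exponentially decaying tail bounds on $T$ and then sum them. First I would iterate the drift inequality: rearranging the hypothesis as $\Ex{X_{t+1} \with X_0, \ldots, X_t} \leq (1-\delta) X_t$ for $t < T$, and passing to the stopped process $(X_{\min(t,T)})_{t \in \N}$ so that the bound extends trivially past $T$, the tower property together with induction yields
\[
\Ex{X_t} \leq (1 - \delta)^t \Ex{X_0} \leq e^{-\delta t} \Ex{X_0}.
\]

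The crucial structural observation is that the state space $\{0,1\} \cup S$ with $S \subset \R_{>1}$ forces $X_t > 0$ to imply $X_t \geq 1$. Hence $\{T > t\} = \{X_t \geq 1\}$, so Markov's inequality gives $\Pr{T > t} \leq \Ex{X_t} \leq e^{-\delta t} \Ex{X_0}$.

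The final step then computes $\Ex{T} = \sum_{t \geq 0} \Pr{T > t}$ by splitting at the threshold $t_0 \approx \log(\Ex{X_0})/\delta$ where the exponential bound crosses $1$: for $t < t_0$ I would use the trivial bound $\Pr{T > t} \leq 1$, contributing roughly $\log(\Ex{X_0})/\delta$, and for $t \geq t_0$ the Markov bound, whose geometric tail contributes roughly $1/\delta$; summed they yield the stated $(1 + \log \Ex{X_0})/\delta$.

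I expect the main obstacle to be extracting the exact constant $1/\delta$ rather than the slightly larger $1/(1 - e^{-\delta})$ that a naive geometric sum produces. A cleaner but more indirect route is to transform the multiplicative drift on $X_t$ into additive drift on $\log X_t$ via Jensen's inequality---$\Ex{\log X_{t+1} \with \mathcal{F}_t} \leq \log((1-\delta) X_t) \leq \log X_t - \delta$---apply an additive drift theorem to bound the hitting time of $\{X_t \leq 1\}$ by $\log(\Ex{X_0})/\delta$, and then separately bound the expected one-step transition from state $1$ to state $0$ by $1/\delta$, using that the drift condition at $X_t = 1$ forces $\Pr{X_{t+1} = 0 \with X_t = 1} \geq \delta$.
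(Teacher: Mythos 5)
The paper does not prove this statement: it is imported verbatim from K\"otzing's lecture notes (cited there as Theorem~2.5), so there is no in-paper proof to compare against. Judged on its own, your proposal follows the two standard routes for multiplicative drift and the core observations are right --- in particular, using the gap in the state space ($S\subset\R_{>1}$) to identify $\{T>t\}$ with $\{X_t\ge 1\}$ and then applying Markov's inequality to $\Ex{X_t}\le(1-\delta)^t\,\Ex{X_0}$ is exactly the intended use of the hypothesis. However, neither route as you have written it actually delivers the stated constant $(1+\log\Ex{X_0})/\delta$.

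The tail-sum route cannot be repaired to give that constant: summing $\min(1,e^{-\delta t}\Ex{X_0})$ over integer $t$ yields $\log(\Ex{X_0})/\delta+O(1)+1/(1-e^{-\delta})$, which exceeds the claim by an additive term of order $1$, as you yourself anticipate. The logarithmic route is the one that gives the exact constant, but your version has two genuine problems: (i) $\log X_{t+1}=-\infty$ on the event $X_{t+1}=0$, which can occur before the process ever visits state $1$, so additive drift cannot be applied to $\log X_t$ directly; and (ii) the two-phase decomposition (first hit $\{X_t\le 1\}$, then use $\Pr{X_{t+1}=0}[X_t=1]\ge\delta$) does not compose, because from state $1$ the process may jump back into $S$, forcing an unbounded number of re-entries into phase one. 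The standard fix resolves both at once: define the single potential $g(0)=0$ and $g(x)=1+\log x$ for $x\ge1$, verify the pointwise inequality $g(x)-g(y)\ge(x-y)/x$ for all $x\ge1$ and $y\in\{0\}\cup[1,\infty)$ (using $\log(x/y)\ge 1-y/x$ for $y\ge1$, and $g(x)-g(0)=1+\log x\ge 1$ for $y=0$), conclude $\Ex{g(X_t)-g(X_{t+1})\mid X_0,\dots,X_t}\ge\frac{1}{X_t}\Ex{X_t-X_{t+1}\mid X_0,\dots,X_t}\ge\delta$ in a single step, and finish with the additive drift theorem together with $\Ex{g(X_0)}\le 1+\log\Ex{X_0}$. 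Your observation that the drift condition at $X_t=1$ forces $\Pr{X_{t+1}=0}[X_t=1]\ge\delta$ is correct, but it is subsumed by this potential rather than needed as a separate phase.
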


In the following sections, we also consider special stochastic processes, namely Markov Chains. Those are processes obeying a certain property (adapted from \cite{mitzenmacher_probability_2017}, \emph{Definition 7.1}):

\begin{definition}[Markov Chain]
    A stochastic process $\stocProc{X}$ is a Markov Chain if 
    \[
    \Pr{X_{t+1} = x_{t+1}}[X_0=x_0,\dots,X_t=x_t] = \Pr{X_{t+1}=x_{t+1}}[X_t=x_t].
    \]
    For $u,v \in \Lambda$, where $\Lambda$ is the sample space of the Markov Chain, we refer to the transition probabilities as \[
    P(u,v) = \Pr{X_{t+1} = v}[X_t = u].
    \]
    We call the sample space of the Markov Chain its state space.
\end{definition}

We later make use of the equation of Wald, as stated in \cite{mitzenmacher_probability_2017} \emph{(Theorem 13.3)}.

\begin{theorem}[Wald's Equation]\label{thm:wald}
    Let $X_1,X_2,\dots$ be nonnegative, independent, and identically distributed random variables with distribution $X$. Let $T$ be a stopping time for this sequence. If $T$ and $X$ have bounded expectation, then \[
    \Ex{\sum_{i=1}^T X_i} = \Ex{T}\Ex{X}.
    \]
    Here, $T$ is a stopping time if the event $T = n$ is independent of $X_{n+1},X_{n+2},\dots$.
\end{theorem}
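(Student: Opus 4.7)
The plan is to rewrite the random sum $\sum_{i=1}^T X_i$ as an infinite sum of products of the $X_i$ with indicator variables, and then exploit the stopping-time property together with the independence of the $X_i$ to factor each expectation. This is the standard Doob-style argument; nonnegativity of the $X_i$ makes everything cleanly justifiable by Tonelli's theorem.

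First, I would introduce $\ind{T \geq i}$ and write
\[
\sum_{i=1}^T X_i \;=\; \sum_{i=1}^{\infty} X_i \cdot \ind{T \geq i}.
\]
Since all summands are nonnegative, Tonelli's theorem lets me interchange expectation and summation to obtain
\[
\Ex{\sum_{i=1}^T X_i} \;=\; \sum_{i=1}^{\infty} \Ex{X_i \cdot \ind{T \geq i}}.
\]
Next, I would verify that $X_i$ is independent of $\ind{T \geq i}$. The event $\{T \geq i\}$ is the complement of $\{T \leq i-1\} = \bigcup_{n < i} \{T = n\}$, a disjoint union of events each independent of $X_i, X_{i+1}, \dots$ by the stopping-time hypothesis. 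Independence is preserved under countable disjoint unions, so $\ind{T \geq i}$ is independent of $X_i$, giving
\[
\Ex{X_i \cdot \ind{T \geq i}} \;=\; \Ex{X_i} \cdot \Pr{T \geq i} \;=\; \Ex{X} \cdot \Pr{T \geq i}
\]
since the $X_i$ are identically distributed. Summing and invoking the standard identity $\sum_{i \geq 1} \Pr{T \geq i} = \Ex{T}$ for nonnegative integer-valued random variables finishes the proof; the finite expectation hypothesis on $T$ ensures the sum is finite, and the finite expectation on $X$ ensures no issues with $\Ex{X}\cdot \Ex{T}$.

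The only delicate step, and the likely obstacle, is the independence claim in the middle: the definition in the theorem only asserts that each single event $\{T = n\}$ is independent of $X_{n+1}, X_{n+2}, \dots$, not the stronger statement that $T$'s entire past is jointly independent of the future sequence. I would have to be careful to derive the independence of $\ind{T \geq i}$ and $X_i$ from this weaker hypothesis, by explicitly decomposing $\{T \geq i\}$ into the disjoint union above and applying the hypothesis term-by-term before recombining. Once that is in place, the remainder of the argument is purely mechanical manipulation of nonnegative series.
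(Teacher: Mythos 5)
The paper does not prove this statement: it is quoted verbatim from \citeauthor{mitzenmacher_probability_2017} (\emph{Theorem 13.3}) and used as a black box, so there is no in-paper argument to compare against. Your proof is the standard and correct one. Writing $\sum_{i=1}^T X_i = \sum_{i=1}^{\infty} X_i\,\ind{T\geq i}$, invoking Tonelli for the nonnegative summands, and summing $\Ex{X}\cdot\Pr{T\geq i}$ via $\sum_{i\geq 1}\Pr{T\geq i}=\Ex{T}$ is exactly how this is usually done. You also correctly identify and resolve the one delicate point: the hypothesis only makes each single event $\{T=n\}$ independent of $X_{n+1},X_{n+2},\dots$, so you must pass to $\{T\geq i\}$ by writing $\{T\leq i-1\}$ as the disjoint union $\bigcup_{n<i}\{T=n\}$, noting that each piece is independent of $X_i$ (since $i\geq n+1$), that independence from a fixed random variable survives disjoint countable unions (the probabilities simply add on both sides), and that it survives complementation. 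With that, $\Ex{X_i\,\ind{T\geq i}}=\Ex{X_i}\Pr{T\geq i}$ follows by the usual monotone-class extension from indicators, and the finiteness hypotheses on $\Ex{T}$ and $\Ex{X}$ make the final product well defined. No gaps.
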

    
\end{section}

    \chapter{Markov Chains}\label{chap:3}
    In this chapter, we look at a model for Markov Chains with a special property. We restate important definitions and known results and further give some intuition on how this approach works. We especially focus on how it can help analyze some random walks. We then provide an example that will later become relevant during the runtime analysis for the RLS. We prove new results for this specific instance.

\begin{section}{Markov Chains as Networks}\label{sec:markov_networks}

 We adapt all of our definitions from \citeauthor{levin_markov_2017}~\cite{levin_markov_2017}. We will indicate where the respective definitions, theorems and propositions can be found in their book. The model that we will introduce here are so called networks. We first formally define them:

\begin{definition}[Network]\label{def:network}
    \emph{(Introduction to Section 9.1)}~
    A network is a finite undirected connected graph $G$ with vertex set $V$ and edge set $E$, and a map from edges to positive real numbers $c\colon E\rightarrow \R^+$, where $c(e)$ is called the conductance of an edge $e$. For an edge $\{x,y\} \in E(G)$ we often write $c(x,y)$ for $c(\{x,y\});$ clearly $c(x,y) = c(y,x)$. The reciprocal $r(e) = 1 / c(e)$ is called the resistance of the edge $e$. A network will be denoted by the pair $(G, c)$. We often call the vertices of $G$ \emph{nodes}. We call a network \emph{even} when the conductances of all edges, except for self loops, are equal. We call this conductance the uniform conductance of the network. Equivalently, we define the uniform resistance of the network.
\end{definition}

The naming of the associated values to the edges is already quite suggestive. And indeed, this model actually considers the underlying graph as an electrical network, where every edge is a resistor with a fixed resistance. This allows for the use of many results from physics regarding the behaviour of currents in such a network. This is useful to us, because there are certain probabilistic interpretations of these results, many of which relate to random walks on this network:

\begin{definition}[Weighted random walk on a network]\label{def:random_walk_network}
 \emph{(Introduction to Section 9.1)}~
    Consider the Markov Chain on the vertices of $G$ with transition matrix \[
    P(x,y) = \frac{c(x,y)}{c(x)},
    \]
    where $c(x) = \sum_{y: y \sim x} c(x,y)$. This process is called the weighted random walk on $G$ with edge conductances $c$.
\end{definition}

In our study of networks, we will explicitly allow them to have self loops. When a vertex $v$ has a loop to itself, we write $v \sim v$. In the random walk, this corresponds to a self-loop, where the random walk may rest at a vertex. We can use these networks to model random walks with a nice property that is referred to as reversibility.

\begin{definition}[Reversibility]\label{def:reversibility}
    We call a Markov Chain $\stocProc{X}$ on a state space $\Lambda$ reversible with respect to a probability distribution $\pi$ on $\Lambda$, if for every $x,y\in \Lambda$ \[
        \pi(x)\cdot\Pr{X_{t+1}=y}[X_t = x] = \Pr{X_{t+1}=x}[X_t = y]\cdot \pi(y).
    \]
\end{definition}

An important result also highlighted by \citeauthor{levin_markov_2017} \emph{(End of Section 9.1)}~ is that every reversible Markov Chain can be modeled by a random walk on a network. The intuition behind that is the following: Let $\stocProc{X}$ be a reversible Markov Chain with respect to $\pi$. Let $\Lambda$ be its state space and $P'(\cdot)$ its transition probabilities. Further assume that for all states $x,y \in \Lambda$, $\Pr{\exists t \in \N\colon X_t = y}[X_0 = x] > 0$, so that every state can transition to any other state in finitely many steps. This ensures the following graph will be connected. We can define a network $G$ with the underlying state space as its vertex set and an edge whenever the transition probability from one to the other is positive. For nodes $x,y \in \Lambda$ such that $\{x,y\} \in E(G)$ as $c(x,y)= \pi(x)P'(x,y)$ and similarly for self loops as $c(x,x) = \pi(x)P'(x,x)$. By definition, we then have $c(x) = \sum_{y:x \sim y} c(x,y) = \pi(x)$, since the transition probabilities in the product of the conductances sum to $1$. Then the transition probability for nodes $x,y$ is exactly \[P(x,y) = \frac{c(x,y)}{c(x)} = \frac{\pi(x)P'(x,y)}{\pi(x)} = P'(x,y)\] which is exactly the transition probability of the original Markov Chain. Since we already mentioned currents, we have to state two concepts relating to those, which will be relevant for other theorems.

\begin{definition}[Flow]\label{def:flow}
    \emph{(Introduction of Section 9.3)}~
    Let $(G,c)$ be a network.
    A flow $\theta$ is a function on oriented edges which is antisymmetric, meaning that for an edge $\{x,y\} \in E(G)$: $\theta(\vec{xy}) = -\theta(\vec{yx})$, where $\vec{xy}$ is the directed edge from $x$ to $y$ and $\vec{yx}$ is defined analogously. 
    We define the divergence of $\theta$ at $x$ by \[
    \textrm{div}\theta(x) \coloneq \sum_{y:y \sim x} \theta(\vec{xy}).
    \]
    Let $s,t$ be distinguished vertices called \emph{source} and \emph{sink}, respectively. A \emph{flow from $s$ to $t$} must satisfy
    \begin{enumerate}
        \item $\textrm{div}\theta(x) = 0$ at all $x \not\in \{s,t\}$, and
        \item $\textrm{div}\theta(s) \geq 0$.
    \end{enumerate}
    Note that 1. is the classic \say{flow in equals flow out} requirement.
    We define the \emph{strength} of a flow $\theta$ from $s$ to $t$ as $\textrm{div}\theta(s)$. A \emph{unit flow} from $s$ to $t$ is a flow from $s$ to $t$ with strength $1$.
\end{definition}

\begin{definition}[Energy]\label{def:energy}
    \emph{(Preliminary of Theorem 9.10)}~
    Let $(G,c)$ be a network and let $s,t \in V(G)$.
    For a flow $\theta$ from $s$ to $t$, we define the energy of the flow as \[
    \mathcal{E}(\theta) \coloneq \sum_e [\theta(e)]^2 r(e).
    \]
\end{definition}

With the concept of energy, we can define a metric between nodes in the network, the so called effective resistance. We only define it as a result of the following minimization, although it has a probabilistic interpretation as well. The effective resistance between to nodes $a$ and $b$ is proportional to the probability of a random walk on the network (see \Cref{def:random_walk_network}) started from $a$ visiting $b$ before returning to $a$.

\begin{theorem}[Thomson's principle]\label{thm:thomson}
    \emph{(Theorem 9.10)}~
    Let $(G,c)$ be a network.
    For source and sink vertices $s,t \in V(G)$ in any finite, connected graph \[
    \mathcal{R}(s \leftrightarrow t) = \inf\{\mathcal{E}(\theta) \with \theta \text{ is a unit flow from} s \text{ to } t\}.
    \]
    There exists a unique minimizing flow from $s$ to $t$ for the $\inf$ above.
    We call the quantity $\mathcal{R}(s \leftrightarrow t)$ the \say{effective resistance} of $s$ and $t$.
\end{theorem}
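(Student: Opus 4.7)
My plan is to exhibit an explicit unit flow that realizes the infimum and then show that any other unit flow has strictly larger energy; this yields existence of the minimizer and its uniqueness simultaneously.

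The first step is to construct what I will call the \emph{current flow} $i^*$. I would look for a potential function $V\colon V(G) \to \R$ that is harmonic with respect to the weighted random walk at every node outside $\{s,t\}$, meaning $\sum_{y \sim x} c(x,y)\bigl(V(x) - V(y)\bigr) = 0$ for all such $x$. Existence reduces to solvability of a finite linear system; connectedness of $G$ together with a discrete maximum principle yields a solution that is unique after fixing $V(t) = 0$, and a final rescaling makes the induced edge quantity $i^*(\vec{xy}) \coloneq c(x,y)\bigl(V(x) - V(y)\bigr)$ a unit flow from $s$ to $t$.

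The heart of the argument is a discrete integration-by-parts identity
\[
\sum_e i^*(e)\,\theta(e)\,r(e) \;=\; \sum_{x \in V(G)} V(x)\,\mathrm{div}\,\theta(x),
\]
valid for every antisymmetric $\theta$. It follows by expanding the left-hand side using $i^*(e)\,r(e) = V(x) - V(y)$ on an edge oriented as $\vec{xy}$, splitting the double sum into contributions at $x$ and at $y$, and invoking antisymmetry of $\theta$. I expect this step to be the main obstacle, since careful bookkeeping of the orientation conventions from \Cref{def:flow} is required; everything afterwards is routine algebra.

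With the identity in hand, the proof concludes quickly. For any unit flow $\theta$, the difference $f \coloneq \theta - i^*$ has divergence zero at every node (both $\theta$ and $i^*$ have strength exactly $1$ at $s$ and by conservation $-1$ at $t$), so expanding $\mathcal{E}(\theta) = \mathcal{E}(i^* + f)$ gives
\[
\mathcal{E}(\theta) \;=\; \mathcal{E}(i^*) + 2\sum_e i^*(e)\,f(e)\,r(e) + \mathcal{E}(f) \;=\; \mathcal{E}(i^*) + \mathcal{E}(f),
\]
where the cross term vanishes by the identity applied to $f$. Since $\mathcal{E}(f) \geq 0$ with equality iff $f$ is identically zero on every edge, we obtain $\mathcal{E}(\theta) \geq \mathcal{E}(i^*)$ with equality exactly when $\theta = i^*$. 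Defining $\mathcal{R}(s \leftrightarrow t) \coloneq \mathcal{E}(i^*)$ then completes the proof.
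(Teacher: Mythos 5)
The paper does not prove this statement at all: it is imported verbatim from Levin, Peres, and Wilmer (their Theorem 9.10), so there is no in-paper argument to compare against. Your proof is the standard one for Thomson's principle --- build the unit current flow $i^*$ from a harmonic potential, establish the summation-by-parts identity $\sum_e i^*(e)\theta(e)r(e) = \sum_x V(x)\,\mathrm{div}\,\theta(x)$, and conclude via the orthogonal energy decomposition $\mathcal{E}(\theta) = \mathcal{E}(i^*) + \mathcal{E}(\theta - i^*)$ --- and it is correct; the only detail worth making explicit is that the rescaling step requires $\mathrm{div}\,i^*(s) > 0$, which follows from connectedness and the maximum principle. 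Note also that your final line \emph{defines} $\mathcal{R}(s \leftrightarrow t)$ as $\mathcal{E}(i^*)$, which matches how this paper uses the quantity (it takes the minimization as the definition), whereas in the source text the effective resistance is defined as a potential difference and the identity $\mathcal{E}(i^*) = V(s) - V(t)$ (immediate from your identity with $\theta = i^*$) is the extra step that reconciles the two.
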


The effective resistance also has a useful connection to expected hitting times, given by the following theorem.

\begin{definition}[Commute Time]\label{def:commute_time}
    \emph{(Introduction of Section 10.3)}~
    The commute time between nodes $a,b$ in a network is the expected time to move from $a$ to $b$ and then back to $a$. We denote by $\tau_{a,b}$ the (random) amount of time to transit from $a$ to $b$ and then back to $a$. That is, for a random walk $\stocProc{X}$ on the network,
    \[
    \tau_{a,b} = \minS{t \geq \tau_b : X_t = a},
    \]
    where $X_0 = a$. The commute time is then \[
    t_{a \leftrightarrow b} \coloneq \Ex{\tau_{a,b}}[X_0 = a].
    \]
\end{definition}

\begin{theorem}[Commute time identity]\label{thm:commute_time_identity}
    \emph{(Proposition 10.7)}~
    Let $(G,c)$ be a network, and let $\stocProc{X}$ be the random walk on this network. For any nodes a and b in $V(G)$,
    \[
    t_{a\leftrightarrow b} = c_G\effRes{a}{b},
    \]
    where $c(x)=\sum_{y:y\sim x} c(x,y)$ and $c_G = \sum_{x \in V(G)} c(x)$. 
\end{theorem}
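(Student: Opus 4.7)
The plan is to express expected hitting times in terms of the voltage function of the electrical network and then exploit a linear symmetry between the forward and the return trip.

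First I would verify that the stationary distribution of the weighted random walk on $(G,c)$ is $\pi(x) = c(x)/c_G$. This is immediate from the detailed balance equations $\pi(x) P(x,y) = c(x,y)/c_G = \pi(y) P(y,x)$, which also express that the chain is reversible with respect to $\pi$. A consequence I will need later is the Green's function symmetry $c(x)\, G_b(x,y) = c(y)\, G_b(y,x)$, where $G_b(x,y) = \Ex{\sum_{t=0}^{\tau_b - 1} \ind{X_t = y}}[X_0 = x]$ counts the expected number of visits to $y$, starting from $x$, before the walk reaches $b$. This follows by iterating detailed balance along paths of length $t$ avoiding $b$ and summing over $t$.

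Next I would identify the function $x \mapsto G_b(a,x)/c(x)$ with the voltage function $v$ associated with injecting unit current at $a$ and extracting it at $b$, i.e.\ the unique function with $v(b) = 0$, harmonic off $\{a,b\}$, and such that the induced current $i(\vec{xy}) = c(x,y)(v(x) - v(y))$ has divergence $1$ at $a$. By the definition of effective resistance (via Thomson's principle, \Cref{thm:thomson}) this identification gives $\effRes{a}{b} = v(a) = G_b(a,a)/c(a)$. To prove it, I would use the one-step recurrence $G_b(y,a) = \ind{y = a} + \sum_z P(y,z)\, G_b(z,a)$ for $y \neq b$ together with the Green's function symmetry: current conservation at a generic node $x \notin \{a,b\}$ reduces to this recurrence, the divergence at $a$ works out to exactly $+1$, and the boundary value at $b$ is trivially zero.

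Having the voltage interpretation, I would write $\Ex{\tau_b}[X_0 = a] = \sum_x G_b(a,x) = \sum_x c(x)\, v(x)$. Running the same argument with the roles of $a$ and $b$ swapped, the voltage governing the return trip is the harmonic replacement $w(x) = \effRes{a}{b} - v(x)$, yielding $\Ex{\tau_a}[X_0 = b] = \sum_x c(x)\bigl(\effRes{a}{b} - v(x)\bigr)$. Adding the two expressions cancels the $v$-terms and leaves $t_{a \leftrightarrow b} = \effRes{a}{b} \sum_x c(x) = c_G\, \effRes{a}{b}$, as claimed. The main obstacle I anticipate is the identification step: one has to verify that $G_b(a,\cdot)/c(\cdot)$ satisfies the voltage axioms at the source vertex $a$ itself, where the indicator term $\ind{x = a}$ appearing in the Green's function recurrence is precisely what produces the required unit divergence; everything else then falls out routinely from reversibility and the Green's function symmetry.
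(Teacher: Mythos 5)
Your proposal is correct, but note that the paper does not prove this statement at all: it imports it verbatim from \citeauthor{levin_markov_2017} as Proposition~10.7, so there is no in-paper argument to compare against. What you have written is essentially the standard proof from that source: detailed balance gives $\pi(x)=c(x)/c_G$ and the Green's-function symmetry $c(x)G_b(x,y)=c(y)G_b(y,x)$; the one-step recurrence then identifies $G_b(a,\cdot)/c(\cdot)$ with the unit-current voltage grounded at $b$ (the indicator term producing divergence $1$ at $a$, exactly as you say); summing $G_b(a,x)=c(x)v(x)$ over $x$ and adding the reversed trip with voltage $\effRes{a}{b}-v(x)$ cancels the $v$-terms and yields $c_G\effRes{a}{b}$. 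The only step I would flag is the appeal to Thomson's principle: since the paper \emph{defines} $\effRes{a}{b}$ as the infimum of energies over unit flows, identifying it with $v(a)$ requires the standard (but not entirely free) fact that the harmonic unit-current flow is the unique energy minimizer and that its energy equals the voltage difference $v(a)-v(b)$. You gesture at this correctly, and it is a routine lemma in the electrical-network formalism, so the argument stands.
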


The commute time is also, by definition, an upper bound on the expected time to reach node $b$ starting from $b$.

The following theorem will allow us to reason about networks that have other networks, for which there are known results, as subgraphs. Intuitively, you can model a network, which is a subgraph of another one, with the same edges and \say{infinite} resistance. This will greatly reduce the complexity of our analysis.

\begin{theorem}[Rayleigh's monotonicity law]\label{thm:rayleigh}
    \emph{(Theorem 9.12)}~
    If $r$ and $r'$ are two assignments of resistances to the edges of the same graph $G$ that satisfy $r(e) \leq r'(e)$ for all $e \in E(G)$, then \[
    \mathcal{R}(s \leftrightarrow t;r) \leq \mathcal{R}(s \leftrightarrow t; r').
    \]
\end{theorem}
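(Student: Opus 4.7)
The plan is to reduce this to Thomson's principle (\Cref{thm:thomson}), which characterizes the effective resistance as the infimum of the energy functional over all unit flows from $s$ to $t$, and to exploit the fact that the notion of a unit flow (as given in \Cref{def:flow}) depends only on the underlying graph $G$ and the designated source and sink, not on the resistance assignment. This decouples the admissible set from the quantity being minimized and makes the desired monotonicity fall out of termwise comparison.

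First, I would apply \Cref{thm:thomson} to the network with the larger resistances $r'$ to obtain a unit flow $\theta^{*}$ from $s$ to $t$ that attains the infimum, so that
\[
\mathcal{R}(s \leftrightarrow t; r') \;=\; \mathcal{E}_{r'}(\theta^{*}) \;=\; \sum_{e \in E(G)} [\theta^{*}(e)]^{2}\, r'(e).
\]
Because the divergence conditions defining a unit flow reference only $G$, $s$, and $t$, the same $\theta^{*}$ is also a unit flow from $s$ to $t$ in the network equipped with the resistances $r$, hence a legitimate competitor in the infimum defining $\mathcal{R}(s \leftrightarrow t; r)$.

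Second, I would compare the two energies summand by summand. Since $[\theta^{*}(e)]^{2} \geq 0$ and $r(e) \leq r'(e)$ edgewise by hypothesis, we get
\[
\mathcal{E}_{r}(\theta^{*}) \;=\; \sum_{e \in E(G)} [\theta^{*}(e)]^{2}\, r(e) \;\leq\; \sum_{e \in E(G)} [\theta^{*}(e)]^{2}\, r'(e) \;=\; \mathcal{E}_{r'}(\theta^{*}).
\]
Applying \Cref{thm:thomson} once more, this time to the network with resistances $r$, yields $\mathcal{R}(s \leftrightarrow t; r) \leq \mathcal{E}_{r}(\theta^{*})$ because $\theta^{*}$ is one admissible unit flow and the effective resistance is the infimum over all of them. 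Chaining the three inequalities delivers $\mathcal{R}(s \leftrightarrow t; r) \leq \mathcal{R}(s \leftrightarrow t; r')$.

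The proof has no genuinely hard step; the monotonicity is essentially immediate from the variational characterization, since the integrand $[\theta(e)]^{2}\, r(e)$ is monotone in $r(e)$ while the admissible set is independent of $r$. The only point that requires a moment of care is the \emph{direction} of the argument: one must pick the minimizer $\theta^{*}$ for the \emph{larger} resistance $r'$ and use it as a competitor against $r$, rather than the other way around, because only this order produces an upper bound on the quantity we are trying to bound.
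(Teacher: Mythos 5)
Your proof is correct. Note that the paper does not prove this statement at all --- it imports it verbatim from Levin, Peres, and Wilmer (their Theorem 9.12) --- but your argument via Thomson's principle (take the energy-minimizing unit flow for the larger resistances $r'$, observe that the set of unit flows from $s$ to $t$ does not depend on the resistance assignment, and compare energies termwise) is exactly the standard proof given in that source, with the direction of the competitor argument handled correctly.
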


In our definition of networks, we introduced the new concept of an even network, where all resistances except those of self-loops are the same. We prove that for bounds on the effective resistance, we can always consider a unit network, where all edges have resistance $1$ and factor in the resistance into the bounds on this alternate network. This will simplify some argumentation and make the calculations less cluttered. 

\begin{lemma}\label{lem:even_unit}
    Let $(G, c)$ be an even network with uniform resistance $R$. Let $c'$ be another even assignment of conductances to the same edges with uniform conductance/resistance $1$. Then for any $s,t$
    \[
    \mathcal{R}(s \leftrightarrow t; c) = R \cdot \mathcal{R}(s \leftrightarrow t; c').
    \]
\end{lemma}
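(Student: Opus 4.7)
The plan is to apply Thomson's principle (\Cref{thm:thomson}) and observe that the two energies differ by a global multiplicative factor of $R$. The key preliminary remark is that self-loops contribute nothing to the energy of any flow: antisymmetry forces $\theta(\vec{xx}) = -\theta(\vec{xx})$, hence $\theta(\vec{xx}) = 0$ for every loop edge $\{x,x\}$. So even though the definition of an even network allows loops to carry arbitrary conductances, those edges play no role in $\mathcal{E}(\theta)$.

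First I would fix an arbitrary unit flow $\theta$ from $s$ to $t$ and directly compute, using the loop remark together with the fact that every non-loop edge has resistance $R$ under $c$ and resistance $1$ under $c'$:
\[
\mathcal{E}_c(\theta) \;=\; \sum_{e \in E(G)} \theta(e)^2\, r(e) \;=\; \sum_{e \text{ non-loop}} \theta(e)^2 \cdot R \;=\; R \cdot \sum_{e \text{ non-loop}} \theta(e)^2 \;=\; R \cdot \mathcal{E}_{c'}(\theta).
\]
In other words, $\mathcal{E}_c = R\cdot\mathcal{E}_{c'}$ as functionals on the space of flows.

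Next I would invoke Thomson's principle: the set of unit flows from $s$ to $t$ depends only on the underlying graph $G$ (not on the conductance assignment), so $\mathcal{R}(s \leftrightarrow t; c)$ and $\mathcal{R}(s \leftrightarrow t; c')$ are infima of $\mathcal{E}_c$ and $\mathcal{E}_{c'}$ respectively over \emph{the same} feasible set. Since $R > 0$, scaling the objective by $R$ scales the infimum by $R$, yielding $\mathcal{R}(s \leftrightarrow t; c) = R \cdot \mathcal{R}(s \leftrightarrow t; c')$, as desired.

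There is really no conceptual obstacle here; the only subtle point is being explicit about why self-loops drop out of the energy, so that \say{uniform resistance on non-loop edges} can be treated as genuinely uniform resistance for the purposes of Thomson's principle.
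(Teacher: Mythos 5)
Your proof is correct, and it reaches the same key identity as the paper — namely that $\mathcal{E}(\theta; c) = R \cdot \mathcal{E}(\theta; c')$ for every flow $\theta$ — but packages it more directly. The paper proceeds by contradiction: it assumes the energy-minimizing flows for $c$ and $c'$ differ, uses the scaling identity to derive a contradiction, and concludes that the two networks share the same unique minimizer before reading off the factor of $R$. You instead observe that the two energies are proportional \emph{as functionals} on the common feasible set of unit flows from $s$ to $t$, so the infima in Thomson's principle (\Cref{thm:thomson}) scale by the same positive constant $R$; this avoids any appeal to uniqueness of the minimizer and is the more elementary route. Your explicit remark that antisymmetry forces $\theta$ to vanish on self-loops — so that the non-uniform loop conductances permitted by the definition of an even network cannot spoil the scaling — is a detail the paper's proof silently skips over, and it is worth making.
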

\begin{proof}
    We want to show that the unique minimizer of energy in the network with conductances $c$ is also the minimizer of energy with conductances $c'$. We prove this by contradiction. 
    Let $\hat{\theta}$ be the unique flow that minimizes the energy w.r.t. $c$. Suppose that the minimizing flow of the energy w.r.t. $c'$ was a different flow $\hat\theta'$ and thus let $\mathcal{E}(\hat\theta';c') < \mathcal{E}(\hat\theta;c')$. 
    
    Then \begin{align}
         \mathcal{E}(\hat\theta'; c)&= \sum_e [\hat{\theta}'(e)]^2 r(e) = \sum_e [\hat\theta'(e)]^2 R
        =R\cdot\sum_e [\hat{\theta}'(e)]^2 \label{eq:energy}\\
        \notag&= R\cdot\mathcal{E}(\hat\theta';c')  \\
        \intertext{which then, by our assumption, is}
        \notag&< R\cdot\mathcal{E}(\hat\theta; c')
        = R\cdot\sum_e{[\hat\theta(e)]}^2 = \sum_e [\hat\theta(e)]^2
r(e) = \mathcal{E}(\hat\theta;c).    \end{align}
which is a contradiction, since $\hat\theta$ was a minimizer of energy w.r.t. $c$. So $\mathcal{E}(\hat\theta; c') \leq \mathcal{E}(\hat\theta'; c')$. Since there is no flow with strictly less energy and the energy has a unique minimizing flow, $\hat\theta$ is also the unique minimizer of energy w.r.t. $c'$. From our calculations in \eqref{eq:energy} and the definition of effective resistance according to \Cref{thm:thomson} it then follows immediately that \[
\mathcal{R}(s \leftrightarrow t; c) = R\cdot\mathcal{R}(s \leftrightarrow t; c').
\]
\end{proof}

From now on, whenever we consider even networks, we simply analyze the network with unit conductances and resistances. The following Lemma is often omitted from some sources because of its simplicity. We state it to gain a better intuition and give an example for a simple bound on the effective resistance.

\begin{lemma}\label{lem:resistance_distance}
    In an even network $(G,c)$ with uniform resistance $R$, for  distinct vertices $u,v \in V(G)$ \[
    \effRes{u}{v} \leq R \cdot d(u,v).
    \]
    where $d(\cdot)$ is the hop distance in $G$.
\end{lemma}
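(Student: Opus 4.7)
The plan is to apply Thomson's principle (\Cref{thm:thomson}) to an explicit unit flow routed along a shortest $u$-$v$ path. First, by \Cref{lem:even_unit} it suffices to prove the bound in the unit network where every edge has resistance and conductance $1$; the factor $R$ is then recovered by scaling, since $\effRes{u}{v;c} = R \cdot \effRes{u}{v;c'}$.

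In the unit network I would fix a shortest $u$-$v$ path $u = w_0, w_1, \dots, w_k = v$ with $k = d(u,v)$ and define a flow $\theta$ by setting $\theta(\vec{w_i w_{i+1}}) = 1$ and $\theta(\vec{w_{i+1} w_i}) = -1$ for each $i \in \{0,\dots,k-1\}$, and $\theta \equiv 0$ on every oriented edge not lying on the path. Since a shortest path is simple, it traverses each edge at most once, so the assignment is unambiguous. The flow is antisymmetric by construction; its divergence vanishes at every vertex off the path, and at each interior vertex $w_i$ exactly one unit enters from $w_{i-1}$ and one leaves toward $w_{i+1}$, so $\mathrm{div}\,\theta(w_i) = 0$. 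At the source, $\mathrm{div}\,\theta(u) = 1$, confirming that $\theta$ is a unit flow from $u$ to $v$.

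Computing the energy is then immediate: the flow equals $\pm 1$ on exactly $k$ edges and $0$ elsewhere, so
\[
\mathcal{E}(\theta) = \sum_e [\theta(e)]^2 r(e) = k \cdot 1 = d(u,v).
\]
Thomson's principle gives $\effRes{u}{v;c'} \leq \mathcal{E}(\theta) = d(u,v)$ in the unit network, and rescaling via \Cref{lem:even_unit} yields $\effRes{u}{v;c} \leq R \cdot d(u,v)$.

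I do not expect a genuine obstacle in this argument; the only point that deserves care is checking that $\theta$ is consistently defined on every oriented edge, which is precisely what simplicity of the shortest path guarantees. If instead one routed the flow along an arbitrary walk, edges traversed multiple times would accumulate flow and inflate the energy, weakening the bound.
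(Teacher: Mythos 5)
Your proposal is correct and matches the paper's proof essentially step for step: both reduce to the unit-conductance network via \Cref{lem:even_unit}, route a unit flow of value $\pm 1$ along a shortest $u$-$v$ path, compute the energy as $d(u,v)$, and conclude by Thomson's principle. Your added remark on why simplicity of the path matters is a small but sensible clarification not spelled out in the paper.
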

\begin{proof}
    We will consider the even network with the same underlying graph and unit conductances. \Cref{thm:thomson} tells us that the effective resistance is bounded above by any unit flow from $u$ to $v$. We will therefore construct such a flow $\theta$ to prove our bound. Consider a shortest path $P$ from $u$ to $v$ in the sense that it traverses the fewest edges. Then this path has $d(u,v)$ edges. Orient these edges away from $u$ in the direction of the path and let $\theta(e) = 1$ and $\theta(e^-) = -1$ for each directed edge on the path, where $e^-$ is the flipped orientation of $e$. All other edges are assigned $0$. $\theta$ is a proper flow from $u$ to $v$ and has unit strength. Further, \[
    \mathcal{E}(\theta) = \sum_e [\theta(e)]^2 r(e) = \sum_e[\theta(e)]^2 = \sum_e \mathbb{1}[e \text{ lies on } P] = d(u,v).\
    \]
    Here, $\mathbb{1}$ denotes the indicator function that is $1$ if its argument is an edge of $P$ and $0$ otherwise. Thereby and with \Cref{lem:even_unit}, $\effRes{u}{v} \leq R\cdot d(u,v)$.
\end{proof}

The last nice property of the effective resistance is that it is actually a metric on the nodes of a network and satisfies a triangle inequality. We can thus construct bounds for the resistance from known bounds for some subgraphs or subpaths.

\begin{theorem}[Triangle inequality]\label{thm:resistance_triangle_ineq}
    \emph{(Corollary 10.8)}~
    The effective resistance $\mathcal{R}$ satisfies a triangle inequality: If $a,b,c$ are vertices, then \[
    \effRes{a}{c} \leq \effRes{a}{b} + \effRes{b}{c}.
    \]
\end{theorem}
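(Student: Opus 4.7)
The plan is to translate the claim about effective resistance into a claim about commute times via the Commute Time Identity (\Cref{thm:commute_time_identity}). That identity gives $t_{x \leftrightarrow y} = c_G \cdot \effRes{x}{y}$ with the constant $c_G$ depending only on the network, so the target inequality $\effRes{a}{c} \leq \effRes{a}{b} + \effRes{b}{c}$ becomes, after multiplying by $c_G$, equivalent to $t_{a \leftrightarrow c} \leq t_{a \leftrightarrow b} + t_{b \leftrightarrow c}$. It therefore suffices to establish a triangle inequality for commute times of the weighted random walk on $(G, c)$.

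I would then decompose each commute time into one-way expected hitting times. Writing $h(x,y) = \E{\tau_y}[X_0 = x]$, we have $t_{x \leftrightarrow y} = h(x,y) + h(y,x)$, so it is enough to prove the asymmetric inequality $h(x,z) \leq h(x,y) + h(y,z)$. Applied once to $(x,y,z) = (a,b,c)$ and once to $(x,y,z) = (c,b,a)$ and summed, it yields exactly the commute-time triangle inequality.

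For the hitting-time inequality I would use the strong Markov property of the walk on the finite state space $V(G)$. Starting from $x$, the walk almost surely reaches $y$ in expected time $h(x,y)$; conditional on the time of first arrival at $y$, the future of the walk is distributed as a fresh walk started at $y$ and hits $z$ in additional expected time $h(y,z)$. Since path-wise $\tau_z \leq \tau_y + \tau_z'$, where $\tau_z'$ is the time to hit $z$ measured from the first visit to $y$, taking expectations yields $h(x,z) \leq h(x,y) + h(y,z)$. The main obstacle is the careful application of strong Markov at the (almost surely finite) stopping time $\tau_y$, together with the observation that even if the walk from $x$ visits $z$ before $y$ (so that $\tau_z < \tau_y$), the path-wise bound $\tau_z \leq \tau_y + \tau_z'$ still holds trivially, so the expectation inequality is unaffected. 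With the hitting-time inequality in place, summing the two instances and dividing by $c_G$ as indicated completes the proof.
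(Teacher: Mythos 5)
Your proof is correct, and it follows essentially the same route as the source the paper cites: the paper states this result without proof as Corollary~10.8 of Levin--Peres, where it is derived exactly as you do, from the commute time identity together with the decomposition of commute times into one-way hitting times and the subadditivity $h(x,z) \leq h(x,y) + h(y,z)$ obtained from the strong Markov property at $\tau_y$. No gaps; the pathwise bound $\tau_z \leq \tau_y + \tau_z'$ and the finiteness of hitting times on a finite connected network are handled appropriately.
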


\end{section}

\begin{section}{Resistance in Triangle Grids}

We will now use the concepts introduced in the previous section to analyze a very specific example. We want to bound the effective resistances in an even network on a Triangle Grid graph with uniform resistance $R$. We first define what we mean by that. After that, we derive an upper bound in $\bigO{R \cdot \log^2(n)}$ for the resistance by looking at the resistances in subgraphs. We choose these subgraphs by constructing special paths from one vertex to another, which
allow us to apply a known result about the resistance in Square Grid graphs. 

\begin{definition}[Triangle Grid Graph]\label{def:triangle_grid}
    For $n \in \N$, the Triangle Grid graph $T_n$ is the graph with vertex set $V \coloneq \{(x,y) \in \N^2 \with x + y \leq n\}$ where two vertices are adjacent if they are a distance of $1$ apart, or have a distance of $2$ and lie on the same diagonal. We call the vertex corresponding to $(0,0)$ the origin of this graph. For $v \in V$, we define $\partial(v,k) \coloneq \{u \in V \with d(v,u) \geq k\}$.
    We define $\chi_n \coloneq |V(T_n)|$.
\end{definition}

See \Cref{fig:triangle_grid} for an illustrated Triangle Grid graph. We note the following result regarding the cardinality of the vertex set of this Triangle Grid graph. This will become relevant in \Cref{chap:4}.
\begin{lemma}\label{clm:cardinality_triangle}
   For $n \in \N^+$, $\chi_n = \frac{(n+1)(n+2)}{2}$. 
\end{lemma}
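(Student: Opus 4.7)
The plan is to count the lattice points $(x,y) \in \N^2$ satisfying $x+y \leq n$ by slicing the region according to the value of the sum $s = x+y$. Since $\N$ contains $0$, for each fixed $s \in \{0,1,\dots,n\}$ the equation $x+y=s$ has exactly $s+1$ solutions in $\N^2$, namely $(0,s),(1,s-1),\dots,(s,0)$. These slices partition $V(T_n)$, so the total count is obtained by summing $s+1$ over $s = 0,1,\dots,n$.

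First I would write
\[
\chi_n = |V(T_n)| = \sum_{s=0}^{n} |\{(x,y) \in \N^2 \with x+y = s\}| = \sum_{s=0}^{n}(s+1),
\]
and then reindex via $k = s+1$ to get $\sum_{k=1}^{n+1} k$. Applying the standard Gaussian summation formula $\sum_{k=1}^{m} k = m(m+1)/2$ with $m = n+1$ immediately yields $\chi_n = \frac{(n+1)(n+2)}{2}$, which is the claim.

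There is no real obstacle here: the argument is a one-line combinatorial identity once the region is sliced by diagonals. The only subtlety worth flagging explicitly is the convention that $0 \in \N$ (fixed at the start of \Cref{chap:2}), which is what makes the count on the diagonal $x+y=s$ equal to $s+1$ rather than $s-1$; a quick sentence pointing this out suffices to make the proof self-contained.
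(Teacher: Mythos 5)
Your proof is correct and takes essentially the same approach as the paper: both arguments partition $V(T_n)$ into slices of sizes $1,2,\dots,n+1$ and apply the Gaussian summation formula, the only difference being that you slice along the diagonals $x+y=s$ while the paper slices by constant $y$-coordinate. Your explicit remark about the convention $0 \in \N$ is a nice touch but does not change the substance.
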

\begin{proof}
    We consider the vertex set of $T_n$. Then there is one vertex with $y$-coordinate $n$, two vertices with $y$-coordinate $n-1$, and so forth until finally, there are $n + 1$ vertices with $y$ coordinate $0$. By definition, no vertices have strictly smaller or larger $y$-coordinates than $0$ and $n$. Thus, we have \[
    |V(T_n)| = \sum_{y=0}^{n} n+1-y = \sum_{i = 1}^{n+1} i = \frac{(n+1)(n+2)}{2}.
    \]
\end{proof}

\begin{figure}
    \centering
    \includegraphics[width=0.4\linewidth]{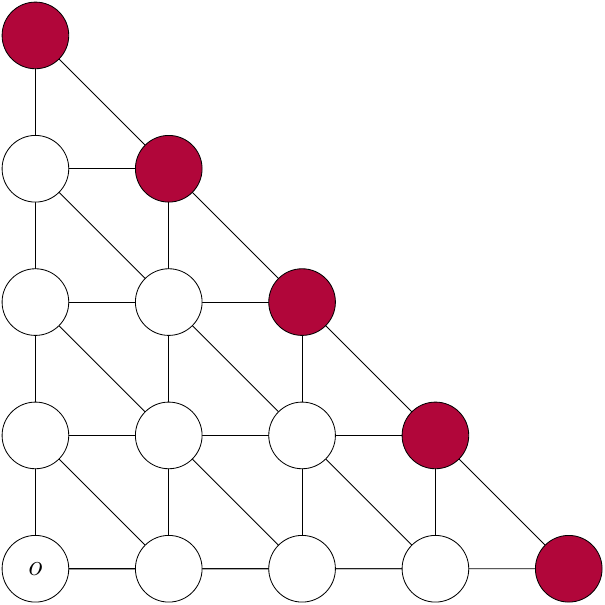}
    \caption{The Triangle Grid graph $T_4$. The origin is marked $o$. The vertices in $\partial(o,4)$ are highlighted.}
    \label{fig:triangle_grid}
\end{figure}

For the proof of the bound, we will need the following result on the effective resistance between cornerpoints of a Square Grid:

\begin{definition}[Square Grid Graph]\label{def:square_grid}
    \emph{(Introduction of Section 9.5)}~
    Let $B_n$ be the $n \times n$ two-dimensional grid graph with $V \coloneq \{(z,w) \in \N^2 \with 1 \leq z,w \leq n\}$, where vertices are adjacent when they are at unit Euclidean distance.
\end{definition}

\begin{theorem}[Resistances in squares]\label{thm:resistance_square}
    \emph{(Proposition 9.17)}~
    Let $a = (1,1)$ be the lower left-hand corner of $B_n$, and let $z = (n,n)$ be the upper right-hand corner of $B_n$. Suppose each edge of $B_n$ has unit conductance. The effective resistance $\effRes{a}{z}$ satisfies \[
    \frac{\log(n)}{2} \leq \effRes{a}{z} \leq 2\log(n).
    \]
\end{theorem}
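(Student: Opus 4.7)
The proof splits naturally into two independent steps, a lower bound and an upper bound on $\effRes{a}{z}$, each exploiting the anti-diagonal structure of $B_n$.

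\paragraph*{Lower bound via short-circuiting.} I would use Rayleigh's monotonicity law. For each $k \in \{2, 3, \ldots, 2n\}$, let $V_k := \{(i,j) \in V(B_n) : i+j = k\}$ be the $k$-th anti-diagonal, and decrease to zero the resistance between every pair of vertices within each $V_k$. By \Cref{thm:rayleigh}, this operation can only decrease $\effRes{a}{z}$, so the resulting value is a \emph{lower} bound on the original resistance. After contracting each $V_k$ to a single super-node $u_k$, the network collapses to a simple path $u_2 u_3 \cdots u_{2n}$ (with $a = u_2$ and $z = u_{2n}$), because in $B_n$ edges only join vertices of adjacent levels. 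Consecutive super-nodes $u_k, u_{k+1}$ are joined by $N_k := |E(V_k, V_{k+1})|$ parallel unit-resistance edges, which combine to a single resistor of resistance $1/N_k$, and these add in series. A short count shows $N_k = 2(k-1)$ for $2 \leq k \leq n$ (every vertex of $V_k$ has both of its up/right neighbours inside $B_n$ in this range), so already
\[
\effRes{a}{z} \;\geq\; \sum_{k=2}^{2n-1} \frac{1}{N_k} \;\geq\; \sum_{k=2}^{n} \frac{1}{2(k-1)} \;=\; \frac{1}{2}\sum_{j=1}^{n-1}\frac{1}{j} \;\geq\; \frac{1}{2}\int_1^n \frac{dx}{x} \;=\; \frac{\log(n)}{2}.
\]

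\paragraph*{Upper bound via a flow.} I would apply Thomson's principle (\Cref{thm:thomson}) with an explicit unit flow $\theta$ from $a$ to $z$ and show $\mathcal{E}(\theta) \leq 2\log(n)$. The lower bound already identifies the ``target'': if one could put weight $1/N_k$ on each edge crossing the $k$-th anti-diagonal cut, the resulting energy would equal $\sum_k 1/N_k = \Theta(\log n)$. This uniform-on-cut assignment is not conservation-preserving because $N_k$ varies with $k$, so I would realise it by a recursive construction: split $B_n$ along its middle anti-diagonal $V_{n+1}$, distribute the unit current as evenly as possible across the $n$ midpoints of $V_{n+1}$, and recurse on the two triangular halves routing the resulting $O(1/n)$ sub-currents. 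This yields a recurrence of the shape $R(n) \leq R(n/2) + O(1)$, whose solution is $R(n) = O(\log n)$; careful bookkeeping of the constants tightens the multiplicative factor to $2$.

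\paragraph*{Main obstacle.} The lower bound is routine bookkeeping once the anti-diagonal short-circuit is in place. The genuine difficulty is the flow construction for the upper bound. The ``natural'' monotone-lattice-path flow (putting weight on each edge proportional to the number of monotone $a$-to-$z$ paths through it) would be tempting, but by Stirling its weight concentrates in a band of width $\Theta(\sqrt k)$ around the diagonal at level $k$, producing per-level energy of order $1/\sqrt k$ and total energy $\Theta(\sqrt n)$, which is far too large. The recursive split above is precisely designed to spread the current uniformly across each cut; the technical work is to control the non-conservation corrections arising at the boundary between the two triangular halves at each recursion depth, and to sum these corrections into the $O(\log n)$ budget with an explicit constant.
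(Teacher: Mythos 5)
This theorem is not proved in the paper at all: it is imported verbatim as Proposition~9.17 of Levin, Peres and Wilmer, so your proposal can only be compared with the standard proof in that source. Your lower bound is correct and is essentially that proof: short-circuiting each anti-diagonal (equivalently, the Nash--Williams cutset inequality applied to the edge sets between consecutive anti-diagonals) reduces the network to a series of parallel unit resistors with $N_k = 2(k-1)$ edges between levels $k$ and $k+1$ for $2 \le k \le n$, and $\frac{1}{2}\sum_{j=1}^{n-1}\frac{1}{j} \ge \frac{1}{2}\log(n)$. That half is complete.

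The upper bound, however, has a genuine gap. The proposed recursion is not self-similar: after splitting along the middle anti-diagonal, the subproblem in each half is ``route a unit flow between a single corner and the uniform distribution on the hypotenuse of a triangle,'' which is not the corner-to-corner problem in a smaller square, so the recurrence $R(n) \le R(n/2) + O(1)$ is asserted for an unspecified quantity rather than derived; worse, you never exhibit a flow that actually delivers the current uniformly to the middle diagonal, which is precisely the step you identify as the difficulty. The standard construction closes this without recursion: make the current uniform over the \emph{vertices} of every anti-diagonal (not over the edges of every cut), by splitting the $1/k$ units sitting at the $i$-th vertex of diagonal $k$ between its two forward neighbours in the P\'olya-urn proportions $\frac{i}{k+1}$ and $\frac{k+1-i}{k+1}$. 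This is conservation-preserving --- your objection that ``uniform on the cut is not conservation-preserving'' applies only to the per-edge-uniform assignment, not to the per-vertex-uniform one --- and the energy contributed by the edges leaving diagonal $k$ is $\sum_{i=1}^{k}\frac{1}{k^2}\cdot\frac{i^2+(k+1-i)^2}{(k+1)^2} = \frac{2k+1}{3k(k+1)} \le \frac{2}{3k}$, so each triangular half contributes $O(\log n)$ with a constant small enough that gluing the expanding and contracting halves along the main diagonal yields $\effRes{a}{z} \le 2\log(n)$ by \Cref{thm:thomson}. Your observation that the monotone-lattice-path flow concentrates in a $\Theta(\sqrt{k})$ band and gives energy $\Theta(\sqrt{n})$ is correct and shows you saw why uniform spreading is needed; what is missing is the explicit uniform-spreading flow itself.
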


We can now prove the following lemma.

\begin{lemma}\label{lem:resistance_triangle_origin}
    Let, for $n \in \N^+$, $(T_n, c)$ be an even network with uniform resistance $R$. Let $o$ be the origin of $T_n$ and $t \in V(T_n) \setminus \{o\}$. Then \[
    \effRes{o}{t} \leq R \cdot \big(8 \log^2(n) + 6(\log(n) + 1)\big)
    \]
\end{lemma}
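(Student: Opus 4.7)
My plan is to prove the bound by constructing an $o$-to-$t$ path through a small number of intermediate vertices and applying the triangle inequality for effective resistance (\Cref{thm:resistance_triangle_ineq}) together with Rayleigh's monotonicity law (\Cref{thm:rayleigh}) and the square-grid bound (\Cref{thm:resistance_square}). By \Cref{lem:even_unit} it suffices to prove the bound $8\log^2(n) + 6(\log(n)+1)$ in the unit network and multiply by $R$ at the end. The crucial geometric observation is that any axis-aligned square grid $[a, a+s] \times [b, b+s]$ all of whose vertices satisfy $x+y \le n$ sits inside $T_n$ as a subgraph using only horizontal and vertical unit edges, so by Rayleigh's law its effective resistance upper-bounds the corresponding resistance in $T_n$; in particular $[0,k]^2 \cap \N^2$ is always available when $2k \le n$.

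By the reflection symmetry of $T_n$ through the line $y=x$, which fixes $o$ and preserves the network, I may assume $t=(t_1,t_2)$ with $t_2 \le t_1$. I first move from $o$ to the diagonal point $p=(t_2,t_2) \in T_n$ (valid since $2t_2 \le t_1+t_2 \le n$): the square $[0,t_2]^2$ has $o$ and $p$ as diagonal corners, so \Cref{thm:resistance_square} gives
\[
\effRes{o}{p} \le 2\log(t_2+1) \le 2(\log(n)+1).
\]
It remains to cover the horizontal segment from $p$ to $t$ of length $d := t_1-t_2$; this is the main obstacle because $t$ may lie close to the hypotenuse, leaving little vertical room for square subgrids above the line $y=t_2$.

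For the horizontal leg I use a doubling argument: pick intermediate vertices $p = v_0, v_1, \dots, v_K = t$ on the line $y=t_2$ so that consecutive gaps double in length (thus $K = \lceil \log_2(d+1) \rceil = O(\log n)$). For each step $k$, I place an axis-aligned square of side $s_k \approx 2^{k-1}$ containing $v_{k-1}$ and $v_k$ as two of its corners, using either the room below the line $y=t_2$ (of height $t_2$) or above it (of height at least $n - t_1 - t_2$ near the right endpoint), whichever is larger. A short case split on the relative sizes of $t_2$ and $n - t_1 - t_2$ shows that at least one placement is always feasible, with at most one ``switch'' between below and above somewhere along the doubling; this switch costs two additional square-grid applications. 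Since $v_{k-1}$ and $v_k$ are same-side corners of the chosen square, I bound $\effRes{v_{k-1}}{v_k}$ by one further triangle-inequality step through a diagonal corner, which reduces the same-side bound to two applications of \Cref{thm:resistance_square} and yields $\effRes{v_{k-1}}{v_k} \le 4\log(s_k + 1) \le 4(\log(n)+1)$.

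Summing the contributions via \Cref{thm:resistance_triangle_ineq}, the diagonal segment contributes $2(\log(n)+1)$, the switch correction at most $4(\log(n)+1)$ (accounting for the $6(\log(n)+1)$ term), and the $K=O(\log n)$ doubling steps each at most $4(\log(n)+1)$ combine to at most $8\log^2(n)$ after careful accounting, giving the stated bound $R \cdot (8\log^2(n) + 6(\log(n)+1))$ once we multiply by the uniform resistance via \Cref{lem:even_unit}. The hardest part of the proof is the geometric case analysis verifying that every doubling square fits inside $T_n$; once the placement rule is in hand, everything else is a mechanical combination of the three named tools.
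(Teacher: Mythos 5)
Your route is genuinely different from the paper's: the paper proves the bound by recursion on the triangle size (hop from $o$ to $(s,s)$ to $(2s,0)$ to $(2s+r,0)$ via two corner-to-corner squares and a short hop, then recurse into the sub-triangle $T_{2s}$ rooted at $(2s+r,0)$, so the $\log^2$ arises as $O(\log n)$ recursion levels times $O(\log n)$ per level), whereas you decompose a single horizontal segment dyadically, getting $O(\log n)$ steps times $O(\log n)$ per step. Both are structurally sound ideas, but your proof has a genuine gap exactly where you locate the difficulty: the square-placement rule as stated fails. With gaps \emph{doubling} from $p$ toward $t$, the largest square (side $\approx d/2$) must sit adjacent to $t$, and if $t$ lies at or near the right end of the hypotenuse with small $t_2$ --- the extreme case being $t=(n,0)$, where $p=o$ and $d=n$ --- the room below the line $y=t_2$ is $t_2=0$ and the room above near $t$ is $n-t_1-t_2=0$, so \emph{neither} placement is feasible and ``whichever is larger'' does not rescue the step. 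The claim ``at least one placement is always feasible, with at most one switch'' is therefore false as written.

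The fix is to reverse the orientation: let the gaps \emph{halve} toward $t$ and place every square \emph{above} the line $y=t_2$. Then the $j$-th square starts at horizontal position $t_2+d(1-2^{-j+1})$ with side $\approx d\,2^{-j}$, so its top-right corner has coordinate sum at most $2t_2+d = t_1+t_2 \le n$ and it always fits in $T_n$; no case split and no switch are needed at all. Two smaller issues: (i) bounding the resistance between two \emph{adjacent} corners of a square ``through a diagonal corner'' of that same square does not terminate --- one of the two legs is again a side; you should instead route through the midpoint $(x+s/2,\,y_0+s/2)$, which is a diagonal corner of two half-size squares, giving $\effRes{v_{k-1}}{v_k}\le 4\log(s_k/2+1)$ (this also only needs vertical room $s_k/2$, strengthening the fitting argument); (ii) the constants are asserted rather than checked --- with $K\le\log_2(n)+1\approx 1.45\log(n)+1$ steps at $\le 4(\log(n)+1)$ each the leading coefficient is about $5.8<8$, so the claimed bound is plausible for large $n$, but the lower-order terms exceed $6(\log(n)+1)$ and must be absorbed into the slack of the $\log^2$ term, which fails for small $n$ and would need a separate check there (the paper handles small $k$ as the base case of its induction).
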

\begin{proof}
    We will again consider the network with $T_n$ as the underlying graph endowed with unit conductances for all edges. 
    We prove this statement by proving the following invariant w.r.t. the number of rows of the triangle grid: Let $k \in \N$. Then, for some $r,s \in \N\colon~ k = 4s + r$. Let $o$ be the origin of $T_k$ and $t \in \partial(o,k)$. Let $r_k \coloneq \effRes{o}{t}$ in the network with $T_k$ as the underlying graph. Then \begin{align}
        r_k = r_{4s + r} \leq 4\log(s + 1)\log_2(4s) + 3(\log_2(4s+r) + 1). \label{eq31}
    \end{align}
    We first prove this for $1 \leq k \leq 3$. By \Cref{lem:resistance_distance}, $r_k \leq k \leq 3$, which is accounted for by the second term of the sum. Thus, the inequality holds. 
    \par Now let $k$ be arbitrary but fixed, such that for all $1 \leq k' < k$ \eqref{eq31} holds. 
    Let $r,s \in \N$, with $0 \leq r < 4$ such that $k = 4s + r$. Let $t = (t_x, t_y)$ and let $t_y \leq 2s$. Since $k = t_x + t_y$, this has $t_x \geq 2s + r$.
    We now consider points $a \coloneq (s,s), b \coloneq (2s, 0), c \coloneq (2s + r,0)$. Since the sums of their coordinates are all less than $4s + r$, they are all in $V(T_k)$.
    By \Cref{thm:resistance_triangle_ineq}, we have \[
        r_k \leq \effRes{o}{a} + \effRes{a}{b} + \effRes{b}{c} + \effRes{c}{t}.
    \]

    \begin{figure}
        \centering
        \includegraphics[width=0.4\linewidth]{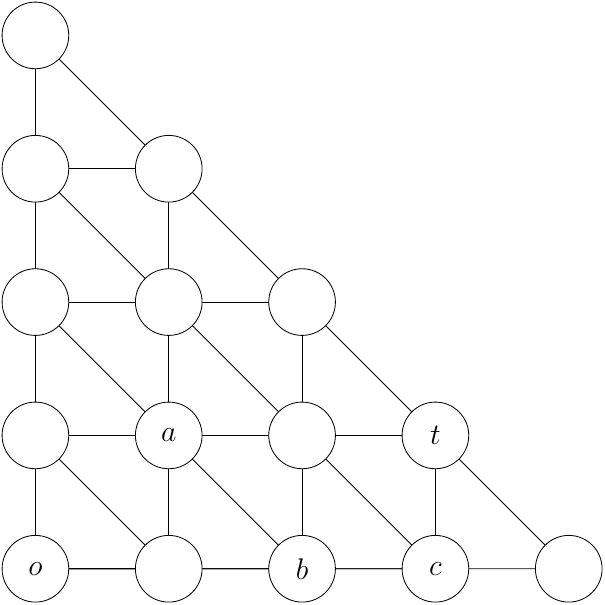}
        \caption{The Triangle Grid graph $T_4$ with its origin $o$. To determine the effective resistance $\effRes{o}{t}$, \Cref{lem:resistance_triangle_origin} uses vertices $a,b,c$ and known bounds for the induced square grids.}
        \label{fig:triangle_paths}
    \end{figure}
    
    For a concrete example of this construction, see \Cref{fig:triangle_paths}.
    We first consider points $o$ and $a$. They constitute corner-points of the subgraph $B_{s+1}$ in $T_k$. Since in $T_k$ only edges along the diagonals are added, by \Cref{thm:rayleigh} we know that $\mathcal{R}(o \leftrightarrow a; T_k) \leq \mathcal{R}(o \leftrightarrow a; B_{s+1})$. This means that we can apply \Cref{thm:resistance_square} to bound the effective resistance. The same holds for vertices $a$ and $b$.
    To bound the effective resistance between $b$ and $c$ we again use \Cref{lem:resistance_distance}. Now we only need to concern ourselves with the effective resistance between $c$ and $t$. For this, we consider the subgraph $T_{2s}$ of $T_k$ with $c$ as its origin. We note that the energy of the energy minimizing unit flow from $c$ to $t$ is necessarily less than or equal to the energy of an energy minimizing unit flow from $c$ to $t$ that only uses edges in $T_{2s}$, since the infimum is simply taken over a restricted class of flows. We therefore have \begin{align*}
    r_k = r_{4s+r} &\leq 2\log(s + 1) + 2\log(s + 1) + d(b,c) + r_{2s} \\
    &\leq 4\log(s + 1)+  r + r_{2s} \\
    &\leq 4\log(s + 1) + 3 + r_{2s}. \\
    \intertext{Since $2s < k$, we can now apply our invariant \eqref{eq31}:}
    &\leq 4\log(s + 1) + 3 + 4\log\Big(\floor{\frac{2s}{4}} + 1\Big)\log_2(2s) + 3(\log_2(2s) + 1) \\
    &\leq 4\log(s + 1) + 4\log(s + 1)\log_2(2s)+ 3(\log_2(2s) + 1) + 3 \\
    &= 4\log(s + 1)(\log_2(2s) + 1) + 3(\log_2(2s) + 1 + 1) \\
    &= 4\log(s + 1)(\log_2(2s) + \log_2(2)) + 3(\log_2(2s) + \log_2(2) + 1) \\
    &= 4 \log(s + 1)\log_2(4s) + 3(\log_2(4s) + 1) \\
    &\leq 4\log(s + 1)\log_2(4s) + 3(\log_2(4s + r) + 1).
    \end{align*}
    When $t_y \geq 2s + r$, this has $t_x \leq 2s$. We can now make the same argument with points $a \coloneq (s,s), b\coloneq(0, 2s), c \coloneq (0, 2s + r)$, which admits the same bound. 
    \par We finally consider the case that $2s < t_y < 2s + r$.  We say $t_y = 2s + x$ for $0 < x < r$. Then, let $a \coloneq (s, s), b \coloneq (2s, 2s)$. With \Cref{lem:resistance_distance}, vertex $b$ has \begin{align*}
    \effRes{b}{t} \leq d(b,t) &\leq |t_x - b_x| + |t_y - b_y| = |(4s + r - t_y) - 2s| + |t_y - 2s| \\
    &= |2s + r - t_y| + |t_y - 2s| \\
    &= |2s + r - (2s + x)| + |2s + x - 2s| \\
    &= |r - x| + |x| = r.
    \end{align*}
    Since $o$ and $a$ and $a$ and $b$ are corners of a $B_{s+1}$ subgraph, with \Cref{thm:resistance_square}, they have $\effRes{o}{a} \leq 2\log(s + 1)$ and this holds likewise for $\effRes{a}{b}$. With \Cref{thm:resistance_triangle_ineq}, we finally have \begin{align*}
    \effRes{o}{t} &\leq \effRes{o}{a} + \effRes{a}{b} + \effRes{b}{t} \\
    &\leq 4\log(s + 1) + 3 \\
    &\leq 4\log(s+1)\log_2(4s) + 3(\log_2(4s +r) +1).
    \end{align*}
    
    Thus, our invariant also holds for $k$.
    To conclude, note that $\log_2(x) = \frac{\log(x)}{\log(2)} \leq \frac{\log(x)}{\frac{1}{2}} \leq 2\log(x)$ for $x \in \R^+$ and for $n \in \N^+$ with $n = 4s + r$ and  $0 \leq r < 4~\colon~ s + 1 \leq n$. Finally, with \Cref{lem:even_unit}, we get that \[
    \effRes{o}{t} = R \cdot r_n \leq R \cdot \Big(8\log^2(n) + 6(\log(n) + 1)\Big).
    \]
\end{proof}
\end{section}

    \chapter{RLS for Dominating Set on Cycle Graphs}\label{chap:4}
    \newcommand{\logb}[1]{\log_2\left(#1\right)}

In this chapter, we analyze the performance of our RLS algorithm for solving the Minimum Dominating Set problem on cycle graphs. We utilize techniques we have developed in \Cref{chap:3} and other tools from probability theory. We start off by considering the time it takes the RLS to first sample a feasible dominating set, no matter its quality. After we enter the region of feasibility, we spend the next section discovering some nice properties of large dominating sets and how they help the RLS to quickly improve the quality of the solution. In the following section, we introduce a new way to represent dominating sets on the cycle: The adjacency model. We analyze how the changes of the solution by the RLS are reflected in our new model. We then formalize another representation in the next section: the Particle Systems. They abstract dependencies of the previous models, allowing us to better reason about their behaviour. In the final section, we use these models to analyze the runtime of the RLS to find an optimal solution, proving a bound of $\bigO{n^4\log^2(n)}$.

\begin{section}{Time to Feasibility}

Recall that our RLS algorithm (\Cref{alg:rls}) initializes a set at random, i.e., it includes every vertex independently with probability $\frac{1}{2}$. Our first result proves that this random initialization will most likely not result in a dominating set.

\begin{lemma}\label{thm:initial_non_dominating}
    A randomly initialized set $y \in \{0,1\}^n$  does not induce a dominating set on the cycle graph $C_n$ with probability at least $1 - 2^{-\Theta(n)}$.
\end{lemma}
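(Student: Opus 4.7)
The plan is to reduce the statement to a simple combinatorial property of the random bitstring and then exploit independence across disjoint blocks. I would first observe that $S(y)$ fails to dominate $C_n$ if and only if there exist three cyclically consecutive vertices whose bits in $y$ are all $0$: on the one hand, if $v_{i-1},v_i,v_{i+1}\notin S(y)$ then $v_i\notin S(y)$ and neither of its two neighbors is in $S(y)$, so $v_i$ is uncovered; conversely, if some vertex $v$ is uncovered, then $v$ itself and both of its neighbors must lie outside $S(y)$. Hence the event \emph{``$y$ is a dominating set''} is equivalent to \emph{``no cyclically consecutive triple of bits of $y$ is $000$''}.

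Next, I would partition $V(C_n)$ into $m \coloneq \lfloor n/3 \rfloor$ pairwise disjoint triples of consecutive vertices (ignoring the at most two leftover vertices). Since the bits of $y$ are independent $\Bern{1/2}$ random variables, the events \emph{``the $i$-th triple is entirely zero''} are mutually independent across these disjoint blocks, each with probability $(1/2)^3 = 1/8$. For $y$ to be a dominating set, every one of these disjoint triples must contain at least one $1$, an event of probability at most $\left(7/8\right)^{m}$.

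Taking the complement, the probability that $y$ does not induce a dominating set is at least
\[
1 - \left(\frac{7}{8}\right)^{\lfloor n/3 \rfloor} = 1 - 2^{-\Theta(n)},
\]
since $(7/8)^{n/3}$ decays geometrically in $n$. The argument is short; the only subtle point is insisting on \emph{disjoint} triples, so that the bit-level independence transfers directly to independence of the block-level events, whereas the $n$ overlapping triples of consecutive vertices would share bits and require a more delicate inclusion--exclusion or correlation argument.
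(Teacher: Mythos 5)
Your proposal is correct and follows essentially the same route as the paper: both partition the cycle into $\lfloor n/3\rfloor$ disjoint consecutive triples, observe that an all-zero triple leaves its middle vertex undominated, use independence of the disjoint blocks to bound the probability of domination by $(7/8)^{\lfloor n/3\rfloor}$, and take the complement. Your added remark that non-domination is \emph{equivalent} to the existence of an all-zero cyclic triple is a nice clarification, but the quantitative argument is identical.
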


\begin{proof}
    We consider the case where $n \mod 3 = 0$ first and discuss the other cases later.
    We can partition the graph for $0 \leq i < \frac{n}{3}$ into $3$-tuples $(v_{3i}, v_{3i + 1}, v_{3i+2})$. Let $M_i$ be the event that the vertex $v_{3i+1}$ is not dominated. For this, all three vertices of the $i$-th tuple must not be included in our solution. We therefore have $
    \Pr{M_i} = \frac{1}{2}\cdot\frac{1}{2}\cdot\frac{1}{2} = \frac{1}{8},
    $ since every vertex is selected independently with probability $\frac{1}{2}$.
    The probability that this middle vertex is dominated is therefore $
    \Pr{\overline{M_i}} = 1 - \Pr{M_i} = \frac{7}{8}.
    $
    \par We now argue that these events are mutually independent, i.e., it holds that for every subset $I \subseteq [n]$ with $|I| \geq 2\colon
    \Pr{\bigcap_{i \in I} \overline{M_i}} = \prod_{i \in I}\Pr{\overline{M_i}}.
    $
    Let $I \subseteq [n]$ be any such subset. The intersection of all events $(\overline{M_i})_{i \in I}$ asks that all vertices $(v_{3i+1})_{i \in I}$ are dominated.
    A single vertex $v$ can only dominate those vertices in its closed neighborhood, i.e., those vertices $u$ with $d(u,v) \leq 1$. 
    For $i,j \in I$, a vertex $v$ that dominates $v_{3i+1}$ can therefore not dominate $v_{3j+1}$ as $d(v_{3j+1}, v_{3i+1}) \geq 3$ and $d(v, v_{3i+1}) \leq 1$ meaning $d(v,v_{3j+1}) \geq 2$. 
    Thus, whether vertices $v_{3i},v_{3i+1},v_{3i+2}$ are selected only influences the domination of $v_{3i+1}$. 
    As the selection of a vertex of our solution is independent of all other vertices, event $\overline{M_i}$ is independent of all other events $\overline{M_j}$. 
    For $j \in I$ and event $\overline{M_j}$, we can therefore argue that
    $\Pr{\bigcap_{i \in I} \overline{M_i}} = \Pr{\overline{M_j}} \cdot \Pr{\bigcap_{i \in I \setminus \{j\}}\overline{M_j}}$. 
    By applying this reasoning iteratively, we see that $\Pr{\bigcap_{i \in I} \overline{M_i}} = \prod_{i\in I}\Pr{\overline{M_i}}$
    
    Let $D$ be the event that a randomly initialized solution is a dominating set for the graph. It is necessary that such a proper solution dominates all middle vertices with respect to the partitioning described above. Thus the probability thereof is an upper bound for the probability of event $D$.   
    We now have
    \[\Pr{D} \leq \Pr{\bigcap_{0\leq i < \frac{n}{3}}\overline{M_i}} = \prod_{0\leq i < \frac{n}{3}} \overline{M_i} = \left(\frac{7}{8}\right)^{\frac{n}{3}} = 2^{\frac{\logb{7} - \logb{8}}{3}n} = 2^{-\Theta(n)}\]
    It follows that
    \[
    \Pr{\overline{D}} = 1 - \Pr{D} \geq 1 - 2^{-\Theta(n)}
    \]
    For the case $n \mod 3 \neq 0$, we ignore the vertices that cannot be part of a $3$-tuple. As a consequence, $\Pr{\bigcap_{0\leq i <\left\lfloor\frac{n}{3}\right\rfloor}\overline{M_i}}$ remains an upper bound for $\Pr{D}$ since this is still a necessary condition for the solution to induce a dominating set. Since $\frac{n}{3} - \left\lfloor \frac{n}{3}\right\rfloor < 1$, we still have a bound within $\Theta(n)$ and the statement holds. 
\end{proof}

With this we can already prove our first bound regarding the runtime, namely the expected time until the RLS samples a feasible dominating set.

\begin{theorem}\label{thm:feasible}
    Starting from a randomly initialized set $y \in \{0,1\}^n$, the RLS takes in expectation $\bigO{n \log (n)}$ fitness evaluations to sample a feasible set.
\end{theorem}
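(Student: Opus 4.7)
The plan is to apply the multiplicative drift theorem (\Cref{thm:mult_drift}) to the random variable $U_t$ counting the number of uncovered vertices after iteration $t$, i.e., $U_t \coloneq |\{v \in V \with \clNeigh{v} \cap S(x_t) = \emptyset\}|$. Reaching $U_t = 0$ coincides exactly with the feasibility criterion, and since $U_0 \leq n$, the expected hitting time will then be bounded via \Cref{thm:mult_drift}.

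First I would establish that $U_t$ is non-increasing along every accepted mutation. This follows from the fitness function weighting each uncovered vertex by $n+1$, strictly more than the maximum possible set-size term $n$. A flip that adds a vertex to $S$ cannot create any uncovered vertex. A flip that removes a vertex $v \in S$ is accepted only if it uncovers nothing, because otherwise the fitness would grow by at least $(n+1) - 1 > 0$. A swap preserves $|S|$ and is therefore accepted iff $U_{t+1} \leq U_t$. Hence $U_t$ is monotone non-increasing under acceptance, and it suffices to lower-bound the drift via a single guaranteed source of progress.

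Next I would bound the drift by considering the event that the flip branch is chosen (probability $1/2$) and the uniformly sampled vertex $v$ is one of the $U_t$ uncovered vertices (conditional probability $U_t/n$). An uncovered vertex cannot itself lie in $S$, so flipping $v$ adds it to the set, covering at least $v$ itself; the resulting fitness change is at most $-(n+1) + 1 < 0$, so the move is always accepted and $U_{t+1} \leq U_t - 1$. Combined with the monotonicity established above, this gives
\[
\Ex{U_t - U_{t+1}}[X_0,\dots,X_t] \geq \frac{1}{2} \cdot \frac{U_t}{n} \cdot 1 = \frac{U_t}{2n},
\]
so \Cref{thm:mult_drift} applies with $\delta = 1/(2n)$. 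Using the crude bound $\Ex{U_0} \leq n$ then yields $\Ex{T} \leq 2n(1 + \log n) = \bigO{n\log n}$, as claimed.

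The main obstacle, and the only slightly subtle step, is the monotonicity check: once $U_t$ is known to be non-increasing under acceptance, the drift calculation only needs a single productive event (adding an uncovered vertex via a flip), and the rest is a routine application of multiplicative drift. The $(n+1)$-weighting in the fitness is precisely what prevents a swap or flip-removal from increasing $U_t$, so the other mutation cases contribute at worst zero to the drift and never harm the bound.
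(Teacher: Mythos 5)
Your proof is correct and follows essentially the same route as the paper: multiplicative drift on the number of uncovered vertices, with the drift lower-bounded by the event that the flip operator selects an uncovered vertex (probability $\tfrac{1}{2}\cdot\tfrac{U_t}{n}$), giving $\delta = \tfrac{1}{2n}$ and the $\bigO{n\log n}$ bound. The only differences are cosmetic: you explicitly verify monotonicity of $U_t$ under acceptance (a point the paper leaves implicit), and you use the cruder bound $\Ex{U_0}\leq n$ where the paper uses $\Ex{U_0}\leq \tfrac{n}{2}$ via the binomial initialization, which does not affect the asymptotics.
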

\begin{proof}
    Consider $y \in \{0,1\}^n$ initialized uniformly at random. As per \Cref{thm:initial_non_dominating}, $S(y)$ is not a dominating set with probability at least $1 - 2^{-\Theta(n)}$. Thus we consider only this case for the expected runtime. Let $N(y)$ be the set of non-dominated vertices. 
    Let $x_t$ denote the current solution of the RLS just before iteration $t+1$; we let $X_t = |N(x_t)|$. For this process to decrease, the number of non-dominated vertices has to be reduced. For one such vertex to become dominated, two events can occur: Either the vertex itself or one of its direct neighbors is selected into the set. For a vertex $v \in N(y)$, let $D_v$ denote the event, that this vertex becomes dominated. We consider only the possibility that the RLS selects one vertex uniformly at random to flip it and that this vertex is $v$. This happens with probabilities $\frac{1}{2}$ and $\frac{1}{n}$ respectively. Then $\Pr{D_v} \geq \frac{1}{2n}$. Considering this and ignoring the possibility that a selected vertex dominates more vertices than itself, we can bound the expected progress of the process by
    \[
    \Ex{X_{t+1} - X_t \mid X_0,\dots,X_t} \geq 1\cdot \Pr{X_{t+1}- X_t = 1\mid X_0, \dots,X_t}\} \geq \frac{1}{2n} \cdot X_t.
    \]
    Let $I$ be the random variable denoting the number of initially selected vertices. Then $I \sim \Bin{n}{\frac{1}{2}}$ and $\Ex{I} = \frac{n}{2}$. Since at least all selected vertices are dominated, we have $\Ex{X_0} \leq \Ex{n - I} = n - \Ex{I} = \frac{n}{2}$.
    Let $T = \min\{t \in \N \mid X_t \leq 0\}$ be the time the process first reaches $0$. Then with \Cref{thm:mult_drift} we get \[
    \Ex{T} \leq \frac{1 + \ln \left( \frac{n}{2}\right)}{\frac{1}{2n}} = 2n\cdot \left(1+\ln\left(\frac{n}{2}\right)\right) = 2n + 2n\ln\left(\frac{n}{2}\right) \in \bigO{n \log(n)}
    \]
\end{proof}

Once we have sampled a feasible dominating set, our fitness function $\fitness$ ensures that we never leave the space of feasible solutions again. From now on, every mutation has to result in a valid dominating set as well. The only way to produce a solution with strictly better fitness is to reduce the cardinality of this set. This can, by definition of the algorithm, only be accomplished by flipping a vertex that is not \say{needed} by the set. In the following section, we formalize this concept, discover some interesting properties of dominating sets on cycle graphs, and use them to prove the next runtime bound.

\end{section}
\begin{section}{Important Properties and Early Optimization}

The following two definitions can be found in \citeauthor{haynes_fundamentals_2023}~\cite{haynes_fundamentals_2023} as \emph{Definition 2.10} and \emph{2.13} respectively, and will be important in determining whether certain vertices are relevant for a dominating set.
    
\begin{definition}[Private Neighborhood]
    For a set $S \subseteq V$ and a vertex $v \in S$, the $S$-private Neighborhood of $v$ is the set $\clNeigh{v} \setminus \clNeigh{S \setminus \{v\}}$ and is denoted by $\pn{v}{S}$. For any vertex $v \in S$, we call another vertex $w \in \pn{v}{S}$ an $S$-private neighbor of $v$.
\end{definition}

\begin{definition}[Redundant Set]
    A set $S \subseteq V$ is called redundant, when there exists a $v \in S$ for which $\pn{v}{S} = \emptyset$. We call $S$ irredundant, if it is not redundant. We define $\redVert{S} \coloneq \{v \in S \with \pn{v}{S} = \emptyset\}$. We call $v \in \redVert{S}$ a redundant vertex.
\end{definition}

For a visual example of private neighbors and redundant sets in graphs see \Cref{fig:redundant_sets}. We now prove a simple property of redundancy.

\begin{lemma}\label{lem:redundancy_superhereditary}
    For any set $S \subset S' \subseteq V$ and $u \in S$, if $\pn{u}{S} = \emptyset$, then $\pn{u}{S'} = \emptyset$.
\end{lemma}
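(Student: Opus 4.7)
The plan is to unfold the definition of private neighborhood and exploit the monotonicity of the closed-neighborhood operator under set inclusion. Recall that $\pn{u}{S} = \clNeigh{u} \setminus \clNeigh{S \setminus \{u\}}$, so the hypothesis $\pn{u}{S} = \emptyset$ is equivalent to the containment $\clNeigh{u} \subseteq \clNeigh{S \setminus \{u\}}$. The whole argument will consist of showing that this containment is preserved when we enlarge $S$ to $S'$.

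First I would observe that from $S \subset S'$ (and since $u$ is removed on both sides) we immediately get $S \setminus \{u\} \subseteq S' \setminus \{u\}$. Then, because the closed-neighborhood of a vertex set is defined as a union $\clNeigh{T} = \bigcup_{v \in T} \clNeigh{v}$, enlarging the index set of the union can only enlarge the result, giving the monotonicity step
\[
\clNeigh{S \setminus \{u\}} \subseteq \clNeigh{S' \setminus \{u\}}.
\]
Chaining this with the hypothesis yields $\clNeigh{u} \subseteq \clNeigh{S' \setminus \{u\}}$, which rewritten in terms of set difference is exactly $\pn{u}{S'} = \clNeigh{u} \setminus \clNeigh{S' \setminus \{u\}} = \emptyset$.

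There is really no obstacle here beyond correctly unpacking the definitions; the statement is just the observation that having no private neighbor is a monotone (``superhereditary'') property in the sense that it is preserved when the surrounding set grows. I would keep the write-up to three short lines: rewrite the hypothesis as a containment, invoke monotonicity of $\clNeigh{\cdot}$, and conclude.
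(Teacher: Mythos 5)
Your proposal is correct and matches the paper's argument: the paper phrases it element-wise (every $v \in \clNeigh{u}$ is covered by some $w \in S \setminus \{u\}$, and such $w$ also lies in $S'$), which is just the pointwise version of your containment-and-monotonicity chain. No difference in substance.
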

\begin{proof}
    Let $S \subset S' \subseteq V$ and $u \in S$ with $\pn{u}{S} = \emptyset$. Then, for all $v \in \clNeigh{u}$, there exists another vertex $w \in S$ different from $u$, such that $v \in \clNeigh{w}$. These vertices are also elements of $S'$, since $S'$ is a proper superset of $S$. Thus $\pn{u}{S'} = \emptyset$.
\end{proof}

\begin{figure}
    \centering
    \begin{minipage}{.4\textwidth}
        \centering
        \begin{tikzpicture}[line width=0.5pt, node distance=1.3cm]
        \node [circle, draw, fill=hpired] (A) {\color{white}{$A$}};
        \node [circle, draw, fill=hpired] (B) [below of=A,xshift=0.5cm] {\color{white}{$B$}};
        \node [circle, draw] (D) [right of=B] {$D$};
        \node [circle, draw, fill=hpired] (C) [above of=D,xshift=0.5cm] {\color{white}{$C$}};
        \node [circle, draw] (E) [right of=C,yshift=0.7cm] {$E$};
        \path (A) edge (C);
        \path (B) edge (D);
        \path (A) edge (B);
        \path (C) edge (D);
        \path (C) edge (E);
    \end{tikzpicture}
    \end{minipage}
    \begin{minipage}{.4\textwidth}
        \centering
        \begin{tikzpicture}[line width=0.5pt, node distance=1.3cm]
        \node [circle, draw] (A) {$A$};
        \node [circle, draw, fill=hpired] (B) [below of=A,xshift=0.5cm] {\color{white}{$B$}};
        \node [circle, draw] (D) [right of=B] {$D$};
        \node [circle, draw, fill=hpired] (C) [above of=D,xshift=0.5cm] {\color{white}{$C$}};
        \node [circle, draw] (E) [right of=C,yshift=0.7cm] {$E$};
        \path (A) edge (C);
        \path (B) edge (D);
        \path (A) edge (B);
        \path (C) edge (D);
        \path (C) edge (E);
        \end{tikzpicture}
    \end{minipage}
    \caption{Left: The set $S \coloneq \{A,B,C\}$. $E$ is an $S$-private neighbor of $C$. $S$ is redundant because $A$ is a redundant vertex. Right: The set $S' \coloneq \{B,C\}$. $B$ is now an $S$-private neighbor of itself. Both $B$ and $C$ have private neighbors. $S'$ is thus irredundant.}
    \label{fig:redundant_sets}
\end{figure}
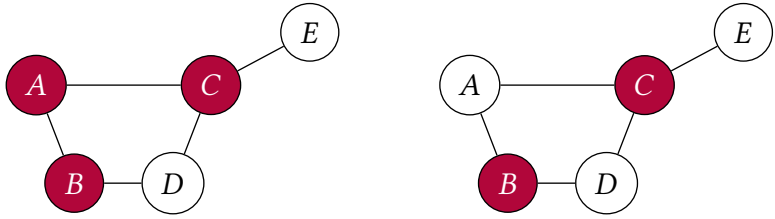

A lot of the time, we will need to reason about dominating sets of a fixed cardinality. We now introduce some shorthand notation for these sets.

\begin{definition}
    Let $n,k \in \N^+$. We denote the set of all dominating $k$-sets of $C_n$ by $\Dk$. We denote all dominating sets of $C_n$ by $\D$.
\end{definition}

\begin{lemma}\label{lem:minimal_redundant}
    For $n,k \in \N^+$, a dominating Set $D \in \Dk$ is minimal iff it is irredundant.
\end{lemma}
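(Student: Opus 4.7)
The plan is to prove the equivalence by contrapositive in the forward direction and by direct contradiction in the reverse direction, using only the definitions of private neighborhood and dominating set together with \Cref{lem:redundancy_superhereditary} (or more elementarily, the monotonicity of closed neighborhoods with respect to set inclusion). Nothing about the fact that the underlying graph is a cycle is actually needed, which suggests the argument will be short and clean.

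For the direction \emph{minimal} $\Rightarrow$ \emph{irredundant}, I would argue by contrapositive. Assume $D$ is redundant, so there exists $v \in D$ with $\pn{v}{D} = \emptyset$. By definition this means $\clNeigh{v} \subseteq \clNeigh{D \setminus \{v\}}$, and therefore
\[
\clNeigh{D \setminus \{v\}} = \clNeigh{D \setminus \{v\}} \cup \clNeigh{v} = \clNeigh{D} = V,
\]
so $D \setminus \{v\}$ is already a dominating set. But $D \setminus \{v\} \subsetneq D$, which contradicts minimality of $D$.

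For the direction \emph{irredundant} $\Rightarrow$ \emph{minimal}, I would argue by contradiction. Suppose $D$ is irredundant but not minimal, so there exists a proper subset $D' \subsetneq D$ that is also a dominating set. Choose any $v \in D \setminus D'$. By irredundance, $v$ has a private neighbor $w \in \pn{v}{D} = \clNeigh{v} \setminus \clNeigh{D \setminus \{v\}}$. Since $v \notin D'$ and $D' \subseteq D$, we have $D' \subseteq D \setminus \{v\}$, hence $\clNeigh{D'} \subseteq \clNeigh{D \setminus \{v\}}$. Because $w \notin \clNeigh{D \setminus \{v\}}$, it follows that $w \notin \clNeigh{D'}$, contradicting the assumption that $D'$ dominates $V$.

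There is essentially no obstacle here; the whole argument is a direct unfolding of the definition of $\pn{v}{D}$. The only subtle point worth spelling out explicitly is why a proper subset $D' \subsetneq D$ in the second direction necessarily misses \emph{some} vertex $v$ of $D$ whose private neighbor survives as a witness, which is immediate since $D \setminus D' \neq \emptyset$ and any $v$ from this difference works. \Cref{lem:redundancy_superhereditary} is not strictly needed for the proof, but it would give an alternative framing of the forward direction, observing that shrinking $D$ to $D \setminus \{v\}$ cannot \emph{create} private neighbors that weren't already there.
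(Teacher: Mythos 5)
Your proof is correct and follows essentially the same route as the paper's: both directions reduce to unfolding the definition of $\pn{v}{D}$ and observing that removing a single vertex preserves domination exactly when that vertex is redundant. If anything, your reverse direction is slightly more careful than the paper's, since you explicitly reduce an arbitrary proper subset $D' \subsetneq D$ to the single-vertex case via $D' \subseteq D \setminus \{v\}$ and the monotonicity of closed neighborhoods, a step the paper leaves implicit.
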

\begin{proof}
    Let $D \in \Dk$ be a minimal dominating set. Then, for any $S \subset D$, $S$ is not dominating. Then especially, for some $v \in D$, $D \setminus \{v\}$ is not dominating. That means that there must be a vertex $w \in V$ such that it is dominated by $v$ but not $D \setminus \{v\}$. Therefore $w \in \pn{v}{D}$. Since no vertex in $D$ has a non-empty private neighborhood, $D$ is irredundant. 
    \par Conversely, let $D \in \Dk$ be an irredundant dominating set. Then every vertex $v \in D$ has a non-empty private neighborhood. This means that \[
    \clNeigh{D \setminus \{v\}} = \clNeigh{D} \setminus \pn{v}{D} \subset V,
    \] 
    so $D \setminus \{v\}$ is not a dominating set. Thus every subset of $D$ is not a dominating set and $D$ is minimal.
\end{proof}

This lemma can also be found in \cite{cockayne_properties_1978} as \emph{Proposition 4.1}. However, as \citeauthor{cockayne_properties_1978} omit the proof, we chose to state it here. For a dominating set on a cycle to be redundant, we need to fully cover the neighborhood of a vertex with other vertices. We introduce the following definition to specify what we mean by this. We use this to prove a necessary and sufficient condition for a dominating set to be redundant.

\begin{definition}[Arc]\label{def:arc}
    Let $n \in \N^+$. For $k \leq n$, we call an induced path $P_k$ in $C_n$ an arc of length $k$.
\end{definition}

\begin{lemma}\label{lem:distance_redundant}
    For $n \in \N^+$, a dominating set $D \in \Dk$ is redundant iff there exists an arc $A$ within the $C_n$ with length at most 4 and $|A \cap D| \geq 3$. 
\end{lemma}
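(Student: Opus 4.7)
The plan is to prove both directions of the equivalence by local case analysis around the relevant vertex.

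For the forward direction, assume $D \in \Dk$ is redundant, so some $v \in D$ has $\pn{v}{D} = \emptyset$, that is, $\clNeigh{v} \subseteq \clNeigh{D \setminus \{v\}}$. On the cycle, $\clNeigh{v}$ consists of $v$ together with its two cycle-neighbors $u_L$ and $u_R$. Since $v \notin D \setminus \{v\}$ and $v$ itself must be dominated by $D \setminus \{v\}$, at least one of $u_L, u_R$ lies in $D$. I would then split into two cases. If both $u_L, u_R \in D$, the induced subgraph on $\{u_L, v, u_R\}$ is an arc of length $3$ with all three vertices in $D$. Otherwise, WLOG $u_L \in D$ and $u_R \notin D$; then $u_R$ still has to be dominated by $D \setminus \{v\}$, and the only candidate left is the far neighbor $w$ of $u_R$, forcing $w \in D$. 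The vertices $(u_L, v, u_R, w)$ then form an arc of length $4$ containing three elements of $D$.

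For the converse, assume there is an arc $A$ of length at most $4$ with $|A \cap D| \geq 3$. Arcs of length at most $2$ have too few vertices, so $|A| \in \{3,4\}$. When $|A| = 3$ with all three vertices in $D$, the middle vertex's closed neighborhood equals $A$ and is fully covered by the other two, giving an empty private neighborhood. When $|A| = 4$, I would step through the four subcases for which vertex of the arc is omitted from $D$ (namely $\{x,y,z\}$, $\{x,y,w\}$, $\{x,z,w\}$, $\{y,z,w\}$ for an arc $(x,y,z,w)$), and in each case pinpoint a single vertex of $A \cap D$ whose closed neighborhood is already fully dominated by the other two vertices of $A \cap D$. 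For instance, in the subcase $\{x,y,w\} \subseteq D$, the vertex $y$ has $\clNeigh{y} = \{x,y,z\}$, with $x$ covering $x$ and $y$, and $w$ covering $z$.

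I expect the main work to lie in the length-$4$ bookkeeping: for each of the four omission patterns one has to name the correct redundant vertex and verify that its closed neighborhood in the whole cycle (not just in the arc) is covered by $D \setminus \{v\}$. This verification is short because $\clNeigh{v}$ is contained in the arc by construction. A minor subtlety is the requirement that $A$ be an \emph{induced} subpath in $C_n$, which imposes mild conditions on $n$ (the two endpoints of the constructed arc must not be adjacent around the cycle); these are automatic once $n$ is at least a small constant, and in any corner case for tiny $n$ one can exhibit a length-$3$ arc directly, so the statement carries through in the regime relevant for the subsequent runtime analysis.
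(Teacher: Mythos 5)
Your proposal is correct and follows essentially the same route as the paper: a local case analysis around the redundant vertex for the forward direction (one or both cycle-neighbors of $v$ in $D$, with the far neighbor forced into $D$ in the second case), and for the converse the observation that the middle of the three vertices has its closed neighborhood covered by the outer two. The paper phrases the converse slightly more compactly via the distance bound $d(u,v)+d(v,w)\leq 3$ forcing one distance to equal $1$, rather than enumerating omission patterns, but the content is identical.
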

\begin{proof}
    Let $D \in \Dk$ be a redundant dominating set. Then there exists a vertex $v \in D$ with $\pn{v}{D} = \emptyset$. This has $\clNeigh{D} = \clNeigh{D \setminus \{v\}}$. Since $\clNeigh{v} = \{u,v,w\}$ for $u,w \in V$ adjacent to $v$, there have to be vertices $x,y \in D$ different from $v$, such that $\{u,v,w\} \subseteq \clNeigh{x} \cup \clNeigh{y}$. Since then $v \in \clNeigh{x} \lor v \in \clNeigh{y}$, $v$ has to be a neighbor of either $x$ or $y$. Let \Wlog $x$ be that vertex and $x = u$. Then $w \not\in \clNeigh{x}$ and therefore $w \in \clNeigh{y}$. Since $w$ has to be in the neighborhood of $y$ and $d(v,w) = 1$, we have $d(v,y) \leq 2$. Then $x,v,y$ are contained in an arc of length at most 4.
    \par Conversely, let $D \in \Dk$ be a dominating set with a set of vertices $\{u,v,w\} \subset D$ which are contained in an arc $A$ of length at most 4. Let $v$ lie in between of $u$ and $w$. Since $d(u,w) \leq d(u,v) + d(v,w) < |A| \leq 4$ and both distances are greater than $0$, one of these distances has to be equal to $1$. Let \Wlog $d(u,v) = 1$ and $d(v,w) \leq 2$. Then $\clNeigh{v} \subset \clNeigh{u} \cup \clNeigh{w}$. This has $\pn{v}{D} = \emptyset$ which means $D$ is redundant.
\end{proof}

This suggests that if we only have enough vertices in the dominating set, it will surely be redundant. At some point the vertices have to be so packed together that we will certainly meet this constraint. The following lemma proves this intuition. 

\begin{lemma}\label{lem:big_ds_not_minimal}
    For $n \in \N^+, k > \floor{\frac{n}{2}}$, every $D \in \Dk$ is not minimal.
\end{lemma}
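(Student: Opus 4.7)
The plan is to combine \Cref{lem:minimal_redundant} and \Cref{lem:distance_redundant} with a simple double-counting argument. By \Cref{lem:minimal_redundant} it suffices to prove that $D$ is redundant, and by \Cref{lem:distance_redundant} it suffices to exhibit an arc of length at most $4$ in $C_n$ containing at least $3$ vertices of $D$. I will argue the contrapositive: assume that every arc of length $4$ contains at most $2$ vertices of $D$, and deduce $|D| \leq \floor{n/2}$.

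Label the vertices of $C_n$ as $v_0, v_1, \dots, v_{n-1}$ cyclically, and for each $i \in \{0, 1, \dots, n-1\}$ let $A_i \coloneq \{v_i, v_{i+1}, v_{i+2}, v_{i+3}\}$ with indices taken modulo $n$. Each $A_i$ is (the vertex set of) an arc of length $4$, so by assumption $|A_i \cap D| \leq 2$ for every $i$, giving
\[
\sum_{i=0}^{n-1} |A_i \cap D| \leq 2n.
\]
On the other hand, every vertex $v_j \in D$ lies in exactly the four arcs $A_{j}, A_{j-1}, A_{j-2}, A_{j-3}$ (indices mod $n$), so
\[
\sum_{i=0}^{n-1} |A_i \cap D| = 4|D|.
\]
Combining these yields $4|D| \leq 2n$, hence $|D| \leq \floor{n/2}$, contradicting $k = |D| > \floor{n/2}$.

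Thus some arc of length $4$ contains at least $3$ vertices of $D$, so $D$ is redundant by \Cref{lem:distance_redundant} and therefore not minimal by \Cref{lem:minimal_redundant}. There is no real obstacle in the argument; the only subtlety is that the counting identity $\sum_i |A_i \cap D| = 4|D|$ requires $n \geq 4$ (so the $A_i$ are genuine $4$-element arcs rather than covering the whole cycle with repetition). The cases $n \in \{2, 3\}$ can be handled directly since then any dominating set of size $k > \floor{n/2}$ strictly contains a singleton, which already dominates the whole cycle (as $C_2$ and $C_3$ have diameter $1$), so $D$ is visibly non-minimal.
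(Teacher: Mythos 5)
Your proof is correct and is essentially the same argument as the paper's: both count incidences between vertices of $D$ and the $n$ consecutive length-$4$ arcs (each vertex lying in exactly four of them) and apply pigeonhole/double counting to force some arc to contain three vertices of $D$, then invoke \Cref{lem:distance_redundant} and \Cref{lem:minimal_redundant}. Your contrapositive formulation with the averaging sum even sidesteps the paper's separate even/odd case analysis, and your explicit handling of small $n$ is a reasonable extra precaution.
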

\begin{proof}
    Let $k > \floor{\frac{n}{2}}$ and $S \subset V$ with $|S| = k$. Consider the set of all clockwise, consecutive $4$-tuples of vertices, i.e., $T \coloneq \{(v_i, v_{i+1}, v_{i+2}, v_{i+3}) \with i \in [n] \}$ where addition is applied modulo $n$. There are $n$ such tuples, so $|T| = n$. 
    Now we consider the set $S' \coloneq S \times [4]$. Thus $|S'| = 4|S|$. 
    We define a function $f\colon S' \rightarrow T$, where $f(v,i) = t \in T$ where $t_i = v$. 
    This is a well-defined function, since there is exactly one such tuple for every vertex and position pair. 
    Furthermore, for fixed $v \in S$ and $i,j \in [4]$ with $i \neq j$, $f(v,i) \neq f(v,j)$, by construction.
    We consider the two cases, $n$ being even or odd.
    When $n$ is even, then $\floor{\frac{n}{2}} = \frac{n}{2}$. 
    Then $k \geq \frac{n}{2} + 1$. 
    Thus $|S'| = 4|S| = 4k \geq 4\left(\frac{n}{2}+1\right) = 2n + 4 > 2n$. 
    By the pigeonhole principle, there has to be a $t \in T$ with $|f^{-1}(t)| > 2 $. 
    For $n$ odd, we have $\floor{\frac{n}{2}} = \frac{n-1}{2}$ and $k \geq \frac{n-1}{2}+1$ and thus $|S'| \geq 4\left(\frac{n-1}{2} +1\right) = 2n + 2 > 2n$. 
    Again, there has to exist $t \in T$ with $|f^{-1}(t)| > 2$. 
    We then have $\{(u,i), (v,j),(w,k)\} \subseteq f^{-1}(t)$. 
    By observation, $u,v,w$ are three distinct vertices.
    Since these vertices are all part of the same $4$-tuple, they are contained in an arc of length 4.
    If $S$ is a dominating $k$-set, then by \Cref{lem:distance_redundant} and \Cref{lem:minimal_redundant}, $S$ is not a minimal dominating set.
\end{proof}

At this point we could already use this fact to analyze the expected time the RLS would take to sample a dominating set of cardinality $\floor{\frac{n}{2}}$. \Cref{lem:big_ds_not_minimal} shows that in every dominating set of greater cardinality, there is at least one redundant vertex that the RLS could flip. However, using the simple property about redundant sets we proved in \Cref{lem:redundancy_superhereditary}, we can achieve an even better bound. 

\begin{lemma}\label{lem:more_reducible}
    Let $n \in \N^+$. If, for $q \in \N^+$, $k = \floor{\frac{n}{2}} + q$ and $D \in \Dk$, then $|\redVert{D}| \geq q$.
\end{lemma}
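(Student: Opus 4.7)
The plan is to prove the statement by induction on $q$, using the fact that removing a redundant vertex from a dominating set yields a strictly smaller dominating set, combined with the hereditary property of redundancy from \Cref{lem:redundancy_superhereditary}.

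For the base case $q = 1$, we have $k = \floor{n/2} + 1 > \floor{n/2}$, so \Cref{lem:big_ds_not_minimal} tells us that $D$ is not minimal. By \Cref{lem:minimal_redundant}, $D$ is therefore redundant, meaning $|\redVert{D}| \geq 1 = q$, as required.

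For the inductive step, suppose the claim holds for $q - 1 \geq 1$, and let $D \in \mathcal{D}_{\floor{n/2} + q}(C_n)$. Since $k > \floor{n/2}$, \Cref{lem:big_ds_not_minimal} and \Cref{lem:minimal_redundant} again give us some $v \in \redVert{D}$. Because $\pn{v}{D} = \emptyset$, we have $\clNeigh{D \setminus \{v\}} = \clNeigh{D} = V$, so $D' \coloneq D \setminus \{v\}$ is still a dominating set, now of cardinality $\floor{n/2} + (q - 1)$. Applying the induction hypothesis to $D'$ yields a set $R \subseteq D'$ of at least $q - 1$ vertices that are redundant in $D'$. By \Cref{lem:redundancy_superhereditary}, applied with $S = D'$ and $S' = D$, every $u \in R$ also satisfies $\pn{u}{D} = \emptyset$, i.e., $R \subseteq \redVert{D}$. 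Since $v \notin D'$ and hence $v \notin R$, the set $R \cup \{v\}$ consists of at least $q$ distinct redundant vertices of $D$, completing the induction.

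There is no real obstacle here: the key conceptual move is simply recognizing that \Cref{lem:redundancy_superhereditary} lets redundancy in a smaller dominating set be \emph{lifted back} to the original set, so that iteratively peeling off redundant vertices produces distinct witnesses of redundancy for the original $D$. The only minor point to be careful about is verifying that $D \setminus \{v\}$ actually remains a dominating set, which follows immediately from the definition of a redundant vertex.
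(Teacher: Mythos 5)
Your proof is correct and follows essentially the same strategy as the paper: repeatedly peel off a redundant vertex (guaranteed by \Cref{lem:big_ds_not_minimal} and \Cref{lem:minimal_redundant}) and lift redundancy back to the original set via \Cref{lem:redundancy_superhereditary}. The paper phrases this as an explicit iterative construction of distinct vertices $v_q,\dots,v_1$ rather than a formal induction on $q$, but the argument is the same.
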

\begin{proof}
    Let $q \in \N^+$ with $\floor{\frac{n}{2}} + q \leq n$ and $D_q \in \Dk$. By \Cref{lem:big_ds_not_minimal} we know that $D_q$ is redundant, i.e., there exists a vertex $v_q \in D_q$ such that $\clNeigh{D_q \setminus \{v_q\}} = V$. Let $D_{q-1} = D_q \setminus \{v_q\}$. Again, recursively by \Cref{lem:big_ds_not_minimal}, there exists a vertex $v_{q-1}$ such that $\clNeigh{D_{q-1} \setminus \{v_{q-1}\}} = V$. Repeating this argument, we stop when we reach $D_0$. We now have distinct vertices $v_q,\dots, v_1$, such that for $1 \leq i \leq q$, $\pn{v_i}{D_i} = \emptyset$. Since $D_1 \subset \dots \subset D_q$, by \Cref{lem:redundancy_superhereditary} we have $\pn{v_i}{D_q} = \emptyset$. This means that all these vertices are redundant in $D_q$ and thus $\{v_1,\dots,v_q\} \subseteq \redVert{D_q}$. This has $|\redVert{D_q}| \geq q$.
\end{proof}

\begin{theorem}\label{thm:half}
    Starting from a dominating set $y \in \{0,1\}^n$ with $|y|_1 > \floor{\frac{n}{2}}$, the RLS takes in expectation $\bigO{n \log(n)}$ fitness evaluations to sample a dominating set $y' \in \{0,1\}^n$ with $|y'|_1 = \floor{\frac{n}{2}}$.
\end{theorem}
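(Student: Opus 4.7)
The plan is to use multiplicative drift (\Cref{thm:mult_drift}) on the excess size $X_t \coloneq |x_t|_1 - \floor{n/2}$, where $x_t$ is the current solution. We want to show that in each iteration the RLS removes a vertex from the dominating set with probability at least proportional to $X_t / n$.

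First I would observe that once we are in the region of feasibility, the fitness function $\fitness$ prevents the cardinality of the dominating set from ever increasing: flipping a vertex into the set strictly increases $\fitness$ by $1$, and the swap operator preserves cardinality. Thus $X_t$ is monotonically non-increasing, and the only way it drops is if the flip operator removes a vertex whose deletion still leaves a dominating set, i.e., a redundant vertex as in $\redVert{S(x_t)}$. Here I would invoke \Cref{lem:more_reducible}: whenever $|x_t|_1 = \floor{n/2} + q$ with $q \geq 1$, the current dominating set has $|\redVert{S(x_t)}| \geq q = X_t$.

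Next I would bound the one-step progress. In a single iteration, with probability $1/2$ the RLS executes the flip operator, and then with probability $|\redVert{S(x_t)}|/n \geq X_t/n$ it selects a redundant vertex to flip. This flip is accepted since the resulting solution is still a dominating set of strictly smaller cardinality (hence strictly smaller fitness). Therefore
\[
\Ex{X_t - X_{t+1} \with X_0, \dots, X_t} \;\geq\; 1 \cdot \frac{1}{2} \cdot \frac{X_t}{n} \;=\; \frac{X_t}{2n}.
\]
Since $X_0 \leq n - \floor{n/2} \leq \ceil{n/2} \leq n$ and $X_t$ only takes integer values, setting $\delta \coloneq 1/(2n)$ and applying \Cref{thm:mult_drift} with $T = \min\{t \in \N \with X_t \leq 0\}$ yields
\[
\Ex{T} \;\leq\; \frac{1 + \log(\Ex{X_0})}{\delta} \;\leq\; 2n\bigl(1 + \log(n)\bigr) \;\in\; \bigO{n \log(n)}.
\]

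There is no real obstacle here; the work is concentrated in already-proved lemmas. The only subtle point to check carefully is that the set size never rises: this rests on the precise form of the fitness function (the cardinality term $|S(x)|$ breaks ties within the dominating-set region in our favor) and on the fact that the acceptance rule in \Cref{alg:rls} compares $\fitness(y)$ to $\fitness(x)$ with weak inequality, so accepted moves never hurt the set-size component. With this verified, the conclusion follows directly from the drift theorem.
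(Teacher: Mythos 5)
Your proposal is correct and follows essentially the same route as the paper: both track the excess $S_t = |D_t| - \floor{\frac{n}{2}}$, use \Cref{lem:more_reducible} to lower-bound the number of redundant vertices by $S_t$, derive a drift of $\frac{S_t}{2n}$ from the flip operator, and conclude via \Cref{thm:mult_drift} with $S_0 \leq n - \floor{\frac{n}{2}}$. The only cosmetic difference is that you spell out explicitly why the cardinality never increases, whereas the paper dispatches this with a one-line remark that swaps preserve cardinality.
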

\begin{proof}
    Let $y_t$ denote the current solution of the RLS before iteration $t+1$. Let $D_t \subseteq V$ denote the subset of vertices induced by $y_t$. Let $S_t = |D_t| - \floor{\frac{n}{2}}$ and let $T \coloneq \minS{t \in \N \with S_t = 0}$. Note that $|D_T| = \floor{\frac{n}{2}}$. We now consider the drift of this process. By definition of the fitness function $\fitness$, we have $0 \leq |D_t| - |D_{t+1}| \leq 1$ and therefore $0 \leq S_t - S_{t+1} \leq 1$. We have $|D_{t}| > |D_{t+1}|$, and correspondingly $S_t > S_{t+1}$, when the RLS flips a vertex $v \in V$ such that $D_t \setminus \{v\}$ is still a dominating set, i.e., $v \in \redVert{D_t}$. Let $G$ denote the event, that the RLS selects the first operator and chooses a vertex in $\redVert{D_t}$ in iteration $t$. By \Cref{lem:more_reducible}, we have \[
    \Ex{S_t - S_{t+1} \with S_0,\dots,S_t} = \Pr{G} = \frac{1}{2}\cdot \frac{|\redVert{D_t}|}{n} \geq \frac{1}{2}\cdot\frac{|D_t|- \floor{\frac{n}{2}}}{n} = \frac{S_t}{2n}.
    \]
    We note that we can ignore any swaps in this analysis, as \Cref{lem:more_reducible} holds for all dominating sets and swaps cannot reduce the cardinality.  By \Cref{thm:mult_drift} and since $ S_0 = |D_0|- \floor{\frac{n}{2}} \leq n - \floor{\frac{n}{2}}$, we have \[
    \Ex{T} \leq 2n\cdot\Bigg(1+\log\Big(n -\floor{\frac{n}{2}}\Big)\Bigg) \leq 2n + 2n\cdot \log(n) \in \bigO{n \log(n)}
    \]
\end{proof}
\end{section}

A minimum dominating set on a cycle graph $C_n$ has cardinality $\ceil{\frac{n}{3}}$. When sampling a set with $\floor{\frac{n}{2}}$ vertices, the RLS only has to remove one sixth of the remaining ones. However the dominating sets with smaller cardinality loose the nice property of always being redundant. 

\begin{section}{Adjacency Models}

We now analyze how the RLS behaves when its current solution is a dominating set $D$ with cardinality $|D| \leq \floor{\frac{n}{2}}$. As we noted, there exist some dominating sets which are irredundant and also not optimal (see \Cref{fig:irredundant_ds_c6} for an illustrated example). To better reason about the behavior of the RLS, we first introduce some new definitions to formalize another representation of dominating sets on the cycle graph. 

\begin{figure}
    \centering
    \begin{minipage}[t]{.45\textwidth}
        \centering
        \begin{tikzpicture}[line width=0.5pt, node distance=1.3cm]
        \def\r{1.5}
        \def\off{90}
        \node [circle, draw,fill=hpired] (A) at ({\r*cos(0+\off)},{ \r*sin(0+\off)}) {\color{white}{$A$}};
        \node [circle, draw] (B) at ({\r*cos(60+\off)},{ \r*sin(60+\off)}) {$B$};
        \node [circle, draw,fill=hpired] (C) at ({\r*cos(120+\off)},{ \r*sin(120+\off)}) {\color{white}{$C$}};
        \node [circle, draw] (D) at ({\r*cos(180+\off)},{ \r*sin(180+\off)}) {$D$};
        \node [circle, draw,fill=hpired] (E) at ({\r*cos(240+\off)},{\r*sin(240+\off)}) {\color{white}{$E$}};
        \node [circle, draw] (F) at ({\r*cos(300+\off)},{ \r*sin(300+\off)}) {$F$};
        \path (A) edge (B);
        \path (B) edge (C);
        \path (D) edge (C);
        \path (D) edge (E);
        \path (F) edge (E);
        \path (F) edge (A);
        \end{tikzpicture}
        \caption{An irredundant and non-optimal dominating set of $C_6$.}
        \label{fig:irredundant_ds_c6}
    \end{minipage}
    \begin{minipage}[t]{.45\textwidth}
        \centering
        \begin{tikzpicture}[line width=0.5pt, node distance=1.3cm]
        \def\r{1.5}
        \def\off{90}
        \node [circle, draw] (A) at ({\r*cos(0+\off)},{ \r*sin(0+\off)}) {$A$};
        \node [circle, draw,] (B) at ({\r*cos(60+\off)},{ \r*sin(60+\off)}) {$B$};
        \node [circle, draw,fill=hpired] (C) at ({\r*cos(120+\off)},{ \r*sin(120+\off)}) {\color{white}{$C$}};
        \node [circle, draw] (D) at ({\r*cos(180+\off)},{ \r*sin(180+\off)}) {$D$};
        \node [circle, draw,fill=hpired] (E) at ({\r*cos(240+\off)},{\r*sin(240+\off)}) {\color{white}{$E$}};
        \node [circle, draw,fill=hpired] (F) at ({\r*cos(300+\off)},{ \r*sin(300+\off)}) {\color{white}{$F$}};
        \path (A) edge (B);
        \path (B) edge (C);
        \path (D) edge (C);
        \path (D) edge (E);
        \path (F) edge (E);
        \path (F) edge (A);
        \end{tikzpicture}
        \caption{A dominating set of $C_6$. $C$ is clockwise, but not counterclockwise movable.}
        \label{fig:movability_cycle}
    \end{minipage}
\end{figure}

\begin{definition}[\lbrack Counter-\rbrack Clockwise neighbors]\label{def:neighbors}
    For any $n,k \in \N^+$, let $S \subseteq V(C_n)$. Let $v \in S$. Then we call a vertex $u \in V(C_n)$ the clockwise neighbor of $v$ with respect to $S$ if $u \in S$ and $u$ is the closest such vertex when traversing $C_n$ in clockwise order, starting from $v$. We say $u = \cw[S]{v}$. Analogously, we define the counterclockwise neighbor $\ccw[S]{v}$ of $v$. We write $\cw{v}$ and $\ccw{v}$ if $S = V(C_n)$ or the reference set is apparent from context.
\end{definition}

\begin{definition}[Movable vertex]\label{def:neighbor_moves}
    Let $n,k \in N^+$,$D \in \Dk$ and $v \in D$.  
    A move of vertex $v$ to another vertex $u \in V(C_n)$ corresponds to a new dominating set $D' \in \Dk$ with $D' = D \setminus \{v\} \cup \{u\}$. 
    We call $v$ counterclockwise movable if $d(v, \cw[D]{v}) \leq 2$. A counterclockwise move moves $v$ to $\ccw{v}$. We call $v$ clockwise movable if $d(v,\ccw[D]{v}) \leq 2$. A clockwise move moves $v$ to $\cw{v}$. We call $v$ free-movable if $d(v,\cw[D]{v}) + d(v,\ccw[D]{v}) \leq 3$.
    A free move can move $v$ to any $u \in V(C_n) \setminus D$. We denote both moves which are not free as neighbor moves.
\end{definition}

See \Cref{fig:movability_cycle} for an example. We note that an arc as described in \Cref{lem:distance_redundant} exists if and only if there exists a freely movable vertex. Therefore, determining movability of vertices and subsequently, the redundancy of a dominating set, only requires knowledge of the distances between vertices and their (counter-)clockwise neighbors. We thus want to capture exactly this information while abstracting other details such as absolute positions of vertices in the dominating set on the cycle graph. For this, we consider the following model:

\begin{definition}[Intermediate Adjacency Graph]\label{def:intermediate_adjacency_graph}
    Let $n,k \in \N^+$ and $D \in \Dk$. Consider the $C_k$ with a bijection $\varphi\colon D \rightarrow V(C_k)$ such that for $u,v \in D: \{\varphi(u), \varphi(v)\} \in E(C_k) \Leftrightarrow v = \cw[D]{u} \land u = \ccw[D]{v}$, i.e., a vertex which corresponds to one in the dominating set is adjacent to those which correspond to its counter- and clockwise neighbors. We further define \[w\colon E(C_k) \rightarrow \{1,2,3\}, \{\varphi(u), \varphi(v)\} \mapsto d(u,v).\] We call $(C_k, w)$ the intermediate adjacency graph of $D$.
\end{definition}

We note that we can define the codomain of the weight function to be $\{1,2,3\}$ since for distinct vertices $u,v \in V(C_n)$ we have $d(u,v) \geq 1$ and when for $v \in D\colon d(v, \ccw[D]{v}) > 3$ or $d(v,\cw[D]{v}) > 3$ there would be a vertex in between these two which would not be dominated, and thus $D \not\in \Dk$.

We now have a graph whose vertices correspond to those in the dominating set with edge weights that represent the distances between them.  However, we will instead consider the edge-to-vertex dual graph of this graph, i.e., the graph that has vertices for edges and edges for vertices. We adapt its definition from \citeauthor{diestel_graph_2025}~\cite{diestel_graph_2025} \emph{(Section 1.1, Line Graph)}.

\begin{definition}[Edge-to-Vertex Dual Graph]
    Let $G = (V,E,w)$. The edge-to-vertex dual graph $EVD(G)$ is the graph on $E$ in which $x,y \in E$ are adjacent as vertices if and only if they are adjacent as edges in $V$. The edge weight function $w$ becomes a function on the vertices of $EVD(G)$.
\end{definition}

\begin{definition}[Adjacency Graph]\label{def:adjacency_graph}
    Let $n,k \in \N^+$ and $D \in \Dk$. Let $A$ be the intermediate adjacency graph of $D$ as defined in \Cref{def:intermediate_adjacency_graph}. Then we say $\adj{D} \coloneq EVD(A)$ is the adjacency graph of $D$, where $EVD(\cdot)$ is the edge-to-vertex dual graph. 
\end{definition}

\begin{figure}
    \centering
    \begin{minipage}[t]{.32\textwidth}
        \centering
        \begin{tikzpicture}[line width=0.5pt, node distance=1.3cm]
        \def\r{1.5}
        \def\a{45}
        \def\off{90}
        \node [circle, draw] (A) at ({\r*cos(0*\a+\off)},{ \r*sin(0*\a+\off)}) {$A$};
        \node [circle, draw,fill=hpired] (B) at ({\r*cos(\a+\off)},{ \r*sin(\a+\off)}) {\color{white}{$B$}};
        \node [circle, draw,fill=hpired] (C) at ({\r*cos(2*\a+\off)},{ \r*sin(2*\a+\off)}) {\color{white}{$C$}};
        \node [circle, draw] (D) at ({\r*cos(3*\a+\off)},{ \r*sin(3*\a+\off)}) {$D$};
        \node [circle, draw,fill=hpired] (E) at ({\r*cos(4*\a+\off)},{\r*sin(4*\a+\off)}) {\color{white}{$E$}};
        \node [circle, draw] (F) at ({\r*cos(5*\a+\off)},{ \r*sin(5*\a+\off)}) {$F$};
        \node [circle, draw] (G) at ({\r*cos(6*\a+\off)},{ \r*sin(6*\a+\off)}) {$G$};
        \node [circle, draw,fill=hpired] (H) at ({\r*cos(7*\a+\off)},{ \r*sin(7*\a+\off)}) {\color{white}{$H$}};
        \path (A) edge (B);
        \path (B) edge (C);
        \path (D) edge (C);
        \path (D) edge (E);
        \path (F) edge (E);
        \path (F) edge (G);
        \path (H) edge (G);
        \path (H) edge (A);
        \node at (0,-2.5) {Dominating set $D$};
        \end{tikzpicture}
    \end{minipage}
    \begin{minipage}[t]{.32\textwidth}
        \centering
        \begin{tikzpicture}[line width=0.5pt, node distance=1.3cm]
        \def\r{1.5}
        \def\a{90}
        \def\off{135}
        \node [circle, draw] (A) at ({\r*cos(0*\a+\off)},{ \r*sin(0*\a+\off)}) {$B$};
        \node [circle, draw] (B) at ({\r*cos(\a+\off)},{ \r*sin(\a+\off)}) {$C$};
        \node [circle, draw] (C) at ({\r*cos(2*\a+\off)},{ \r*sin(2*\a+\off)}) {$E$};
        \node [circle, draw] (D) at ({\r*cos(3*\a+\off)},{ \r*sin(3*\a+\off)}) {$H$};

        \path (A) edge node[left]{$1$} (B);
        \path (B) edge node[below]{$2$}(C);
        \path (D) edge node[right]{$3$}(C);
        \path (D) edge node[above]{$2$} (A);

        \node at (0,-2.5){$(C_4,w)$};
        \end{tikzpicture}
    \end{minipage}
    \begin{minipage}[t]{.32\textwidth}
        \centering
        \begin{tikzpicture}[line width=0.5pt, node distance=1.3cm]
        \def\r{1.5}
        \def\a{90}
        \def\off{90}
        \node [circle, draw] (A) at ({\r*cos(0*\a+\off)},{ \r*sin(0*\a+\off)}) {$2$};
        \node [circle, draw] (B) at ({\r*cos(\a+\off)},{ \r*sin(\a+\off)}) {$1$};
        \node [circle, draw] (C) at ({\r*cos(2*\a+\off)},{ \r*sin(2*\a+\off)}) {$2$};
        \node [circle, draw] (D) at ({\r*cos(3*\a+\off)},{ \r*sin(3*\a+\off)}) {$3$};

        \path (A) edge node[above,xshift=-0.2cm]{$B$} (B);
        \path (B) edge node[below,xshift=-0.2cm]{$C$}(C);
        \path (D) edge node[below,xshift=0.2cm]{$E$}(C);
        \path (D) edge node[above,xshift=0.2cm]{$H$} (A);
        \node at (0,-2.5){Adjacency Graph $\adj{D}$};
        \end{tikzpicture}
    \end{minipage}
    \caption{Left: A dominating set $D$ on $C_8$. Middle: The intermediate adjacency graph of $D$ with corresponding weight function $w$. Right: The Edge-to-Vertex Dual Graph of the intermediate adjacency graph, the adjacency graph $\adj{D}$.}
    \label{fig:adjacency_graph_example}
\end{figure}
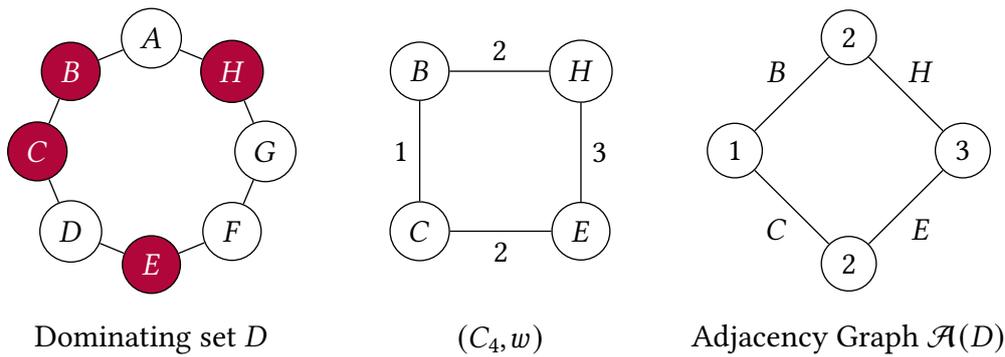

This might seem counterintuitive at first, but this model will be easier to work with in the following analysis. For an example of how the adjacency graph is defined for a dominating set see \Cref{fig:adjacency_graph_example}.
Since we want to observe how the RLS moves vertices around, we often look at sequences of dominating sets that correspond to some sequence of moves. However we would like to limit our analysis to the adjacency graphs of these sets. Therefore, we now define how we derive a sequence of adjacency graphs from a sequence of dominating sets.

\begin{definition}[Adjacency Sequence]\label{def:adjacency_sequence}
    Let $n,k \in \N^+$. Let $(D_i)_{i\in \N}$ be a sequence of dominating sets with $D_i \in \Dk$ for each $i$ where $D_{i+1}$ results from a neighbor move in $D_i$. Consider $A_0 \coloneq \adj{D_0}$ and the corresponding map $\varphi\colon D_0 \rightarrow E(C_k)$. Note that since $D_1$ originated from a neighbor move in $D_0$, the set difference $D_0 \setminus D_1$ only contains the vertex $v$ that was moved and $D_1 \setminus D_0$ the vertex $v'$ it was moved to. 
    Then we define $A_1$ as the adjacency graph with corresponding map \[\varphi\colon D_1 \rightarrow E(C_k), u \mapsto \begin{cases}
        \varphi(v) & \text{if } u = v',\\
        \varphi(u) & \text{else}.
    \end{cases}
    \]
    We naturally extend this definition for all subsequent moves to form the corresponding sequence of adjacency graphs $(A_i)_{i \in \N}$.
\end{definition}

Since the edge corresponding to a moved vertex now simply corresponds to the edge of the vertex it moved to, the only change happens to the vertex weight function.  We will now see how moves of vertices in the dominating set are reflected in these functions. 

Consider for $n,k\in \N^+$ and $D \in \Dk$ a counterclockwise move of a vertex $v$, say to a vertex $v' \coloneq \ccw{v}$. Then this vertex moves closer to its counterclockwise neighbor and further away from its clockwise neighbor. Let $D'$ be the dominating set corresponding to this neighbor move. We have $d(v', \ccw[D']{v'}) = d(v, \ccw[D]{v}) - 1$ and $d(v', \cw[D']{v'}) = d(v, \cw[D]{v'})  + 1$. Let $A \coloneq \adj{D}$ and $A' \coloneq \adj{D'}$ with their respective vertex weight functions $w,w'$ as in \Cref{def:adjacency_sequence}. Let $a,b \in V(C_k)$ be the left and right vertices incident to the edge of $v$ in $A$. Then, for the described counterclockwise move, we have $w'(a) = w(a) -1$ and $w'(b) = w(b) + 1$. Similarly, if we were to consider a clockwise move, we would have $w'(a) = w(a) + 1$ and $w'(b) = w(b) - 1$. This means, that when analyzing a sequence of dominating sets and their adjacency graphs, it actually suffices to look at the corresponding weight functions:

\begin{definition}[Adjacency Weight Sequence]\label{def:adjacency_weight}
    Let $n,k \in \N^+$ and $(D_i \in \Dk)_{i \in \N}$ a sequence of dominating sets. Let $(\adj{D_i} \coloneq (V_i, E_i, w_i))_{i \in \N}$ the sequence of adjacency graphs as defined in \Cref{def:adjacency_sequence}. Then we call the sequence $(w_i)_{i \in \N}$ the adjacency weight sequence with respect to the sequence of dominating sets.
\end{definition}

We provide an example of an adjacency weight sequence of a dominating set sequence as part of \Cref{fig:sequences}.

The following lemma is at its core only a reformulation of \Cref{lem:distance_redundant} in the setting of adjacency graphs, however it is fundamental for the following analysis of the behavior of the RLS.

\begin{lemma}\label{lem:adjacency_redundant}
    Let $n,k \in \N^+$ and $D \in \Dk$. Let $\adj{D} = (V,E,w)$ be an adjacency graph of this dominating set. Then $D$ is not minimal iff there exist an edge $\{u,v\} \in E$ with $w(u) + w(v) \leq 3$.
\end{lemma}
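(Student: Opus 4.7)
The plan is to reduce the claim to the already-established \Cref{lem:distance_redundant} together with \Cref{lem:minimal_redundant}, which together give: $D$ is not minimal iff there is an arc of length at most $4$ containing at least three vertices of $D$. So I would first translate the statement ``there is an edge $\{u,v\} \in E$ with $w(u) + w(v) \leq 3$'' into the language of $D$ itself, and then match that condition to the arc criterion.

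For the translation, I would unpack \Cref{def:intermediate_adjacency_graph} and \Cref{def:adjacency_graph}: vertices of $\adj{D}$ are the edges of the intermediate cycle $C_k$, and two such vertices are adjacent iff the corresponding edges of $C_k$ share an endpoint. That shared endpoint is the image under $\varphi$ of some $x \in D$, and the two incident edge-weights are exactly $d(x, \ccw[D]{x})$ and $d(x, \cw[D]{x})$. Thus edges $\{u,v\} \in E(\adj{D})$ are in bijection with vertices $x \in D$, with $w(u) + w(v) = d(x,\ccw[D]{x}) + d(x,\cw[D]{x})$.

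Given this bijection, the two directions become easy. For $(\Leftarrow)$, if $\{u,v\}$ has $w(u) + w(v) \leq 3$ and corresponds to $x \in D$ with counterclockwise neighbor $y$ and clockwise neighbor $z$, then $y,x,z$ are three distinct vertices of $D$ lying on an arc of length at most $d(y,x) + d(x,z) + 1 \leq 4$, so \Cref{lem:distance_redundant} and \Cref{lem:minimal_redundant} give that $D$ is not minimal. For $(\Rightarrow)$, if $D$ is not minimal, then \Cref{lem:minimal_redundant} and \Cref{lem:distance_redundant} provide an arc of length at most $4$ containing three vertices $a,b,c$ of $D$; I would pick $b$ to be the one lying strictly between the other two on the arc, so that $d(a,b) + d(b,c) = d(a,c) \leq 3$. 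Then, since $a$ is a candidate for the counterclockwise neighbor of $b$ in $D$, $d(b, \ccw[D]{b}) \leq d(a,b)$, and similarly $d(b, \cw[D]{b}) \leq d(b,c)$; under the bijection above, the corresponding edge $\{u,v\}$ in $\adj{D}$ satisfies $w(u) + w(v) \leq 3$.

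The only non-routine point is the small geometric observation that the closer counter/clockwise neighbors of $b$ in $D$ cannot exceed $d(a,b)$ and $d(b,c)$, which is immediate once one verifies that $a$ and $c$ really lie on opposite sides of $b$ on the cycle; picking $b$ as the middle vertex on the arc (rather than on the cycle) makes this automatic, so I do not anticipate a real obstacle.
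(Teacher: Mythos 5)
Your proposal is correct and follows essentially the same route as the paper: both directions reduce to \Cref{lem:minimal_redundant} and \Cref{lem:distance_redundant}, identifying edges of $\adj{D}$ with vertices of $D$ and matching the arc criterion to the weight sum $d(x,\ccw[D]{x}) + d(x,\cw[D]{x}) \leq 3$. Your extra care in bounding $d(b,\ccw[D]{b}) \leq d(a,b)$ rather than asserting $a = \ccw[D]{b}$ outright is a minor tightening of the same argument, not a different approach.
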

\begin{proof}
    Let $D$ be non minimal. Then, by \Cref{lem:distance_redundant} there exists an arc with length at most $4$ that contains at least $3$ vertices of $D$. Let $a,b,c$ be these vertices. One of them necessarily lies in between the others. Let $b$ be this vertex, let $a = \ccw[D]{b}$ and $c = \cw[D]{b}$. Since the arc has length at most 4, we have $d(a,b) + d(b,c) \leq 3$. Since $a,c$ are (counter-)clockwise neighbors of $b$, by the definition of adjacency graphs, there exist vertices $u,v$ which are adjacent to each other and correspond to these distances. Therefore, there exists an edge $\{u,v\}$ with $w(u) + w(b) = d(a,b) + d(b,c) \leq 3$. 
    \par The proof of the converse follows analogously.
\end{proof}
\begin{corollary}\label{cor:weight_redundant}
    For $D$ and $\adj{D}$ as in \Cref{lem:adjacency_redundant} we have that $D$ is not minimal iff there exists an edge $\{u,v\} \in E$ with either $w(u) = 2 \land w(v) = 1$, $w(u) = 1 \land w(v) = 2$ or $w(u) = 1 \land w(v) = 1$
\end{corollary}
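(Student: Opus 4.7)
The plan is to obtain this corollary as a direct specialization of \Cref{lem:adjacency_redundant}, reducing it to a finite case analysis on the possible weight values. The key observation I would invoke first is the note immediately following \Cref{def:intermediate_adjacency_graph}: the weight function has codomain $\{1,2,3\}$, because distinct vertices in the dominating set must be at distance at least $1$ and at most $3$ (otherwise an undominated vertex would exist between them). Under the construction in \Cref{def:adjacency_graph}, these edge weights become vertex weights in $\adj{D}$, so for every vertex $u$ in the adjacency graph we have $w(u) \in \{1,2,3\}$.

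From there, I would simply enumerate. For any edge $\{u,v\} \in E$ the sum $w(u) + w(v)$ lies in $\{2,3,4,5,6\}$, and the inequality $w(u) + w(v) \leq 3$ from \Cref{lem:adjacency_redundant} is satisfied precisely by the unordered weight pairs $\{1,1\}$ and $\{1,2\}$. Writing these out as ordered pairs on the endpoints $u$ and $v$ yields exactly the three cases in the statement: $(w(u), w(v)) \in \{(1,1), (1,2), (2,1)\}$. This handles the forward direction.

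For the converse direction, each of the three listed cases immediately gives $w(u) + w(v) \leq 3$, so \Cref{lem:adjacency_redundant} guarantees that $D$ is not minimal. There is no real obstacle here; the only thing to keep clean is to make the weight-codomain argument explicit, since it is what distinguishes this corollary from the lemma it follows from, and to note that both $w(u) \geq 1$ and $w(v) \geq 1$ rule out the pair $(0, k)$ that would otherwise appear in a naive enumeration of sums at most $3$.
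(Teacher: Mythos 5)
Your proposal is correct and follows essentially the same route as the paper: both reduce the corollary to \Cref{lem:adjacency_redundant} and use the fact that the weight codomain is $\{1,2,3\}$ to conclude that the only weight pairs summing to at most $3$ are $\{1,1\}$ and $\{1,2\}$ (the paper phrases this as a short contradiction argument rather than an explicit enumeration, but the content is identical). No gaps.
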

\begin{proof}
    We prove that the existence of an edge $\{u,v\} \in E$ with $w(u) + w(v) \leq 3$ is equivalent to the existence of an edge with the proposed weight configurations. We begin by proving the converse. Suppose an edge with the proposed weight configurations exists. All of these satisfy $w(u) + w(v) \leq 3$. Thus there exists an edge with this total weight restriction. Suppose now that there is an edge $\{u,v\} \in E$ with $w(u) + w(v) \leq 3$. Since the codomain of $w$ is $\{1,2,3\}$, for any $v' \in V$ we have $w(v') \geq 1$. Suppose now $w(u) \geq 2$ and $w(v) \geq 2$. Then $w(u) + w(v) \geq 4$, which is a contradiction. We therefore have to have either $u$ or $v$ with weight $1$. Let \Wlog $w(u) = 1$. Then $w(v) \leq 2$ since otherwise, in the case of $w(v) = 3$, $w(u) + w(v) = 4$, which again is a contradiction. It follows that the proposed weight configurations are the only way for an edge to have total weight less than $3$. The proof is completed by an application of \Cref{lem:adjacency_redundant}.
\end{proof}

\end{section}

\begin{section}{Particle Systems}

With this new definition of non-minimality in the adjacency model, we want to consider mutations of the RLS in light of this model and how they affect the adjacency weight sequence. We want to bound the expected time it takes the RLS to discover dominating sets that correspond to weight functions as described in \Cref{cor:weight_redundant}. For this, we first need to make a few more observations about the model we have just introduced. Consider, for some $n,k \in \N^+$, the cycle graph $C_k = (V,E)$, and a weight function $w\colon V \rightarrow \{1,2,3\}$ corresponding to a dominating set $D \in \Dk$. Suppose we have an edge $\{u,v\} \in E$ with $w(u) = w(v) = 2$. The this edge corresponds to some vertex $a \in D$, $w(u)$ to the distance to its counterclockwise neighbor and $w(v)$ to the distance to its clockwise neighbor. By the definition of movability of vertices, $a$ is clockwise, as well as counterclockwise movable. Consider a clockwise move of $a$. Then the weight function $w'$ corresponding to this move has $w'(u) = w(u) + 1 = 3$ and $w'(v) = w(v) - 1 = 1$. Due to this behavior, we can imagine that a vertex with weight $1$ was \say{composed} of two vertices with weight $2$ as the result of some move. 
\par Now, for a moment, we will imagine a new model where we only really care about vertices with weights $1$ and $2$, since in the end, only those and their locations are relevant to determining whether an underlying dominating set is non-minimal. We will consider these vertices as some particles that roam around the cycle according to some sequence of dominating sets. When two of these particles occupy the same vertex, we say that they met and that this corresponds to a vertex weight of $1$. 
So, in this auxiliary model, to make the underlying dominating set not minimal, we would need two particles to meet and then at least one other to move right beside them. If the behavior of these particles can be modeled as some random walk, we may reduce our analysis to that of some three particles \say{meeting} when randomly walking on the cycle graph. We are now going to formalize this idea.

\begin{definition}[Particle System]
    Let $G = (V,E)$ be a graph and $P$ a set of particles. Let $\psi\colon P \rightarrow V$ be a function that assigns each particle a position on the graph. We denote a particle system by the triple $(G,P,\psi)$.
\end{definition}

\begin{definition}[Adjacency Particle System]\label{def:adjacency_particle_system}
    Let $n,k \in \N^+$ and $D \in \Dk$. Let $\adj{D} = (V,E,w)$ be an adjacency graph of $D$. Let $O \coloneq \{v \in V \with w(v) = 1\}$ and $T \coloneq \{v \in V \with w(v) = 2\}$. Let $P \coloneq \{p_1,p_2,\dots p_{|T| + 2|O|}\}$ be a set of particles. Let $\psi\colon P \rightarrow V$ such that \begin{enumerate}
        \item for $v \in O, |\psi^{-1}(\{v\})| = 2$ and 
        \item for $v \in T, |\psi^{-1}(\{v\})| = 1$.
    \end{enumerate}
    Then the particle system $((V,E), P, \psi)$ is an adjacency particle system of $\adj{D}$.
\end{definition}

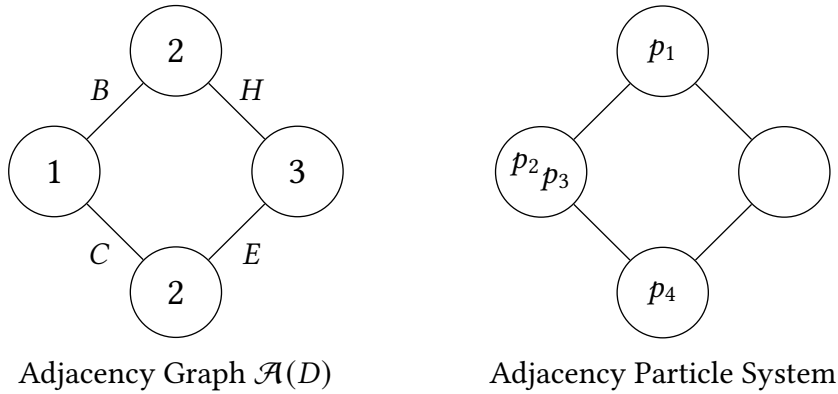
\begin{figure}
    \centering
    \begin{minipage}[t]{.45\textwidth}
        \centering
        \begin{tikzpicture}[line width=0.5pt, node distance=1.3cm]
        \def\r{1.6}
        \def\a{90}
        \def\off{90}
        \node [circle, draw,minimum size=1.2cm] (A) at ({\r*cos(0*\a+\off)},{ \r*sin(0*\a+\off)}) {\large$2$};
        \node [circle, draw,minimum size=1.2cm] (B) at ({\r*cos(\a+\off)},{ \r*sin(\a+\off)}) {\large$1$};
        \node [circle, draw,minimum size=1.2cm] (C) at ({\r*cos(2*\a+\off)},{ \r*sin(2*\a+\off)}) {\large$2$};
        \node [circle, draw,minimum size=1.2cm] (D) at ({\r*cos(3*\a+\off)},{ \r*sin(3*\a+\off)}) {\large$3$};

        \path (A) edge node[above,xshift=-0.2cm]{$B$} (B);
        \path (B) edge node[below,xshift=-0.2cm]{$C$}(C);
        \path (D) edge node[below,xshift=0.2cm]{$E$}(C);
        \path (D) edge node[above,xshift=0.2cm]{$H$} (A);
        
        \node at (0,-2.7){Adjacency Graph $\adj{D}$};
        \end{tikzpicture}
    \end{minipage}
    \begin{minipage}[t]{.45\textwidth}
        \centering
        \begin{tikzpicture}[line width=0.5pt, node distance=1.3cm]
        \def\r{1.6}
        \def\a{90}
        \def\off{90}
        \node [circle, draw,minimum size=1.2cm] (A) at ({\r*cos(0*\a+\off)},{ \r*sin(0*\a+\off)}) {};
        \node [circle, draw,minimum size=1.2cm] (B) at ({\r*cos(\a+\off)},{ \r*sin(\a+\off)}) {};
        \node [circle, draw,minimum size=1.2cm] (C) at ({\r*cos(2*\a+\off)},{ \r*sin(2*\a+\off)}) {};
        \node [circle, draw,minimum size=1.2cm] (D) at ({\r*cos(3*\a+\off)},{ \r*sin(3*\a+\off)}) {};

        \node  at (A) {$p_1$};
        \node [xshift=-0.2cm,yshift=0.1cm] at (B) {$p_2$};
        \node [xshift=0.2cm,yshift=-0.1cm] at (B) {$p_3$};
        \node  at (C) {$p_4$};
    
        \path (A) edge (B);
        \path (B) edge (C);
        \path (D) edge (C);
        \path (D) edge (A);
        \node at (0,-2.7){Adjacency Particle System};
        \end{tikzpicture}
    \end{minipage}
    \caption{Left: The adjacency graph of the dominating set in \Cref{fig:adjacency_graph_example}. Right: The corresponding adjacency particle system. Since there are two vertices with weight $2$ and one with weight $1$, the system has four particles.}
    \label{fig:particle_system_example}
\end{figure}

Something that will become relevant later is the distance between certain particles. We define what we mean by that here.
\begin{definition}[Particle distance]\label{def:particle_distance}
    Let $(G,P,\psi)$ be a particle system. For $p_1, p_2 \in P$ we define their distance $d(\cdot)$ as 
    \[
        d(p_1, p_2) = d_G(\psi(p_1), \psi(p_2))
    \]
    where $d_G(\cdot)$ is the hop distance in $G$. When the particle system is an adjacency particle system, the underlying graph is a cycle (see \Cref{def:adjacency_graph}). We then define the $(\text{counter-})$ clockwise distances $d_{\textrm{ccw}}(\cdot)$ and $d_{\textrm{cw}}(\cdot)$ as the length of the path such that $\psi(p_1)$ and $\psi(p_2)$ are both endpoints of this path and for all vertices $v_i$ on this path, $v_{i+1} = \ccw{v_i}$ or $\cw{v_i}$ respectively.
\end{definition}

As we hinted before, we can now rephrase \Cref{cor:weight_redundant} in light of this new model. \begin{lemma}\label{lem:particles_redundant}
    Let $n,k \in \N^+$ and $D \in \Dk$. Let $\adj{D} = (V,E,w)$ be an adjacency graph of this dominating set. Let $(C_k,P,\psi)$ be a respective adjacency particle system. Then $D$ is not minimal iff there exists an arc $A$ of length $2$ such that $|\{p \in P \with \psi(p) \in A \}| \geq 3$.
\end{lemma}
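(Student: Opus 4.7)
The plan is to reduce this lemma to the already-established \Cref{cor:weight_redundant} by using \Cref{def:adjacency_particle_system} as a dictionary between vertex weights in $\adj{D}$ and particle counts in the adjacency particle system. I would observe first that an arc of length $2$ in $C_k$ is, by \Cref{def:arc}, simply a pair of adjacent vertices $u,v \in V(C_k)$, i.e., an edge $\{u,v\} \in E(C_k)$. Counting particles according to \Cref{def:adjacency_particle_system}, a vertex $v \in V(C_k)$ carries exactly $2$ particles if $w(v) = 1$, exactly $1$ particle if $w(v) = 2$, and $0$ particles if $w(v) = 3$. Hence for any edge $\{u,v\}$, the number of particles on $\{u,v\}$ is a deterministic function of the pair $(w(u), w(v))$.

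For the forward direction, suppose $D$ is not minimal. By \Cref{cor:weight_redundant}, there is an edge $\{u,v\} \in E(C_k)$ whose weight pair $(w(u), w(v))$ is one of $(1,1), (1,2), (2,1)$. In these three cases the number of particles placed on $\{u,v\}$ equals $4, 3, 3$ respectively, so in each case at least $3$ particles lie on the arc $A = \{u,v\}$ of length $2$.

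For the converse, suppose some arc $A = \{u,v\}$ of length $2$ satisfies $|\{p \in P \with \psi(p) \in A\}| \geq 3$. Let $n_v = |\psi^{-1}(\{v\})|$ and $n_u = |\psi^{-1}(\{u\})|$, so $n_u + n_v \geq 3$. Since $n_u, n_v \in \{0,1,2\}$ by \Cref{def:adjacency_particle_system}, the only possibilities are $(n_u, n_v) \in \{(1,2),(2,1),(2,2)\}$, which translate back through the dictionary above to weight pairs $(w(u), w(v)) \in \{(2,1),(1,2),(1,1)\}$. Each of these configurations is exactly one of the three cases from \Cref{cor:weight_redundant}, so that corollary yields that $D$ is not minimal.

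The proof is essentially a case check, and I do not expect a significant obstacle; the only subtlety is to be precise that in \Cref{def:adjacency_particle_system} a weight-$3$ vertex carries no particles (so $(1,3)$ or $(3,1)$ configurations contribute at most $2$ particles and are safely excluded), and that the cycle $C_k$ has enough vertices for arcs of length $2$ to be well defined, which is guaranteed by $k \geq 3$ whenever $D$ is a proper dominating set of $C_n$ of size $k$ that admits a meaningful adjacency graph.
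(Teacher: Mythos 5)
Your proposal is correct and follows essentially the same route as the paper: both reduce the statement to \Cref{cor:weight_redundant} via the weight-to-particle-count dictionary from \Cref{def:adjacency_particle_system} ($w=1\mapsto 2$ particles, $w=2\mapsto 1$, $w=3\mapsto 0$) and check that the admissible weight pairs on an edge are exactly those yielding at least three particles. Your version merely spells out the case enumeration in both directions slightly more explicitly than the paper does.
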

\begin{proof}
    An arc of length $2$, i.e., an induced $P_2$, consists of exactly two vertices $x,y$ connected by an edge. By definition of the adjacency particle system, for $v \in V$ with $w(v) = 1$ we have $|\{p \in P \with \psi(p) = v\}| = 2$, for $w(v) = 2, |\{p \in P \with \psi(p) = v\}| = 1$ and for $w(v) = 3$, the set is empty. Those combinations that result in at least $3$ particles in this arc are exactly those of \Cref{cor:weight_redundant}, namely $w(x)=1 \land w(y) = 1, w(x) = 2 \land w(y)=1$ and $w(x)=1 \land w(y) = 2$. Since the existence of such an arc is then, by definition, equivalent to the existence of an edge with these weight restrictions, the Lemma follows.
\end{proof}

\begin{definition}[Adjacency Particle Sequence]\label{def:adjacency_particle_sequence}
    Let $n,k \in \N^+$. Let $(D_i \in \Dk)_{i \in \N}$ be a sequence of dominating sets, where $D_{i+1}$ corresponds to a neighbor move (see \Cref{def:neighbor_moves}) in $D_i$. Let $(w_i)_{i \in \N}$ be the adjacency weight sequence as defined in \Cref{def:adjacency_weight}. Let $(C_k, P,\psi_0)$ be an adjacency particle system of $\adj{D_0} = (V,E, w_0)$. We now consider $w_1$. Let $\{u,v\} \in E$ be the edge, where $w_0(u) \neq w_1(u)$ and $w_0(v) \neq w_1(v)$. Let \Wlog $w_1(u) = w_0(u) + 1$ and $w_1(v) = w_0(v) - 1$. By the definition of the weight sequence and the movability of vertices, we have $w_0(u) \leq 2$ and $w_0(v) \geq 2$. Let $p',p'' \in P$. We distinguish between two possible cases:
    \begin{enumerate}
        \item $w_0(u) = 2$ which implies $\psi_0^{-1}(\{u\}) = \{p'\}$. Then \[
        \psi_1 \colon P \rightarrow V, p \mapsto \begin{cases}
            v & \text{if } p = p', \\
            \psi_0(p) & \text{else.}
        \end{cases}
        \]
        \item $w_0(u) = 1$ which implies $\psi_0^{-1}(\{u\}) = \{p', p''\}$. Then we first flip a fair coin $C \sim \Bern{\frac{1}{2}}$. With this, we define \[
        \psi_1\colon P \rightarrow V, p \mapsto \begin{cases}
            v & \text{if } p = p' \text{ and } C = 0, \\
            v & \text{if } p = p'' \text{ and } C = 1, \\
            \psi_0(p) & \text{else.}
        \end{cases}
        \]
    \end{enumerate}
    Now, for $i \in \N^+$, we define $\psi_i$ with respect to $\psi_{i-1}$ according to these rules. We call $(\psi_i)_{i \in \N}$ the adjacency particle sequence.
\end{definition}

For a complete example, how the respective adjacency weight sequence and adjacency particle sequence of a dominating set sequence are defined, we again refer to \Cref{fig:sequences}.

\input{rls-ds/sequences}

Our aim with this particle system is to model how vertices with weight $2$ interact with other vertices. For the first case, with weight exactly $2$, when its weight is increased, that of its neighbor is decreased. When that weight was $3$, it now becomes $2$. We can say, that the weight $2$ particle moved to its neighbor. Thus, in the first case, we leave all particles apart from this one the same and assign it a new position, namely that of its neighbor. In the case that the weight of its neighbor was $2$, then it has now become $1$. Since it had weight $2$ before, there was already a particle on it. Since we want to model weight $1$ vertices with two particles, we again move the particle to this new position. 
In the other case, a vertex with weight $1$ changes its weight. Since it can only increase by one, it becomes $2$. By definition of the adjacency particle system, we have to remove a particle from this vertex. We choose which particle to move with a fair coin flip, as we want the particles to move freely. A deterministic choice, like always moving one particle to the clockwise and the other to the counterclockwise neighbor would complicate our analysis later on.

We now observe how these particles behave with respect to time.

\begin{lemma}\label{lem:particle_movement}
    Let $n,k \in \N^+$. Let $(D_t \in \Dk)_{t \in \N}$ be a sequence of dominating sets resulting from mutations of the RLS (see \Cref{alg:rls}). Let $(w_t)_{t \in \N}$ be an adjacency weight sequence to this sequence. Let $(\psi_t)_{t \in \N}$ be an adjacency particle sequence with a fixed initial particle system $(C_k, P, \psi_0)$. For $t \in \N$, let $M_t$ denote the event that $D_t$ is minimal. Further, for $p \in P$, let $PCN_t(p)$ denote the event that $\psi_{t+1}(p) = \cw{\psi_t(p)}$ and $PCCN_t(p)$ that $\psi_{t+1}(p) = \ccw{\psi_t(p)}$. Then it holds for $p \in P$ that \[
    \Pr{PCN_t(p)}[M_t, w_t(\psi_t(p)) = 2] =\Pr{PCCN_t(p)}[M_t, w_t(\psi_t(p)) = 2]= \frac{1}{4n}
    \] and
     \[
     \Pr{PCN_t(p)}[M_t, w_t(\psi_t(p)) = 1] =\Pr{PCCN_t(p)}[M_t, w_t(\psi_t(p)) = 1]= \frac{1}{8n}.
     \]
\end{lemma}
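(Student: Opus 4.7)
The plan is to observe that under $M_t$ no flip operation is accepted, so every change of the particle configuration must come from a unique swap move, and then to compute the probability of exactly that swap. By \Cref{lem:minimal_redundant} a minimal dominating set is irredundant, so removing any vertex destroys domination (strictly increasing the fitness) and adding any vertex strictly increases the cardinality. Hence only the swap branch of \Cref{alg:rls} can modify $D_t$, and every accepted modification is a neighbor move in the sense of \Cref{def:neighbor_moves}. Now write $v = \psi_t(p)$ and let $a_r \in D_t$ be the dominating-set vertex corresponding, via the edge-to-vertex duality of \Cref{def:adjacency_graph}, to the edge $\{v,\cw{v}\}$ of $C_k$ in $\adj{D_t}$. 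From the weight-change rule stated just before \Cref{def:adjacency_weight}, exactly one direction of neighbor move of $a_r$ increases $w_t(v)$ by $1$ and decreases $w_t(\cw{v})$ by $1$; call this the cw move of $a_r$. Inspecting \Cref{def:adjacency_particle_sequence}, this is the unique move whose induced weight change can relocate $p$ to $\cw{v}$: every other move either leaves $w_t(v)$ unchanged (so $p$ stays put) or changes $w_t(v)$ together with $w_t(\ccw{v})$ (so $p$ can only land at $\ccw{v}$).

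Next I would verify that this cw move is both fired and accepted. \Cref{cor:weight_redundant} together with $M_t$ forbids $(w_t(v), w_t(\cw{v})) \in \{(1,1),(1,2),(2,1)\}$, so if $w_t(v) = 2$ then $w_t(\cw{v}) \geq 2$, and if $w_t(v) = 1$ then $w_t(\cw{v}) = 3$. In both cases the corresponding graph-neighbor of $a_r$ in $C_n$ is at graph-distance $\geq 2$ from $a_r$ and is therefore not in $D_t$, so the $\overline{x_v}$ check of \Cref{alg:rls} passes. Moreover the cw-movability condition of \Cref{def:neighbor_moves}, $d(a_r, \ccw[D_t]{a_r}) = w_t(v) \leq 2$, holds, so the resulting $D_{t+1}$ remains a dominating set of the same cardinality and the RLS accepts it. The probabilities then multiply: $\tfrac{1}{2}$ for entering the swap branch, $\tfrac{1}{n}$ for picking $a_r$, and $\tfrac{1}{2}$ for the correct graph-neighbor, yielding $\tfrac{1}{4n}$ for the cw move to occur. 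If $w_t(v) = 2$ then $p$ is the only particle at $v$ and this already equals $\Pr{PCN_t(p)}$, giving $\tfrac{1}{4n}$; if $w_t(v) = 1$ the fair coin flip of \Cref{def:adjacency_particle_sequence} selects $p$ out of the two particles at $v$ with probability $\tfrac{1}{2}$, giving $\tfrac{1}{8n}$. The argument for $PCCN_t(p)$ is the mirror image, involving the edge $\{\ccw{v}, v\}$ and the ccw move of the corresponding dominating-set vertex $a_l$, and it yields the same two values.

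The step I expect to be the main obstacle is the bridge between the combinatorial minimality assumption and the algorithmic execution: using \Cref{cor:weight_redundant} to guarantee simultaneously that the chosen graph-neighbor of $a_r$ lies outside $D_t$ (so the swap actually fires) and that the post-swap set is still dominating (so the RLS accepts it). Once that correspondence between \Cref{def:adjacency_particle_sequence} and the accepted transitions of \Cref{alg:rls} is nailed down, the probability calculation itself is a one-line product.
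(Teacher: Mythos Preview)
Your proposal is correct and follows essentially the same route as the paper's proof: both identify the unique dominating-set vertex whose swap induces the desired weight change on $v$ and $\cw{v}$, use minimality (via the forbidden weight patterns of \Cref{cor:weight_redundant}) to confirm the swap is valid, and then multiply the same factors $\tfrac{1}{2}\cdot\tfrac{1}{n}\cdot\tfrac{1}{2}$, with the extra coin-flip factor $\tfrac{1}{2}$ in the $w_t(v)=1$ case. Your write-up is actually a bit more explicit than the paper's in ruling out flips under $M_t$ and in checking that the swap target lies outside $D_t$ so the $\overline{x_v}$ guard passes, but these are refinements of the same argument rather than a different approach.
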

\begin{proof}
    We first consider the event that $w_t(\psi_t(p)) = 2$.
    Let $v \coloneq \psi_t(p)$ and \Wlog $u \coloneq \cw{v}$. Since $D_t$ is minimal and $w_t(v) = 2$, we must have $w(u) \geq 2$. Then there are three vertices $a,b,c \in D_t$ such that $b = \cw[D_t]{a}$ and $c = \cw[D_t]{b}$ and further $w_t(v) = d(a,b) = 2$ and $w_t(u) = d(b,c) \geq 2$. 
    We now argue that the event that a particle moves to its clockwise neighbor is exactly the event that the RLS swaps $b$ with its clockwise neighbor in iteration $t + 1$.
    \par Suppose the RLS chooses to swap $b$ with its clockwise neighbor in the underlying cycle graph in iteration $t + 1$. Since $b$ is clockwise movable, this swap succeeds. We further have $w_{t+1}(v) = d(a,\cw{b}) = d(a,b) + 1 = w_t(v) + 1$ and similarly $w_{t+1}(u) = w_t(u) - 1$. Therefore, by definition of the adjacency particle sequence, we have $\psi_{t+1}(p) = u = \cw{v} = \cw{\psi_{t}(p)}$. Now, to have $\psi_{t+1}(p) = \cw{\psi_{t}(p)}$, by the definition of the adjacency particle sequence, we need to have $w_{t+1}(v) = w_t(v) + 1$ and $w_{t+1}(u) = w_t(u) - 1$. Since $b$ is the shared vertex between the distances $w(u)$ and $w(v)$, it has to be swapped with its clockwise neighbor in the underlying cycle graph in order for both distances to change. 
    
    Let $S$ be the event that the RLS chooses to perform a swap and let $BCN$ be the event that vertex $b$ and its clockwise neighbor are swapped in iteration $t+1$. We then have \begin{align*}
    &\Pr{PCN_t(p)}[M_t,w_t(v) = 2]
    = \Pr{BCN}[S,M_t, w_t(v) = 2] \cdot \Pr{S}[M_t,w_t(v) = 2] \\
    &= \frac{1}{2n} \cdot \frac{1}{2} = \frac{1}{4n}.
    \end{align*}
    Exactly the same calculations can be made to confirm the probability of the move to the counterclockwise neighbor. 
    \par We now consider the second event, $w_t(\psi_t(p)) = w_t(v) = 1$. Then we necessarily have $w_t(u) = 3$, since we assume $D_t$ to be minimal. We further have $\psi_t^{-1}(v) = \{p,p'\}$ by definition of the particle system. There are again vertices $a,b,c$ as defined above, now with $w_t(v) = d(a,b) = 1$ and $w_t(u) = d(b,c) = 3$.
    As argued before, $p$ or $p'$ move in iteration $t+1$ if and only if the RLS swaps vertex $b$ with its clockwise neighbor in iteration $t+1$. However, now $p$ moves iff this event occurs and $p$ is chosen to move instead of $p'$. Again, let $S$ and $BCN$ be defined as above. Further, let $P$ be the event that $p$ is chosen to move by the adjacency particle sequence. With this we have 
    \begin{align*}
        &\Pr{PCN_t(p)}[M_t,w_t(v) = 1] = \Pr{BCN, P}[S,M_t,w_t(v) = 1] \cdot \Pr{S}[M_t,w_t(v) = 1]\\
        &= \Pr{P}[BCN,S,M_t,w_t(v) = 1]\cdot \Pr{BCN}[S,M_t,w_t(v) = 1]\cdot \Pr{S}[M_t,w_t(v) = 1] \\
        &= \frac{1}{2} \cdot \frac{1}{2n} \cdot \frac{1}{2} = \frac{1}{8n}.
    \end{align*}
    The result for the counterclockwise neighbor again follows the same way.
\end{proof}

With \Cref{lem:particles_redundant,lem:particle_movement} we have now formalized the intuition developed at the beginning of this section. When looking at a sequence of dominating sets that arises from random mutations by the RLS, we can now view this through the lens of particles randomly moving on a cycle. \Cref{lem:particles_redundant} suggests that in order to determine the redundancy of a dominating set and even how close or far it is from being redundant, one can consider those arcs that are occupied by three or more particles. Since we want to bound the time until the RLS first samples a redundant dominating set, it seems intuitive to observe the behavior of the shortest arc that has this property.

\end{section}

\begin{section}{Random Walks and Runtime Bound}

In the previous section, we discovered how our particle systems behaved with respect to a sequence of dominating set. We saw that each particle essentially performs an independent random walk, but once you group them in arcs, you can determine how these individual changes affect the global property of redundancy. Our goal is to ultimately model this process as a Markov Chain, since this greatly simplifies the analysis. By the Markov property, the transition probabilities in one state may only be dependent on the state itself. Considering the length or other properties of the shortest arc is thus not feasible, as there are many other arcs changing in the underlying adjacency particle system, which can influence the transition probabilities to new states. We therefore introduce a new process which will be easier to analyze.

\begin{definition}[Fixed Arc Process]\label{def:fixed_arc_process}
    Let $n,k \in \N^+$. Let $(D_t \in \Dk)_{t \in \N}$ be a sequence of dominating sets resulting from mutations of the RLS. Let $\stocProc{\psi}$ be an adjacency particle sequence to this sequence of dominating sets. Consider the initial particle system $(C_k, P, \psi_0)$. Fix any three particles $P' \coloneq \{p_1, p_2, p_3\} \subseteq P$. By the definition of particle systems, there is a particle such that the other two occupy different vertices. We call this particle $p_1$. We may assume $\ccw[P']{p_1} = p_2$ and $\cw[P']{p_1} = p_3$. For $t \in \N$ we define $A_t$ as \[
    A_t \coloneq \left( \dccw{p_1}{p_2}, \dcw{p_1}{p_3}\right).
    \] where the distance is defined with respect to $\psi_t$.
    When the underlying particle system changes, we may relabel certain particles as $p_1,p_2$ or $p_3$.  For that, we distinguish three main cases: \begin{enumerate}
        \item $\psi_{t-1}(p_1), \psi_{t-1}(p_2)$ and $\psi_{t-1}(p_3)$ are all pairwise distinct. Then each particle occupies a different vertex. Suppose that each of them is also the only particle at this vertex. Regardless of how any of them move, we keep their labeling.
        Suppose now that another particle, different from the other two, occupies the same vertex as one of our fixed particles. The following definition will also apply in the subsequent cases, if not specified otherwise. Suppose there is another particle $p_4$ with $\psi_{t-1}(p_1) = \psi_{t-1}(p_4)$. If in iteration $t$, either $p_1$ or $p_4$ moves, we call whichever particle moved $p_1$, so we simply \say{swap} the particles out. This applies to $p_2$ and $p_3$ as well, if there are other particles at the same location as them.
        \item $\psi_{t-1}(p_1) \neq \psi_{t-1}(p_2) = \psi_{t-1}(p_3)$. Then $p_2$ and $p_3$ have the same (counter-)clockwise distance to $p_1$. If either $p_2$ or $p_3$ moves, we relabel the particle with the smaller counterclockwise distance as the counterclockwise neighbor $p_2$ and the other as the clockwise neighbor $p_3$ respectively.
        \item $\psi_{t-1}(p_1) = \psi_{t-1}(p_2) \neq \psi_{t-1}(p_3)$. The case where $\psi_{t-1}(p_1) = \psi_{t-1}(p_3) \neq \psi_{t-1}(p_2)$ is handled the same way. Suppose either $p_1$ or $p_2$ move towards $p_3$, i.e., their clockwise distance $\dcw{\cdot}{p_3}$ decreases. Then whichever particle moved is relabeled to $p_1$ and the particle that remained is now $p_3$. If either move away from $p_3$ in the same sense, the particle that moved is relabeled as $p_3$ while the remaining particle is now $p_1$.
    \end{enumerate}
    The process $\stocProc{A}$ is defined with fixed $A_0$ where each transition for $t \in \N^+$ from $A_{t-1}$ to $A_t$ is determined by the rules above. We call it the fixed arc process.
\end{definition}

An example of the fixed arc process with respect to a dominating set sequence is shown in \Cref{fig:sequences}.

For this process to be well defined, the adjacency particle system of the initial dominating set has to contain at least three particles. We only consider sequences that can, at some point, contain a redundant dominating set. By \Cref{lem:particles_redundant}, in the particle system of such a set, there is an arc that contains at least three particles. Since by definition of the adjacency particle sequence, the number of particles never differs from that of the initial system, we can be sure that the initial system contains at least three particles as well. 
 
We now analyze how this process behaves with respect to the underlying sequence of dominating sets. We will ultimately use these insights to prove the following theorem:

\begin{restatable}{theorem}{firstRedundancy}\label{thm:first_redundancy_time}
    Let $n,k \in \N^+$. Let $(D_t \in \Dk)_{t\in \N}$ be a sequence of dominating sets resulting from mutations of the RLS. Let $T \coloneq \minS{t \in \N \with D_t \text{ is redundant}}$. Then \[
        \Ex{T} \leq 4n \cdot \frac{(k+1)(k+2)}{2}  \cdot \left(8\log^2(k) + 6(\log(k)+1) \right)
    \]
\end{restatable}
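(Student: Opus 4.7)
The plan is to realize the fixed arc process $\stocProc{A}$ as a lazy random walk on the triangle grid $T_k$ that is reversible with uniform stationary distribution, and then combine the commute time identity (\Cref{thm:commute_time_identity}) with the effective resistance bound from \Cref{lem:resistance_triangle_origin}.

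First, identify the state space with $V(T_k)$. After the relabeling rules of \Cref{def:fixed_arc_process}, the pair $(a, b) = (\dccw{p_1}{p_2}, \dcw{p_1}{p_3})$ satisfies $a, b \geq 0$ and $a + b \leq k$, giving a bijection with the vertex set of $T_k$. The six possible single-particle moves correspond exactly to the six neighbor types in $T_k$: a clockwise or counterclockwise move of $p_2$ changes only the first coordinate by $\pm 1$; a move of $p_3$ changes only the second by $\pm 1$; and a move of $p_1$ changes both coordinates simultaneously by $(\pm 1, \mp 1)$, which are precisely the diagonal edges of the triangle grid.

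Next, compute the transition probabilities. By \Cref{lem:particle_movement}, a tracked particle alone at a weight-$2$ vertex moves in a given direction with probability $1/(4n)$; two particles sharing a weight-$1$ vertex each move with probability $1/(8n)$, and the relabeling rule in \Cref{def:fixed_arc_process} attributes either of these moves to the tracked particle, giving a total of $1/(4n)$ again. Hence every edge of $T_k$ incident to the current state is traversed with probability exactly $1/(4n)$ per RLS iteration, and the remaining mass is a self-loop. The chain is therefore reversible with uniform stationary distribution $\pi \equiv 1/\chi_k$. We realize it as an even network on $T_k$ by placing unit conductance $c(e) = 1$ on every non-self-loop edge and absorbing the laziness into self-loops of conductance $4n - \deg(x)$; then $c(x) = 4n$ for every $x$ and, by \Cref{clm:cardinality_triangle},
\[
c_G = \sum_{x \in V(T_k)} c(x) = 4n \cdot \chi_k = 4n \cdot \frac{(k+1)(k+2)}{2}.
\]

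Applying \Cref{thm:commute_time_identity} together with \Cref{lem:resistance_triangle_origin} (with $R = 1$), for any starting state $s$ and the origin $o$ of $T_k$ we obtain
\[
t_{o \leftrightarrow s} = c_G \cdot \effRes{o}{s} \leq 4n \cdot \frac{(k+1)(k+2)}{2} \cdot \bigl(8 \log^2(k) + 6(\log(k) + 1)\bigr).
\]
By \Cref{lem:particles_redundant}, $D_t$ is redundant as soon as the three tracked particles lie in an arc of length $2$, which holds at the states immediately adjacent to $o$ in $T_k$, such as $(0,1)$ and $(1,0)$. Since the commute time dominates the one-way hitting time from any state to $o$, and any trajectory reaching $o$ necessarily passes through the adjacent redundant cluster, we conclude $\Ex{T} \leq t_{o \leftrightarrow s}$ and the theorem follows. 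The main obstacle will be the careful bookkeeping at the boundary of $T_k$ (the coordinate axes $a = 0$, $b = 0$, and the hypotenuse $a + b = k$, where two tracked particles coincide or wrap around the underlying cycle $C_k$): one must verify that the relabeling rules in \Cref{def:fixed_arc_process} pair up the movements of $p_1$ with $p_2$ (respectively $p_2$ with $p_3$) correctly, so that every boundary transition still occurs with probability $1/(4n)$ and the uniform-conductance network structure needed for the resistance bound is preserved.
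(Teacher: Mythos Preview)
Your proposal is correct and follows essentially the same route as the paper: reversibility of the fixed arc process with uniform stationary measure on $T_k$, the commute time identity, and \Cref{lem:resistance_triangle_origin}. Two minor differences are worth noting. First, the paper normalizes conductances as $c(x,y)=\pi(x)P(x,y)=\frac{1}{\chi_k}\cdot\frac{1}{4n}$ so that $c_G=1$ and the uniform resistance is $R=4n\chi_k$, whereas you set $c(e)=1$ and absorb the $4n$ into $c(x)$; the product $c_G\cdot\effRes{o}{s}$ is the same either way. Second, your claim of a ``bijection'' with $V(T_k)$ is slightly off: the three corners $(0,0),(k,0),(0,k)$ are impossible states (three particles at one vertex), and the paper makes this explicit by introducing a separate lazy random walk on the full $T_k$ as a dominating process (\Cref{clm:triangle_fixed_arc}) before invoking the commute time identity. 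Your final paragraph correctly anticipates that the real work is the boundary analysis, which the paper carries out in \Cref{lem:inner_movement,lem:k_border_movement,lem:zero_border_movement}.
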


The first step in this proof is establishing why we can analyze this new, related process to prove bounds for the RLS.

\begin{lemma}\label{lem:fixed_arc_ds_bound}
    Let $n,k \in \N^+$ and $(D_t \in \Dk)_{t \in \N}$ be a sequence of dominating sets resulting from mutations of the RLS. Let $T \coloneq \minS{t \in \N \with D_t \text{ is redundant}}$ as in \Cref{thm:first_redundancy_time}. Let $\stocProc{A}$ be a corresponding fixed arc process as in \Cref{def:fixed_arc_process}. Let $H \coloneq \{(0,1),(1,0),(k-1,0),(k-1,1), (0,k-1),(1,k-1)\}$ and $T_A \coloneq \minS{t \in \N \with A_t \in H}$. Then \[
    \E{T} \leq \E{T_A}.
    \]
\end{lemma}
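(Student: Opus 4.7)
The plan is to show that $T \leq T_A$ pointwise on every sample path, so that the claimed inequality of expectations follows by monotonicity. The key observation is that every element of $H$ encodes a configuration in which the three tracked particles $p_1, p_2, p_3$ all lie within an arc of length $2$ in $C_k$, at which point \Cref{lem:particles_redundant} immediately forces $D_t$ to be non-minimal, which by \Cref{lem:minimal_redundant} is equivalent to being redundant.

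First I would translate each of the six coordinates in $H$ back into a concrete statement about $\psi_t$, using that in $C_k$ the distances satisfy $\dccw{u}{v}+\dcw{u}{v}=k$ for distinct vertices and vanish when $u=v$. In particular, $\dccw{u}{v}=k-1$ is equivalent to $v=\cw{u}$, and $\dcw{u}{v}=k-1$ is equivalent to $v=\ccw{u}$. Under this reading, the coordinates $(0,1)$, $(k-1,0)$, and $(k-1,1)$ all place $\psi_t(p_2)$ and $\psi_t(p_3)$ into $\{\psi_t(p_1),\,\cw{\psi_t(p_1)}\}$, while the three remaining coordinates $(1,0)$, $(0,k-1)$, and $(1,k-1)$ are symmetric and place them into $\{\ccw{\psi_t(p_1)},\,\psi_t(p_1)\}$. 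In every case, the three particles sit in a common induced $P_2$ of $C_k$, i.e., an arc of length $2$ in the sense of \Cref{def:arc}.

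Since the relabeling rules of \Cref{def:fixed_arc_process} only ever swap a tracked label onto another particle of the system, $p_1, p_2, p_3$ always refer to three pairwise distinct particles of $P$. Applying \Cref{lem:particles_redundant} then gives that $A_t \in H$ implies $D_t$ is redundant, so on every sample path
\[
T \;=\; \min\{t\in\N : D_t \text{ is redundant}\} \;\leq\; \min\{t\in\N : A_t \in H\} \;=\; T_A,
\]
and taking expectations yields $\E{T}\leq\E{T_A}$. The only genuine care required is the wrap-around bookkeeping for the coordinates involving $k-1$ and the verification that the relabeling never collapses the three tracked labels onto fewer than three particles; beyond that, the lemma is essentially a translation between the arc-based characterization of non-minimality in \Cref{lem:particles_redundant} and the coordinate-based description of closeness encoded by $H$.
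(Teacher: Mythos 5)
Your proposal is correct and follows essentially the same route as the paper's proof: translate each state in $H$ (using the wrap-around identity for the $k-1$ coordinates) into the statement that the three tracked particles occupy a common arc of length $2$, invoke \Cref{lem:particles_redundant} to conclude $D_{T_A}$ is redundant, and deduce $T \leq T_A$ pointwise. Your explicit remark that the relabeling rules never collapse the three tracked labels onto fewer than three distinct particles is a small point the paper leaves implicit, but otherwise the arguments coincide.
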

\begin{proof}
    Consider the time $T_A$. Then $A_{T_A} \in H$. We consider the cases $(0,1), (k-1,0)$ and $(k-1,1)$ as the other three are symmetric. For the case $(0,1)$, this has $p_1$ and $p_2$ at the same vertex and $p_3$ at its clockwise neighbor. That means there is an edge with a total of three particles on its endpoints, thus an arc of length $2$ containing three particles. By \Cref{lem:particles_redundant} this implies that $D_{T_A}$ is redundant. For $(k-1, 0)$ we have $p_1$ and $p_3$ at the same vertex. Since $p_2$ has a counterclockwise distance of $\dccw{p_1}{p_2} = k-1$, the endpoint of this counterclockwise path is the clockwise neighbor of this vertex. This is because the underlying cycle only has $k$ vertices and a path of length $k-1$ also has to contain $k$ vertices. We again have an arc of length $2$ containing three particles. The final case $(k-1,1)$ is quite similar. Again, $p_2$ is at the clockwise neighbor of the vertex $p_1$ occupies. Since $p_3$ also has clockwise distance $1$ to $p_1$, it is at the same vertex as $p_2$, so there is again an arc of length $2$ containing three particles. Since all of these cases imply that $D_{T_A}$ is redundant, we have $T \leq T_A$ and thus $\E{T} \leq \E{T_A}$.
\end{proof}

Of course there might have been three other particles that have met before our fixed three have, but since we only want to establish an upper bound, this will suffice. We may also assume that throughout the trajectory of this process, for $t \leq T_A$ as in \Cref{lem:fixed_arc_ds_bound}, $D_t$ is irredundant, i.e., minimal, as otherwise $T < T_A$ as well.

We can bound the time until the RLS first samples a redundant dominating set by bounding the time until the corresponding fixed arc process hits the set defined in \Cref{lem:fixed_arc_ds_bound}. To do so, we determine how this process behaves over time and what transitions it can undertake. For this, we define the neighborhood of a state of the fixed arc process:

\begin{definition}\label{def:stoch_neighborhood}
    Let $\stocProc{A}$ be a fixed arc process as in \Cref{def:fixed_arc_process}. For $t \in \N$ we define the neighborhood $\stNeigh{A_t}$ as \[
        \stNeigh{A_t} \coloneq \{(a,b) \in \N^2 \with \Pr{A_{t+1} = (a,b)}[A_0,\dots,A_t] > 0\}
    \]
\end{definition}

We note that we can limit our analysis to states of $A_t$ with $|A_t| \leq k$. Suppose $|A_t| > k$, then $\dccw{p_1}{p_2} + \dcw{p_1}{p_3} > k$. Since the cycle $C_k$ only has $k$ edges, the counterclockwise and clockwise paths of $p_1,p_2$ and $p_1,p_3$ have to have common edges. This means that \Wlog $p_3$ would have a smaller counterclockwise distance to $p_1$ than $p_2$. For this to happen, $p_2$ and $p_3$ would have to have been at the same vertex at some time and $p_3$ would have moved counterclockwise towards $p_1$. In this case however, by definition, $p_3$ would have been relabeled as $p_2$, making such states impossible. 

We will consider all possible definitions of the next state as described in \Cref{def:fixed_arc_process}. We  first analyze the movement of the process when in the first case of the process definition, i.e., when all particles are at different vertices. 
\begin{lemma}\label{lem:inner_movement}
    Let $\stocProc{A}$ be a fixed arc process as in \Cref{def:fixed_arc_process}. Let $t \in \N$ and $A_t = (x,y)$ such that $2 \leq |A_t| < k$ and $x,y > 0$. Then
    \[
     \stNeigh{A_t} = \{(x-1, y), (x,y-1), (x+1,y-1), (x-1,y+1),(x+1,y),(x,y+1)\} \] and for $n \in \stNeigh{A_t}$ \[
    \Pr{A_{t+1} = n}[A_0,\dots,A_t] = \Pr{A_{t+1} = n}[A_t] =  \frac{1}{4n}.
    \] 
\end{lemma}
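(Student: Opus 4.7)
The plan is to show, under the hypothesis $x, y > 0$ and $|A_t| < k$, that exactly six single-particle moves are possible from $A_t$ and each happens with probability $1/(4n)$ per iteration.

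First, I would check that all three fixed particles $p_1, p_2, p_3$ sit on pairwise distinct vertices of $C_k$: $x > 0$ separates $p_1$ from $p_2$, $y > 0$ separates $p_1$ from $p_3$, and the strict inequality $x + y < k$ prevents the clockwise arc from $p_1$ to $p_3$ and the counterclockwise arc from $p_1$ to $p_2$ from sharing their far endpoint, so $p_2 \neq p_3$ as well. We are therefore always in Case~1 of \Cref{def:fixed_arc_process}.

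Second, I would argue that at most one of these three particles can change position in a single RLS step. By \Cref{def:adjacency_particle_sequence}, a single neighbor move of the dominating set changes exactly one pair of adjacent edge-weights in the intermediate adjacency graph, which alters the position of exactly one particle (deterministically if the common endpoint has weight $2$, or via a coin flip among the two co-located particles if the common endpoint has weight $1$). Any RLS iteration that is not a successful neighbor move touching an edge adjacent to one of $p_1, p_2, p_3$ leaves $A_t$ unchanged. Consequently, the six single-particle moves enumerated below are pairwise disjoint events in any one iteration.

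Third, I would enumerate the geometric effect of each single-particle move. A counterclockwise move of $p_1$ decreases $\dccw{p_1}{p_2}$ by one and increases $\dcw{p_1}{p_3}$ by one, giving $(x-1, y+1)$; a clockwise move of $p_1$ symmetrically gives $(x+1, y-1)$. A move of $p_2$ alters only its distance to $p_1$: towards $p_1$ yields $(x-1, y)$, away yields $(x+1, y)$. Likewise a move of $p_3$ yields $(x, y-1)$ or $(x, y+1)$. These are exactly the six elements claimed for $\stNeigh{A_t}$.

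Finally, I would invoke \Cref{lem:particle_movement} to assign probability $1/(4n)$ to each move. The only subtle point, and the main obstacle, is the case in which $\psi_t(p_i)$ has weight $1$, so two particles occupy that vertex: then \Cref{lem:particle_movement} gives only $1/(8n)$ per individual particle. However, the relabeling convention of Case~1 of \Cref{def:fixed_arc_process} stipulates that whichever of the two co-located particles moves is relabeled as $p_i$, so the event ``$A_t$ changes in the corresponding direction via $p_i$'' is the union of two disjoint events of probability $1/(8n)$ each, again totalling $1/(4n)$; the weight-$2$ case contributes $1/(4n)$ directly. Because the transitions depend only on the present particle configuration and not on earlier history, the unconditional expression coincides with $\Pr{A_{t+1} = n}[A_t]$, completing the argument.
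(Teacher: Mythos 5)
Your proposal is correct and follows essentially the same route as the paper's proof: enumerate the six single-particle moves (moves of $p_1$ changing both coordinates, moves of $p_2$ or $p_3$ changing one), and obtain the probability $\frac{1}{4n}$ from \Cref{lem:particle_movement}, handling a possible co-located fourth particle by summing two disjoint $\frac{1}{8n}$ events under the relabeling convention. Your explicit derivation that $x,y>0$ and $|A_t|<k$ force the three particles onto distinct vertices is a small added justification the paper leaves implicit, but it does not change the argument.
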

\begin{proof}
    We consider particles $p_1,p_2,p_3$ all at different vertices of the graph. We first consider the cases where both components of the pair $A_t$ change.  Suppose \Wlog $p_1$ moves to its clockwise neighbor. Then it moves along the clockwise path to $p_3$, resulting in a shorter distance to $p_3$. At the same time, the counterclockwise distance to $p_2$ gets larger, resulting in the new state $(x+1, y-1)$. When $p_1$ is the only particle at its vertex, by \Cref{lem:particle_movement} this happens with probability exactly $\frac{1}{4n}$. Suppose there was another particle $p_4$ at this vertex. Then the probability of reaching this new state corresponds to either $p_1$ or $p_4$ moving to the clockwise neighboring vertex, since we relabel either to $p_1$ in that case. Let $O,F$ be the events of the corresponding moves. Then, with the second statement of \Cref{lem:particle_movement}, we have \begin{align*}
        &\Pr{A_{t+1} = (x+1,y-1)}[A_0,\dots,A_t] = \Pr{O \lor F}[A_0,\dots,A_t] \\
        &= \Pr{O}[A_t] + \Pr{F}[A_t] = \frac{1}{8n} + \frac{1}{8n} = \frac{1}{4n}.
    \end{align*}
    The probabilities of the case that the next state is $(x-1,y+1)$ can be determined in the same manner, as only the direction of the needed moves changed. The cases where only one component of $A_{t}$ changes are all quite similar, so we only consider $(x+1,y)$ and $(x-1,y)$. This corresponds to the counterclockwise distance from $p_1$ to $p_2$ in- or decreasing. In case that it gets larger, $p_2$ moved away from $p_1$, i.e., to the neighbor in the counterclockwise direction. In the other case, $p_2$ moved to its neighbor in the clockwise direction. Both moves have probability $\frac{1}{4n}$ by \Cref{lem:particle_movement} when $p_2$ is the only particle on its vertex. By similar analysis as above, if there was another particle on its vertex, due to relabeling, the probability of either change is still $\frac{1}{4n}$.
    As particles can only move to their immediate neighbors and only one of them can move at a time, there are six possible moves of three particles. Therefore the considered cases are the only possible ones and the neighborhood of such $A_t$ is exactly the described set. 
\end{proof}

For the second case, we suppose that $p_2,p_3$ occupy the same vertex. We note that this implies $|A_t| = k$. This is because $\dccw{p_1}{p_2}$ and $\dcw{p_1}{p_3}$ is the number of edges on the respective (counter-)clockwise path. If $|A_t| = k$ then all edges are accounted for and the paths must end at the same vertex. Because of this, there are also the impossible cases $A_t =(0,k)$ and $A_t = (k,0)$. A path of $k$ vertices can have at most $k-1$ edges, thus no induced path on a cycle can have $k$ edges, by definition. If we were to loosen the definition, this also corresponds to all three particles at the same vertex, which is impossible by definition of the adjacency particle system (see \Cref{def:adjacency_particle_system}). All other $k - 1$ cases of $|A_t| = k$ correspond to a proper particle system, where $p_2,p_3$ are at the same vertex which is different from the vertex of $p_1$. Thus the following lemma covers all such cases.

\begin{lemma}\label{lem:k_border_movement}
    Let $\stocProc{A}$ be a fixed arc process as in \Cref{def:fixed_arc_process}. Let $t \in \N$ and $A_t = (x,y)$ such that $|A_t| = k$ and $x,y > 0$. Then
    \[
     \stNeigh{A_t} = \{(x-1, y), (x,y-1), (x+1,y-1), (x-1,y+1)\} \] and for $n \in \stNeigh{A_t}$ \[
    \Pr{A_{t+1} = n}[A_0,\dots,A_t] = \Pr{A_{t+1} = n}[A_t] =  \frac{1}{4n}.
    \] 
\end{lemma}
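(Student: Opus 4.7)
The plan is to mirror the structure of \Cref{lem:inner_movement} but carefully track what happens when $p_2$ and $p_3$ sit on the same vertex, which is exactly the regime forced by $|A_t|=k$. First I would justify that $|A_t|=k$ implies $\psi_t(p_2)=\psi_t(p_3)$: the counterclockwise path from $p_1$ to $p_2$ and the clockwise path from $p_1$ to $p_3$ together have $k$ edges on a cycle of $k$ edges, so their endpoints must coincide. Since by \Cref{def:adjacency_particle_system} at most two particles share a vertex (when the adjacency weight is $1$), this vertex $v \coloneq \psi_t(p_2) = \psi_t(p_3)$ has weight $w_t(v) = 1$, and no fourth particle is co-located with $p_2, p_3$. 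The vertex $\psi_t(p_1)$ is distinct from $v$ because $x,y>0$, and it has weight $1$ or $2$ in the adjacency graph.

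Next I would enumerate each possible mutation of the RLS that changes the process. The moves of $p_1$ are handled just as in \Cref{lem:inner_movement}: a clockwise move of $p_1$ shifts the state to $(x+1,y-1)$, a counterclockwise move to $(x-1,y+1)$. By \Cref{lem:particle_movement} the probability of each is $\frac{1}{4n}$ when $p_1$ sits alone on its vertex; if $p_1$ shares its vertex with some other particle $p_4$, then either of them moving in the given direction also realizes the event (by the relabeling rule in Case 1 of \Cref{def:fixed_arc_process}), and the two $\frac{1}{8n}$ probabilities from the weight-$1$ case of \Cref{lem:particle_movement} combine via disjointness to give $\frac{1}{4n}$ again.

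The crux of the lemma is the analysis of moves of $p_2$ or $p_3$ from the shared vertex $v$. Since $w_t(v)=1$, \Cref{lem:particle_movement} gives each of $p_2,p_3$ a probability of $\frac{1}{8n}$ to move to its clockwise neighbor and $\frac{1}{8n}$ to its counterclockwise neighbor. Suppose one of them moves clockwise. Then one particle now sits at $v$ (with counterclockwise distance $x$ to $p_1$) and one sits at $\cw{v}$ (with counterclockwise distance $x+1$). By the relabeling rule in Case 2 of \Cref{def:fixed_arc_process}, the particle with the smaller counterclockwise distance is $p_2$ and the other is $p_3$, so the new state is $(x,y-1)$, not $(x+1,y)$. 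By disjointness of \emph{``$p_2$ moves CW''} and \emph{``$p_3$ moves CW''} the total probability is $\frac{1}{4n}$. Symmetrically, any counterclockwise move of $p_2$ or $p_3$ produces state $(x-1,y)$ with probability $\frac{1}{4n}$. This relabeling phenomenon is the main obstacle and the whole reason the neighborhood shrinks from the six of \Cref{lem:inner_movement} to four; one must be careful that the two ``forbidden'' states $(x+1,y)$ and $(x,y+1)$ never appear, precisely because the moving particle at $v$ always gets relabeled so as to push one of the distances \emph{down}.

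Finally I would note that the RLS chooses its operator and vertex uniformly in each iteration, and the coin flip in \Cref{def:adjacency_particle_sequence} is independent of history, so every transition probability depends only on $A_t$; this gives the Markov form $\Pr{A_{t+1}=n}[A_0,\dots,A_t] = \Pr{A_{t+1}=n}[A_t]$. Since the four enumerated transitions each have probability $\frac{1}{4n}$ and any other change in particle positions either keeps $A_t$ fixed (e.g.\ a particle not in $\{p_1,p_2,p_3\}$ moves, or a neighbor $p_4$ swaps in without changing position) or is ruled out by minimality of $D_t$ and the adjacency sequence rules, the claimed neighborhood and transition probabilities follow.
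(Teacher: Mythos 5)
Your proposal is correct and follows essentially the same route as the paper: reduce to \Cref{lem:particle_movement}, treat the moves of $p_1$ exactly as in \Cref{lem:inner_movement}, and observe that the Case-2 relabeling of the co-located pair $p_2,p_3$ (each contributing $\frac{1}{8n}$ per direction) collapses the six naive moves into four states of probability $\frac{1}{4n}$ each, with $(x+1,y)$ and $(x,y+1)$ never occurring. The only quibble is a harmless orientation slip (the particle moving to $\cw{v}$ has counterclockwise distance $x-1$, not $x+1$, so a clockwise move yields $(x-1,y)$ rather than $(x,y-1)$), which by symmetry does not change the neighborhood or the probabilities.
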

\begin{proof}
    We have already established why this definition covers all cases where $\psi_t(p_2)= \psi_t(p_3) \neq \psi_t(p_1)$. Again, we first consider the movement of $p_1$. Its movement results in subsequent states $(x+1,y-1)$ or $(x-1,y+1)$. The probabilities can be determined in the exact same way as in \Cref{lem:inner_movement}. We therefore only consider $(x-1,y)$ as $(x,y-1)$ is again symmetric. Since $p_2$ and $p_3$ are at the same vertex, when either moves in any direction, it will be relabeled accordingly. Thus, the state transition to $(x-1,y)$ corresponds to either particle moving to the counterclockwise neighboring vertex. Since two vertices occupy the current vertex, either move has probability $\frac{1}{8n}$, so the probability of any one of them moving and thus the resulting state transition is again $\frac{1}{4n}$. There are again six possible moves of these three particles, however the (counter-)clockwise moves of $p_2$ and $p_3$ result in the same state, so the neighborhood of $A_t$ cannot contain more than four states. 
\end{proof}

The last case has one of $p_2,p_3$ at the same vertex as $p_1$. Thus, one of the distances is equal to $0$. Both cannot be $0$, as this would again have three particles on the same vertex. 

\begin{lemma}\label{lem:zero_border_movement}
    Let $\stocProc{A}$ be a fixed arc process as in \Cref{def:fixed_arc_process}. Let $t \in \N$ and $A_t = (x,y)$ such that $2 \leq |A_t| < k$ and either $x=0$ or $y= 0$. Then, if $x = 0$,
    \[
     \stNeigh{A_t} = \{(x,y-1), (x,y+1), (x+1,y-1), (x+1,y)\} \] and for $n \in \stNeigh{A_t}$ 
     \[
    \Pr{A_{t+1} = n}[A_0,\dots,A_t] = \Pr{A_{t+1} = n}[A_t] =  \frac{1}{4n}.
    \]  If $y = 0$, \[
     \stNeigh{A_t} = \{(x-1,y), (x+1,y), (x-1,y+1), (x,y+1)\} 
     \] with identical probabilities.
\end{lemma}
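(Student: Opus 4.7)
The plan is to proceed analogously to Lemmas \ref{lem:inner_movement} and \ref{lem:k_border_movement}, enumerating the possible single-step moves of the three tracked particles and applying the relabeling rule of \Cref{def:fixed_arc_process}. I first handle the case $x = 0$; then $p_1$ and $p_2$ share a vertex $v$, while $p_3$ sits at the vertex $v_3$ which is $y$ steps clockwise from $v$. Since $|A_t| < k$ and $y > 0$, the three relevant positions are far enough apart that no wrap-around around the cycle is possible.

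I would enumerate the possible moves by which tracked particle moves and in which direction, and read off the next state. If $p_3$ moves counterclockwise toward $v$, then $\dcw{p_1}{p_3}$ decreases by $1$ while $\dccw{p_1}{p_2}$ remains $0$, yielding $(0, y-1)$; symmetrically a clockwise move of $p_3$ yields $(0, y+1)$. If one of $p_1, p_2$ moves clockwise (toward $p_3$), the mover lands on the clockwise neighbor of $v$ and, by case 3 of the relabeling rule, becomes the new $p_1$ with the stayer as the new $p_2$, so the state becomes $(1, y-1)$. If instead one of $p_1, p_2$ moves counterclockwise, the mover is labeled as $p_2$ and the stayer as $p_1$, giving $(1, y)$. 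These four cases exhaust all single-particle moves among $\{p_1, p_2, p_3\}$ and establish the claimed neighborhood.

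The probability computation is the same aggregation argument as in \Cref{lem:inner_movement}: at $v$ the adjacency weight equals $1$ because $p_1, p_2$ coincide, so by \Cref{lem:particle_movement} each of these two particles has probability $\tfrac{1}{8n}$ of moving in a specified direction, and since RLS performs at most one swap per iteration these events are disjoint; summing yields $\tfrac{1}{4n}$ for the transitions to $(1, y-1)$ and $(1, y)$. At $v_3$ the particle $p_3$ is either alone (weight $2$, contributing $\tfrac{1}{4n}$ per direction directly by \Cref{lem:particle_movement}) or shares the vertex with a particle from $P \setminus P'$ (weight $1$, again aggregating to $\tfrac{1}{4n}$ after absorbing the other particle via the relabeling convention), covering the transitions to $(0, y-1)$ and $(0, y+1)$. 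Because each probability depends only on the present configuration, the Markov form $\Pr{A_{t+1} = n}[A_0, \dots, A_t] = \Pr{A_{t+1} = n}[A_t]$ follows immediately.

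The case $y = 0$ is handled by a symmetric argument that swaps the roles of clockwise and counterclockwise and exchanges $p_2$ with $p_3$. The main subtlety to get right is verifying that the relabeling prescribed by case 3 of \Cref{def:fixed_arc_process} really produces a first coordinate of $1$ (and not $k-1$): this amounts to always taking the moved particle to be the one \emph{adjacent} to the stayer via the short path around the cycle, which is exactly the prescription once it is unpacked in this boundary regime.
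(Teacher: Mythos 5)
Your proposal is correct and follows essentially the same route as the paper: enumerate the six single-particle moves, apply the case-3 relabeling convention, and aggregate the $\tfrac{1}{8n}$ per-particle probabilities from \Cref{lem:particle_movement} into $\tfrac{1}{4n}$ per state transition. Your reading of the relabeling (mover becomes $p_2$ when moving away from $p_3$, stayer stays $p_1$) is the internally consistent one and yields exactly the states $(x+1,y-1)$ and $(x+1,y)$ that the paper's proof also arrives at.
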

\begin{proof}
    Suppose \Wlog $x = 0$. This has $p_1$ and $p_2$ at the same vertex. We first look at the movement of $p_3$. If $p_3$ moves clockwise, the clockwise distance to $p_1$ will increase, if it moves counterclockwise, it will decrease. This corresponds to subsequent states $(x,y+1)$ and $(x,y-1)$. As argued in \Cref{lem:inner_movement}, regardless of $p_3$ being the only particle at its vertex, the probability of each state transition is $\frac{1}{4n}$. We now consider the moves of $p_1$ and $p_2$. If either move towards $p_3$, this particle will be relabeled as $p_1$. After this move, the resulting state will be $(1, y - 1)$ as $p_1$ is now closer to $p_3$. Since $x$ was $0$ before, this is the state $(x+1,y-1)$. Similarly, if either particle moves away from $p_3$, it will be relabeled as $p_3$. Then the resulting state is $(1,y)$ as $p_1$ still has the same distance to $p_3$ as before, but now the distance between $p_1$ and $p_2$ increased. As analyzed in \Cref{lem:k_border_movement}, since the two particles occupy the same vertex, either state transition has probability $\frac{1}{4n}$. Again, all of the six possible moves are accounted for, so the neighborhood of $A_t$ is exactly the one proposed. The analysis of $y = 0$ is identical and one can see how the proposed neighborhood arises when changing the meaning of $x$ and $y$.
\end{proof}

We note that the state graph of this fixed arc process resembles a slightly altered Triangle Grid graph (see \Cref{def:triangle_grid}) where the vertices $(0,0)$, $(0,k)$ and $(k,0)$ have been removed as these are the impossible states. We therefore claim to be able to analyze the fixed arc process by analysing a random walk on the Triangle Grid.

\begin{definition}[Lazy Random Walk on the Triangle]\label{def:random_walk_triangle}
    For $n \in \N^+$ and a constant $c \in \R_{\geq 6}$ the  $\frac{1}{c}$-random walk on the $n$-triangle is defined as a stochastic process $\stocProc{X}$ where, for $t \in \N$, $X_t$ takes values in $V(T_n)$ and for $X_t = u$ and $v \in \oNeigh{X_t}$\[
        P(u,v) = \Pr{X_{t+1} = v}[X_t] = \frac{1}{c}.
    \]
    We note $P(v,v) = \Pr{X_{t+1} = X_t}[X_t] = 1 - \sum_{v \in \oNeigh{X_t}} \Pr{X_{t+1} = v}[X_t] = 1 - \frac{|\oNeigh{v}|}{c}$.
\end{definition}

\begin{lemma}\label{clm:triangle_fixed_arc}
    Let $n,k \in \N^+$ and let $\stocProc{A}$ be a fixed arc process for a sequence of dominating sets $(D_t \in \Dk)_{t \in \N}$. Let $H \coloneq \{(0,1),(1,0),(k-1,0),(k-1,1), (0,k-1),(1,k-1)\}$ be the target set as defined in \Cref{lem:fixed_arc_ds_bound}. Let $T_A \coloneq \minS{t \in \N \with A_t \in H}$. Let $\stocProc{X}$ be  a $\frac{1}{4n}$-random walk on the $k$-triangle. Let $H_\triangle \coloneq \{(0,0), (k,0), (0,k)\}$  and $T_\triangle \coloneq \minS{t \in \N \with X_t \in H_\triangle}$. Then it holds that $\E{T_A} \leq \E{T_\triangle}$.
\end{lemma}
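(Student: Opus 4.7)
The plan is a coupling argument. Apart from not being allowed to visit the three corner states in $H_\triangle$, the fixed arc process $\stocProc{A}$ behaves exactly like a $\frac{1}{4n}$-random walk on $T_k$; moreover, the only neighbours of the three corners of $T_k$ are precisely the six states in $H$, so any path into $H_\triangle$ must pass through $H$ first. Hence by the time the random walk first hits $H_\triangle$, any coupled copy of $\stocProc{A}$ must already have hit $H$, which is all that is needed.

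The first and main step is to verify that for every state $u \in V(T_k) \setminus (H_\triangle \cup H)$, the transition probabilities of $\stocProc{A}$ at $u$ agree with those of the $\frac{1}{4n}$-random walk on $T_k$ at $u$. This reduces to a short case distinction using \Cref{lem:inner_movement}, \Cref{lem:k_border_movement}, \Cref{lem:zero_border_movement} and \Cref{def:triangle_grid}: strictly interior states ($x,y>0$ and $|u|<k$) have the same six neighbours in both models; hypotenuse states ($x+y=k$ with $x,y>0$) have the same four neighbours; and leg states $(0,y)$ or $(x,0)$ with $|u|\geq 2$ also have the same four neighbours, because the diagonal moves to vertices outside $V(T_k)$ are excluded in both models. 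In every such case each neighbour is reached with probability $\frac{1}{4n}$, and consequently the self-loop probabilities also coincide.

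With this in hand I would couple $\stocProc{X}$ to $\stocProc{A}$ by setting $X_0 := A_0$ and, while $A_t \notin H$, generating both transitions from the same random draw so that $X_{t+1} = A_{t+1}$. By the previous step this is well-defined and yields $X_t = A_t$ for every $0 \leq t \leq T_A$. For $t > T_A$ one lets $\stocProc{X}$ continue independently as a $\frac{1}{4n}$-random walk on $T_k$, which preserves its marginal distribution.

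Because the fixed arc process never visits a corner and $H \cap H_\triangle = \emptyset$, we obtain $X_t = A_t \notin H_\triangle$ for every $0 \leq t \leq T_A$. Hence $T_\triangle \geq T_A + 1$ under the coupling, which gives $\E{T_A} \leq \E{T_\triangle}$. The only genuine obstacle is the first step, the careful verification of the transition probabilities at the hypotenuse and leg states; once that is in place, the coupling and the conclusion are immediate from the observation that the corners of $T_k$ are reachable only through $H$.
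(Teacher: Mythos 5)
Your proposal is correct and follows essentially the same route as the paper: both match the transition probabilities of the fixed arc process with those of the $\frac{1}{4n}$-random walk at all states outside $H$ (via \Cref{lem:inner_movement}, \Cref{lem:k_border_movement} and \Cref{lem:zero_border_movement}), couple the two processes until $A_t$ hits $H$, and conclude by observing that the corners in $H_\triangle$ are adjacent only to states in $H$, so the walk must pass through $H$ before reaching a corner. Your write-up is somewhat more explicit about the coupling construction than the paper's, but the underlying argument is identical.
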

\begin{proof} \Cref{lem:inner_movement,lem:k_border_movement,lem:zero_border_movement} show that all states $(x,y)$ not in $H$ have the same neighborhood as the vertex $(x,y)$ in the Triangle Grid graph and also identical transition probabilities as in the $\frac{1}{4n}$-random walk. Thus, as long as $A_t$ has not hit $H$, every move taken by $A_t$ can be taken by $X_t$ as well, with the same probability. For $s \in H$, when $A_{T_A} = X_{T_A} = s$, $X_t$ may continue moving arbitrarily, as we stop the process $\stocProc{A}$ at time $T_A$. We therefore simply follow the structure of the Triangular Grid. Let \Wlog $X_{T_\triangle} = (0,0)$. Since the only adjacent vertices are $(0,1)$ and $(1,0)$, the process $\stocProc{X}$ had to have $X_{T'_\triangle} = (0,1)$ or $X_{T'_\triangle} = (1,0)$, where $T'_\triangle < T_\triangle$. This has $T_A \leq T_\triangle$ and thus $\E{T_A} \leq \E{T_\triangle}$.
\end{proof}

We can now use this to prove \Cref{thm:first_redundancy_time}.
\firstRedundancy*

\begin{proof}
We first note that for some $m \in \N^+$ and $\alpha \in \R_{\geq 6}$, the  $\frac{1}{\alpha}$-random walk on the $m$-Triangle is reversible with respect to the uniform distribution $\pi$ over the vertex set of $T_m$. Recall from \Cref{clm:cardinality_triangle} that $\chi_m = |V(T_m)|$. Then, for $x,y \in V(T_m)$ with $x \sim y$, \[
    \pi(x) P(x,y) = \frac{1}{\chi_m}\cdot \frac{1}{\alpha} = \pi(y)P(y,x)
\] and with $x \not\sim y$ \[
    \pi(x) P(x,y) = 0 = \pi(y) P(y,x).
\]

We thus turn to the study of reversible Markov Chains as networks, as introduced in \Cref{chap:3}.
As demonstrated in \Cref{sec:markov_networks}, to model the $\frac{1}{4n}$-random walk on the $k$-triangle, we can now define a network $(T_{k}, c)$ with conductance function $c$, where for $x,y \in V(T_k)$ and $x \sim y$, $c(x,y) = \pi(x) P(x,y)$. Further, we define $c(x,x) = \pi(x)P(x,x)$. Since $\pi$ is uniform and all transitions which are not loops have probability $\frac{1}{4n}$, the conductances for all non-loop edges are $C \coloneq \frac{1}{\chi_{k}}\cdot\frac{1}{4n}$. This means that $(T_k, c)$ is an even network with uniform resistance $R = \chi_{k}\cdot4n$.

Recall that for the stochastic process $\stocProc{D}$ resulting from mutations of the RLS, $T \coloneq \{t \in \N \with D_t \text{ is minimal}\}$ (see \Cref{thm:first_redundancy_time}). For the lazy random walk $\stocProc{X}$ on the $n$-Triangle (\Cref{def:random_walk_triangle}), we defined $H_\triangle \coloneq \{(0,0),(0,k),(k,0)\}$ and $T_\triangle \coloneq \minS{t \in \N \with X_t \in H_\triangle}$ (\Cref{clm:triangle_fixed_arc}). We now give an upper bound for $\Ex{T_\triangle}$.
    
We note again that the commute time $t_{a \leftrightarrow b}$ between nodes $a$ and $b$ in a network, as defined in \Cref{def:commute_time}, is an upper bound for the expected time to first reach $b$ starting from $a$, as it also includes the expected time to return to $a$. With the commute time identity (\Cref{thm:commute_time_identity}) we therefore have
\begin{align*}  
    \E{T_\triangle}[X_0]
    &\leq \E{\minS{t_{s\leftrightarrow X_0} \with s \in H_\triangle}}[X_0] \\
    &\leq \E{t_{(0,0)\leftrightarrow X_0}}[X_0] \\
    &= t_{(0,0)\leftrightarrow X_0} =c_G\effRes{(0,0)}{X_0} = \effRes{(0,0)}{X_0},
    \intertext{since for every $v \in V(T_k)$, $c(v) = \pi(v) = \frac{1}{\chi_{k}}$, we have $c_G = \chi_{k} \cdot \frac{1}{\chi_{k}} =1$. To bound $\effRes{(0,0)}{X_0}$ we use \Cref{lem:resistance_triangle_origin}:}
    &\leq R \cdot \left(8\log^2(k) + 6(\log(k)+1) \right)
    \intertext{As this holds for any vertex, so especially any $X_0$, this removes the need to condition on $X_0$ and we now have } 
    \Ex{T_\triangle} &\leq R \cdot \left(8\log^2(k) + 6(\log(k)+1) \right) \\
    &= 4n \cdot \chi_{k} \cdot \left(8\log^2(k) + 6(\log(k)+1) \right).
    \intertext{We finalize the bound with \Cref{clm:cardinality_triangle}.} 
    &\leq 4n \cdot \frac{(k+1)(k+2)}{2}  \cdot \left(8\log^2(k) + 6(\log(k)+1) \right).
    \end{align*}
    By \Cref{lem:fixed_arc_ds_bound} and \Cref{clm:triangle_fixed_arc} we know that $\Ex{T} \leq \Ex{T_\triangle}$. This proves the theorem.
\end{proof}

This establishes a bound on the first time the RLS samples a redundant dominating set. Let $D_T$ be the dominating set induced by the state of the RLS at iteration $T$. Then $\redVert{D_T} \neq \emptyset$, so the RLS could randomly flip one of the vertices in this set, resulting in a dominating set with strictly smaller cardinality. To bound the expected time this process takes, we need to bound the probability that the RLS in a redundant state actually flips a redundant vertex before sampling a minimal state again.
\begin{lemma}\label{lem:rls_retries}
    Let $n \in \N^+$ and $(D_t \in \D)_{t \in \N}$ be a sequence of dominating sets resulting from mutations of the RLS. Let $S_1 \coloneq \minS{t \in \N \with D_t \text{ is redundant}}$ and $E_1 \coloneq \minS{t \geq S_1 \with D_t \text{ is irredundant}}$. Let $S_i \coloneq \minS{t \geq E_{i-1} \with D_t \text{ is redundant}}$ and $E_i \coloneq \minS{t \geq S_{i} \with D_t \text{ is irredundant}}$ respectively. If no such $t$ exists, we define $S_i = E_i = +\infty$. For $k \in \N^+$, let $TR_k \coloneq |\{S_i \with i \in \N, |D_{S_i}| = k, S_i < +\infty\}|$. Then, for $k \leq |D_0|$, it holds that \[
        \E{TR_k} \leq 2.
    \] Define $SC_k \coloneq \minS{t \in \N \with |D_t| < k}$ and $I_k \coloneq \{i \in \N \with |D_{E_i}| = k\}$. Then, for $i \in I_k$ we have \[
        \E{E_i - S_i} \leq 4n \text{ and } \E{SC_k - S_{\max\{I\}+1}} \leq 4n.
    \]
\end{lemma}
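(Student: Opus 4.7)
The three claims concern different quantities in the behaviour of the RLS on (ir)redundant dominating sets, and I would prove them in turn.

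For the first claim, I would leverage that $\fitness$ only accepts non-increasing cardinalities, so the indices $i$ with $|D_{S_i}| = k$ form a contiguous block. All but the last of these phases must end without a cardinality drop (otherwise the next phase would start at size $<k$); such phases end with an accepted swap that yields an irredundant size-$k$ set. Letting $N$ count these ``swap-terminated'' phases at size $k$, the plan is to show that starting from any redundant state with $|\redVert{D}| \geq 1$, the next phase-ending event is a cardinality-reducing flip with probability at least $\frac{1}{2}$. A redundant-vertex flip occurs with per-iteration probability $\frac{|\redVert{D}|}{2n} \geq \frac{1}{2n}$, and a structural argument using the adjacency-graph characterisations from \Cref{lem:distance_redundant} and \Cref{cor:weight_redundant} bounds the per-iteration probability of an irredundancy-creating swap by the same $\frac{1}{2n}$. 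Stochastic domination by a geometric random variable with success probability $\frac{1}{2}$ then gives $\E{N} \leq 1$, hence $\E{TR_k} = 1 + \E{N} \leq 2$.

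For the second and third claims, the common ingredient is a uniform lower bound on the per-iteration probability of a phase-ending event. For $i \in I_k$, cardinality monotonicity forces the entire phase to stay at size $k$ and end via an irredundancy-creating swap; for the final phase indexed $\max\{I\}+1$ in claim 3, the phase must end with a cardinality-reducing flip (otherwise it would also belong to $I_k$). In both cases, combining the $\frac{1}{2n}$ bound on redundant-vertex flips with the complementary structural bound on irredundancy-creating swaps yields a per-iteration probability of at least $\frac{1}{4n}$ of some phase-ending event. The waiting time for such an event is therefore stochastically dominated by a geometric distribution with mean $4n$, giving $\E{E_i - S_i} \leq 4n$ and $\E{SC_k - S_{\max\{I\}+1}} \leq 4n$ respectively.

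The main obstacle will be the structural bound underpinning all three arguments: showing that the per-iteration probability of an irredundancy-creating swap from a redundant size-$k$ dominating set is at most $\frac{1}{2n}$. An irredundancy-creating swap $(u,v)$ must simultaneously preserve domination and eliminate every pair of adjacent adjacency-graph vertices of combined weight $\leq 3$ (per \Cref{lem:adjacency_redundant}). I expect this to require a careful case analysis on the local arc configurations around redundant vertices, enumerating which $(u,v)$ pairs the RLS selects with positive probability and arguing that these are few enough to match the $\frac{1}{2n}$ bound.
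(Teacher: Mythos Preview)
Your approach is essentially the paper's: both model the redundant phase as a race between a cardinality-reducing flip and an irredundancy-destroying swap, obtain $p_{\mathrm{flip}}\geq p_{\mathrm{swap}}$ to get geometric domination for $TR_k$, and bound the phase length by the waiting time for a terminating event. The paper packages this in a small five-state auxiliary Markov chain (start, flip-branch, swap-branch, and two absorbing states) and solves the resulting linear equations.

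The structural bound you flag as the main obstacle is handled much more simply in the paper than the case analysis you anticipate. The paper does not enumerate swap configurations; it just observes that the worst case for the swap-to-irredundant probability is a single arc of length $4$ containing exactly three vertices of $D$ (any richer redundant configuration---more vertices in an arc, or several arcs---cannot be made irredundant by one swap), and in that case only the two endpoint-outward moves can break redundancy, giving at most $\tfrac{1}{n}$ conditional on the swap operator. For the phase-length bound there is a small twist you miss: here one needs a \emph{lower} bound on the swap-to-irredundant probability, and the only safe one is $0$; the paper sets that transition to $0$ in the auxiliary chain and computes the expected absorption time into the flip state, obtaining $4n$. Your stated $\tfrac{1}{4n}$ is not derived (the flip alone already gives $\tfrac{1}{2n}$), and your split into swap-ending versus flip-ending phases implicitly conditions on the phase outcome, which would require extra care; the paper's unconditional absorption-time argument sidesteps this.
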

\begin{proof}
    We consider what happens when the current state of the RLS is a redundant dominating set. Then, with probability $\frac{1}{2}$ each, it decides to flip or to move a vertex. If it decides to flip, it can choose to flip one of the redundant vertices. Otherwise, the flip will be rejected, and the next state would still be a redundant dominating set. Recall \Cref{lem:distance_redundant} and suppose there is only a single arc that meets this requirement. Then, in the worst case, the length of this arc is $4$ and there are exactly three vertices on it. This has one vertex at each endpoint of the arc. If they would be moved, the resulting dominating set would no longer be redundant. Otherwise, this configuration would be left unchanged, and the next state would still be a redundant dominating set. This trial process can be modeled by a Markov Chain $\stocProc{E}$. This has the state space $V \coloneq \{G,F,S,M,B\}$ where the states represent the RLS flipping a redundant vertex $(G)$, the flip choice $(F)$, the start of the trial $(S)$, the move choice $(M)$, and the RLS moving a vertex from the redundant configuration $(B)$. We now analyze the transition probabilities. By definition of the Algorithm, we have $\Pr{E_{t+1} = F}[E_t = S] = \Pr{E_{t+1} = M}[E_t=S] = \frac{1}{2}$. Assuming the RLS chose to flip, there is at least one redundant vertex, so this event has a probability of at least $\frac{1}{n}$. We define $\Pr{E_{t+1} = G}[E_t =F] = \frac{1}{n}$, which can only worsen the probability of a successful trial. If the chosen vertex is not redundant, the flip will be rejected, so $\Pr{E_{t+1}= S}[E_t = F] = 1 - \Pr{E_{t+1} = G}[E_t =F]$. On the other hand, given that the RLS chose to move a vertex and assuming the worst possible scenario, there is a $\frac{1}{n}\cdot \frac{1}{2}$ probability of choosing a vertex on one endpoint of the arc and moving it to its neighbor not on the arc. 
    This has event has probability at most $\frac{1}{n}$ and we define $\Pr{E_{t+1} = B}[E_t = M] = \frac{1}{n}$ as before. Again, in the other case $\Pr{E_{t+1} = S}[E_t = M] = 1 - \Pr{E_{t+1} = B}[E_t = M]$. Since this Markov Chain only models one trial, we define $\Pr{E_{t+1} = G}[E_t=G]= \Pr{E_{t+1} = B}[E_t=B] = 1$ so $B$ and $G$ are absorbing states. 
    
    \begin{figure}
        \centering
        \begin{tikzpicture}[->,>=stealth,shorten >=2pt, line width=0.5pt, node distance=2cm]
        \node [circle, draw] (S) {$S$};
        \node [circle, draw] (M) [right of=S,xshift=0cm]{$M$};
        \node [circle, draw] (F) [left of=S] {$F$};
        \node [circle, draw] (G) [left of=F] {$G$};
        \node [circle, draw] (B) [right of=M] {$B$};
        \path (S) edge[bend left = 10] node[above]{$\frac{1}{2}$}(M);
        \path (S) edge[bend right = 10] node[above]{$\frac{1}{2}$}(F);
        \path (F) edge node[above]{$\frac{1}{n}$} (G);
        \path (M) edge node[above]{$\frac{1}{n}$} (B);
        \path (M) edge[bend left = 20] node[below]{$ 1 - \frac{1}{n}$} (S);
        \path (F) edge[bend right = 20] node[below]{$1 - \frac{1}{n}$} (S);
    \end{tikzpicture}
        \caption{The Trial Markov Chain (self-loops are omitted)}
        \label{fig:trial_markov_chain}
    \end{figure}
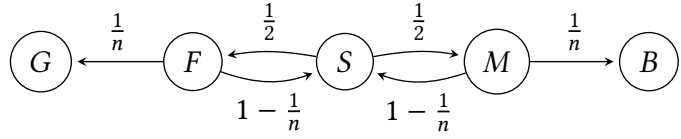
    \newpage
    For a state $v \in V$, we define $p_G(v) \coloneq \Pr{\exists t \in \N: E_t = G}[E_0 = v]$. We are now interested in $p_G(S)$, the probability of the chain ending up in $G$ when started from $S$. By the law of total probability, we have \begin{align}
        \notag p_G(S) &= \Pr{\exists t \in \N: E_t = G}[E_0=S, E_1 = F]\cdot\Pr{E_1 = F}[E_0 = S] \\\notag &~~~~+ \Pr{\exists t \in \N: E_t = G}[E_0=S, E_1 = M]\cdot\Pr{E_1 = M}[E_0 = S] \\\notag
        &= \Pr{\exists t \in \N: E_t = G}[E_0=S, E_1 = F]\cdot\frac{1}{2} \\&~~~~+ \Pr{\exists t \in \N: E_t = G}[E_0=S, E_1 = M]\cdot\frac{1}{2}.\notag
        \intertext{By the Markov property, it is irrelevant whether we started the chain at $S$ and then arrived at $F$ or simply started the chain at $F$:}
        &= \frac{1}{2}\cdot p_G(F) + \frac{1}{2}\cdot p_G(M) \label{eqn:pgs}.
    \end{align}
    In the same manner we can determine $p_G(F)$ and $p_G(M)$ to be \begin{align*}
        p_G(F) &= \frac{1}{n} + \left(1 - \frac{1}{n}\right)\cdot p_G(S) \text{ and} \\
        p_G(M) &= \left(1 - \frac{1}{n} \right)\cdot p_G(S).
    \intertext{
    Substituting these into \eqref{eqn:pgs}, we get }
        p_G(S) &= \frac{1}{2} \cdot \left(\frac{1}{n} + \left(1 - \frac{1}{n}\right)\cdot p_G(S)\right) + \frac{1}{2}\cdot \left(\left(1 - \frac{1}{n} \right)\cdot p_G(S)\right) \\
        &= \frac{1}{2n} + \left(1 - \frac{1}{n} \right)\cdot p_G(S).
        \intertext{By subtracting $\left(1-\frac{1}{n}\right)\cdot p_G(s)$ from both sides and rearranging we have} 
        \frac{1}{2n} &= \left(1 - \left(1 -\frac{1}{n}\right)\right)\cdot p_G(S) = \frac{1}{n}\cdot p_G(S)
        \intertext{This finally yields}
        p_G(S) &= \frac{n}{2n} = \frac{1}{2}.
    \end{align*}
    For $k \leq |D_0|$ and $r \in \N$, $TR_k = r$ if and only if the RLS sampled a redundant dominating set and sampled an irredundant one with same cardinality some time after $r-1$ times and successfully reduced the cardinality in the $r$-th iteration of this process. This can be modeled as $r$ independent runs of our Markov Chain, and $TR_k$ is therefore stochastically dominated by a geometric random variable $G \sim \Geo{\frac{1}{2}}$. This has $\E{TR_k} \leq \E{G} = 2$.

    We bound the expected trial length in a similar fashion. Since we cannot know what state the Markov Chain will end up in, we consider the expected time until the chain is absorbed in either $G$ or $B$. However, now we cannot use an upper bound for the transition probability from $M$ to $B$, since in reality, the chain may be less likely to move to $B$ and rather return to $S$. This would result in shorter paths to an absorbing state being weighted with too much probability mass and thus not yield an upper bound. Therefore, we use the only lower bound we can admit for this transition probability, namely the trivial bound $\Pr{E_{t+1} = B}[E_t = M] \geq 0$. By assuming equality, we isolate $B$, resulting in a restricted set of absorbing states when starting from $S$ and thus in a longer time until the chain is absorbed. For $v \in V$ we now let $f_v \coloneq \E{\minS{t \in \N \with E_t \in \{G,B\}}}[E_0 = v]$ denote the expected time until absorption when starting the chain from $v$. We note that $f_G = f_B = 0$. We now get the system of equations \begin{align}
        f_S &= \frac{1}{2}\left(1 + f_F\right) + \frac{1}{2} \left(1 + f_M\right) \label{eq:fs} \\
        f_F &= \frac{1}{n}(1+f_G) + \left(1 - \frac{1}{n}\right)(1 + f_S) = \frac{1}{n} + \left(1 - \frac{1}{n}\right)(1 + f_S) \label{eq:ff} \\
        f_M &= 1 + f_S \label{eq:fm}
        \intertext{Substituting \eqref{eq:ff} and \eqref{eq:fm} into \eqref{eq:fs}, we get} 
        \notag f_S &= \frac{1}{2}\left(1 + \frac{1}{n} + \left(1 - \frac{1}{n}\right)(1 + f_S)\right) + \frac{1}{2} \left(1 + 1 + f_S\right) \\
        \notag &= \frac{1}{2} + \frac{1}{2}\left(\frac{1}{n} + 1 - \frac{1}{n} + \left(1-\frac{1}{n}\right)f_S\right) + 1 + \frac{f_S}{2} \\
        \notag &= 2 + \frac{1}{2}\left(1-\frac{1}{n}\right)f_S + \frac{f_S}{2} \\
        \notag &= 2 + f_S - \frac{1}{2n}f_S
        \intertext{Solving for $f_S$ we get}
        \notag f_S &= 4n.
    \end{align}
    This is an upper bound for the actual expected time until absorption, which proves the second part of the lemma. 
\end{proof}

We can now use this to bound the expected time the RLS takes when starting with a dominating set of cardinality at most $\floor{\frac{n}{2}}$ to sample a dominating set with smaller cardinality. This can then be used to bound the expected time until the RLS samples an optimal solution, i.e., a minimum dominating set of ~$C_n$.

\begin{theorem}\label{thm:optimal}
    Let $n,k \in \N^+$ with $\ceil{\frac{n}{3}} < k \leq \floor{\frac{n}{2}}$. Let $(D_t \in \D)_{t \in \N}$ be a sequence of dominating sets resulting from mutations of the RLS. Let $t_k \coloneq \minS{t \in \N \with |D_{t}| = k}$ and $T_k \coloneq \minS{t \geq t_k \with |D_t| < k}$. Then \[
        \E{T_k - t_k} \leq 8n \cdot \frac{(k+1)(k+2)+2}{2}  \cdot \left(8\log^2(k) + 6(\log(k)+1) \right)
    \]
\end{theorem}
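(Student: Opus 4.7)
The plan is to decompose $T_k - t_k$ into excursions that reach a redundant dominating set and the subsequent trials to reduce cardinality, then bound each piece via \Cref{thm:first_redundancy_time} and \Cref{lem:rls_retries}.

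Using the notation from \Cref{lem:rls_retries}, I would set $E_0 \coloneq t_k$ and, for $1 \leq i \leq TR_k$, let $A_i \coloneq S_i - E_{i-1}$ be the time between the $(i-1)$-th exit from a redundant state and the arrival at the $i$-th redundant dominating set of cardinality $k$. I would further set $B_i \coloneq E_i - S_i$ for $i < TR_k$ and $B_{TR_k} \coloneq SC_k - S_{TR_k}$, so that $B_i$ measures the length of the $i$-th trial and the last trial is the successful one that strictly reduces the cardinality. With this decomposition, $T_k - t_k = \sum_{i=1}^{TR_k} (A_i + B_i)$.

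The crucial observation is that, conditional on any history up to time $E_{i-1}$, the current state $D_{E_{i-1}}$ is an irredundant dominating set of cardinality $k$. By the Markov property of the RLS and \Cref{thm:first_redundancy_time}, the conditional expectation of $A_i$ is therefore at most $4n \cdot \frac{(k+1)(k+2)}{2} \cdot L$ with $L \coloneq 8\log^2(k)+6(\log(k)+1)$, and \Cref{lem:rls_retries} gives a uniform conditional bound of $4n$ for every $B_i$ (including the terminal one). Since the event $\{TR_k \geq i\}$ is determined by the first $i-1$ trials, it is measurable with respect to the history up to $E_{i-1}$, so the tower property yields
\begin{align*}
\E{T_k - t_k} = \sum_{i \geq 1} \E{(A_i+B_i)\ind{TR_k \geq i}} \leq \left(4n\cdot\frac{(k+1)(k+2)}{2}\cdot L + 4n\right) \cdot \E{TR_k}.
\end{align*}
Plugging in $\E{TR_k} \leq 2$ from \Cref{lem:rls_retries} gives $\E{T_k - t_k} \leq 8n\cdot\frac{(k+1)(k+2)}{2}\cdot L + 8n$. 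Since $L \geq 6$ for every $k \geq 1$, the additive $8n$ is dominated by $8n \cdot L = 8n \cdot \frac{2}{2} \cdot L$, which combines with the main term to yield exactly the claimed bound $8n\cdot\frac{(k+1)(k+2)+2}{2}\cdot L$.

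The main subtlety I anticipate is justifying the uniformity of the excursion-time bound from \Cref{thm:first_redundancy_time} across the different irredundant starting configurations encountered at each $E_{i-1}$. This hinges on the fact that the bound in that theorem depends only on $n$ and $k$ and not on any particular dominating set, so an explicit appeal to the Markov property at each $E_{i-1}$ will allow us to reuse it uniformly for every excursion.
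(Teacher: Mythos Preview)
Your proposal is correct and follows essentially the same decomposition as the paper: split the time at level $k$ into alternating phases spent in irredundant states (bounded via \Cref{thm:first_redundancy_time}) and in redundant states (bounded via \Cref{lem:rls_retries}), then multiply by the expected number of such phases $\E{TR_k}\leq 2$. The only real difference is in how the sum over a random number of phases is handled. The paper invokes Wald's equation after asserting that the $TI_i$ for $i>1$ are i.i.d.\ (justified only by the heuristic that the post-trial configurations are ``sufficiently similar''), whereas you condition on the history up to $E_{i-1}$, use that $\{TR_k\geq i\}$ is measurable with respect to that history, and apply the uniform conditional bounds directly via the tower property. Your route is arguably cleaner, since it sidesteps the i.i.d.\ assumption entirely and rests only on the fact that the bounds in \Cref{thm:first_redundancy_time} and \Cref{lem:rls_retries} depend on $n,k$ alone; the paper's Wald-based calculation is morally the same but less carefully justified. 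The final arithmetic (absorbing the additive $8n$ into the $+2$ in the numerator using $L\geq 1$) matches the paper's simplification.
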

\begin{proof}
    Let $k \in \N^+$ with the respective restrictions. We are only interested in the time the RLS spends with states inducing dominating sets with cardinality $k$. Since this is independent of the time spent with greater cardinalities, we may assume $|D_0| = k$. Let $T \coloneq \minS{t \in \N \with |D_t| < k}$. Then the states during the interval $[0,T]$ can be classified by the state of their dominating set, namely whether they are redundant or irredundant. Let $R$ be the random variable denoting how often, in this interval, the RLS transitioned from an irredundant state to a redundant one. Let $(TR_i)_{i\in\N}$ and $(TI_i)_{i\in\N}$ be random variables denoting the time spent in redundant states before transitioning back and the time spent in between in irredundant states. For worst case analysis, we may assume that we start in an irredundant state. Then 
    \begin{align*}
        T &= \sum^{R}_{i=1} TR_i + \sum^{R}_{i=1}TI_i \text{ and thus } \\
        \E{T} &= \E{\sum^{R}_{i=1} TR_i} + \E{\sum^{R}_{i=1}TI_i}.
    \end{align*}
    When the RLS switches from a redundant to an irredundant state, we have an arc of length $5$ containing three vertices, since there had to have been an arc of length $4$ containing the same three vertices for the set to have been redundant. We assume states with this characteristic to be sufficiently similar, so that for $i,j > 1$, we may assume $TI_i$ and $TI_j$ to be independent and identically distributed. We apply the same reasoning for all $i,j$ to $TR_i$ and $TR_j$ as they all contain an arc of length $4$ with at least three vertices. We then have
    \begin{align*}
        \E{T} &= \Ex{\sum^{R}_{i=1} TR_i} + \E{TI_1 + \sum^{R-1}_{i=1} TI_{i+1}}.
        \intertext{We can now apply Wald's Equation (\Cref{thm:wald}):}
        &= \Ex{R}\Ex{TR_1} + \Ex{TI_1} + (\Ex{R}-1)\Ex{TI_2}
        \intertext{Using \Cref{lem:rls_retries} and \Cref{thm:first_redundancy_time} we get} 
        &\leq 2\cdot 4n + 2 \cdot 4n \cdot \frac{(k+1)(k+2)}{2}  \cdot \left(8\log^2(k) + 6(\log(k)+1) \right).
        \intertext{Simplifying finally yields}
        \E{T} &\leq 8n \cdot \frac{(k+1)(k+2)+2}{2}  \cdot \left(8\log^2(k) + 6(\log(k)+1) \right).
    \end{align*}
\end{proof}

With this bound, we can now finally prove the main theorem of this thesis.
\begin{theorem}
    Let $n \in \N^+$. It takes the RLS in expectation $\bigO{n^4\log^2(n)}$ iterations to sample an optimal dominating set of $C_n$.
\end{theorem}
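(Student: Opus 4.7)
The plan is to combine all the runtime results proved earlier in the chapter and sum them over the successive optimization phases of the RLS via linearity of expectation. Specifically, we partition the trajectory of the algorithm into four stages: (i) reaching a feasible dominating set from the random initialization, (ii) reducing the cardinality down to $\floor{n/2}$, (iii) iteratively reducing the cardinality from each $k$ with $\ceil{n/3} < k \leq \floor{n/2}$ to something strictly smaller, and (iv) terminating once $|D_t| = \ceil{n/3}$, i.e., when the set is optimal.

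For stage (i), \Cref{thm:feasible} gives an expected $\bigO{n \log n}$ iterations. For stage (ii), after sampling a feasible set, \Cref{thm:half} gives another $\bigO{n \log n}$ iterations until the cardinality is at most $\floor{n/2}$. The dominant contribution comes from stage (iii). By the definition of the fitness function $\fitness$, once the RLS has a feasible dominating set of cardinality $k$, every subsequent accepted mutation results in a set of cardinality at most $k$. Hence the cardinality process is non-increasing, and the RLS must pass through every cardinality $k$ from $\floor{n/2}$ down to $\ceil{n/3}$. For each such $k$, \Cref{thm:optimal} applies and gives an expected time to reduce below $k$ bounded by
\[
8n \cdot \frac{(k+1)(k+2)+2}{2} \cdot \bigl(8\log^2(k) + 6(\log(k)+1)\bigr) \in \bigO{n \cdot k^2 \log^2 k}.
\]

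Summing these bounds via linearity of expectation over $k \in \{\ceil{n/3}+1, \dots, \floor{n/2}\}$ yields
\[
\sum_{k=\ceil{n/3}+1}^{\floor{n/2}} \bigO{n \cdot k^2 \log^2 k} \leq \bigO{n \log^2 n} \cdot \sum_{k=1}^{\floor{n/2}} k^2 \in \bigO{n \log^2 n} \cdot \bigO{n^3} = \bigO{n^4 \log^2 n}.
\]
Adding the $\bigO{n \log n}$ contributions from the first two stages does not change this asymptotic order, so the total expected runtime is $\bigO{n^4 \log^2 n}$.

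The only subtlety worth a sentence of justification is the applicability of \Cref{thm:optimal} independently in each cardinality class: the bound there is conditional on the first hitting time $t_k$ of cardinality $k$, and by the Markov property of the RLS the expected duration from $t_k$ until $T_k$ depends only on the state at $t_k$ and not on earlier history, so the sum of the conditional expectations is a valid upper bound for the total expected time. There is no genuine obstacle here; the work lies entirely in the preceding lemmas, and this final theorem is a straightforward telescoping of their bounds.
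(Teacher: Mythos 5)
Your proposal is correct and follows essentially the same route as the paper: decompose the total runtime as $T = T_F + T_H + T_O$, bound the first two phases by \Cref{thm:feasible} and \Cref{thm:half}, and sum the level-wise bound of \Cref{thm:optimal} over $k$ from $\ceil{\frac{n}{3}}+1$ to $\floor{\frac{n}{2}}$, with the sum $\sum_k k^2 \in \bigO{n^3}$ delivering the dominant $\bigO{n^4\log^2(n)}$ term. The paper carries out the same telescoping with explicit constants, so there is nothing to add.
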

\begin{proof}
    Let $(D_t \in \D)_{t \in \N}$ be a sequence of sets resulting from mutations of the RLS. Let $T \coloneq \minS{t \in \N \with |D_t| = \ceil{\frac{n}{3}}}$ be the first time the RLS samples an optimal dominating set. Let $T_F \coloneq \minS{t \geq 0 \with D_t \text{ is a dominating set}}$, $T_H \coloneq \minS{t \geq T_F \with |D_t| \leq \floor{\frac{n}{2}}} - T_F$ and $T_O \coloneq \minS{t \geq T_H \with |D_t| = \ceil{\frac{n}{3}}} - T_H - T_F$. Then $T = T_F + T_H + T_O$. With this we have \begin{align}
        \E{T} = \E{T_F} + \E{T_H} + \E{T_O}. \label{eq:total_time}
    \end{align}
    With \Cref{thm:feasible} we have $\E{T_F} \in \bigO{n\log(n)}$ and with \Cref{thm:half}, $\E{T_H} \in \bigO{n \log(n)}$. Let $t_k \coloneq \minS{t \in \N \with D_t \text{ is a dominating set and } |D_t| = k}$ and $T_k \coloneq \minS{t \in \N \with D_t \text{ is a dominating set and } |D_t| < k}$. Then \begin{align*}
    T_O &= \sum_{k = \ceil{\frac{n}{3}} +1}^{\floor{\frac{n}{2}}} T_k - t_k \text{ and } \\
    \E{T_O} &=  \sum_{k = \ceil{\frac{n}{3}} + 1}^{\floor{\frac{n}{2}}} \E{T_k - t_k}.
    \end{align*}
    By applying \Cref{thm:optimal} we get 
    \begin{align*}
        \E{T_O} &\leq \sum_{k = \ceil{\frac{n}{3}} + 1}^{\floor{\frac{n}{2}}} 8n \cdot \frac{(k+1)(k+2)+2}{2}  \cdot \left(8\log^2(k) + 6(\log(k)+1) \right) \\
        &\leq 4n\cdot\left(8\log^2(n) + 6(\log(n)+1) \right) \cdot \sum_{k = \ceil{\frac{n}{3}} + 1}^{\floor{\frac{n}{2}}} k^2 + 3k + 2 \\
        &\leq 4n\cdot\left(8\log^2(n) + 6(\log(n)+1) \right)\cdot \left(\frac{n(n+1)(2n+1)}{6} +\frac{3n(n+1)}{2}+2n\right) \\
        &\leq 4n\cdot\left(8\log^2(n) + 6(\log(n)+1) \right)\cdot \left(2n^3+3n^2+n + 3n^2 + 3n + 2n\right) \\
        &= 8n^4\cdot \left(8\log^2(n) + 6(\log(n)+1) \right) \\
        &~~~~+ \left(8\log^2(n) + 6(\log(n)+1) \right) \cdot (6n^2+6n) \\
        &\in \bigO{n^4\log^2(n)}.
    \end{align*}
    Thus, for $\E{T}$ substituting into \eqref{eq:total_time}, we have \[
        \E{T} \in \bigO{n\log(n)} + \bigO{n\log(n)} + \bigO{n^4\log^2(n)}.
    \]
    This proves the theorem.
\end{proof}

\end{section}


    \makeatletter
        \def\toclevel@chapter{-1}
        \def\toclevel@section{0}
    \makeatother

    \chapter{Conclusions \& Outlook}

In this thesis, we proved that our Random Local Search heuristic finds a minimum dominating set on a cycle graph $C_n$ in expected time $\bigO{n^4\log^2(n)}$. 
For this, we proved new bounds for the effective resistance in a Triangle Grid graph with uniform resistances.
We determined that a randomly initialized set is most likely not dominating. Starting there, we showed that in expected time $\bigO{n\log(n)}$ RLS would first sample a valid dominating set. We then analyzed properties of dominating sets with $\floor{\frac{n}{2}} + q \leq n$ vertices. We discovered that in such sets there are at least $q$ vertices that can be removed from the set while still maintaining domination. Using this fact in the analysis of the RLS, we showed that it takes expected time $\bigO{n\log(n)}$ to sample a dominating set with $\floor{\frac{n}{2}}$ vertices, which is already only $1.5$ times the optimal size of $\ceil{\frac{n}{3}}$. After this, we introduced the adjacency graph and respective sequence as well as the corresponding weight sequence. These were an intermediate result to capture the dynamic of the RLS swapping vertices. We formalized particle systems based on these models. They allowed us to see the highly localized process of vertex swaps as independent events. Using these, we performed a level-wise analysis of the process to bound the expected time until an optimal solution is found.

We do not believe this bound to be tight for several reasons. With an inequality due to Nash-Williams (given as \emph{Proposition 9.16} in \cite{levin_markov_2017}) one can obtain a lower bound of $\Omega(\log(n))$ for the effective resistance between two vertices in a Triangle Grid graph. We believe that this bound can also be achieved, removing a factor of $\log(n)$ from the runtime of the algorithm. Since the effective resistance is proportional to the escape probability of two nodes, intuitively it seems that this should not be bigger in a triangle than in a square with the same side length $n$, where the resistance is proven to be in $\Theta(\log(n))$. (\Cref{thm:resistance_square}).

The fixed arc process (see \Cref{def:fixed_arc_process}) is in itself very pessimistic. As we noted before, considering only the arc with shortest length at any time could yield better results, however, we were not able to properly analyze this process. The Triangle graph resulting from this process would also be bounded not by $k$, but by the maximum minimum length of an arc containing three particles. However, in the last level, when the current dominating set has cardinality $\frac{n}{3} +1$ and there are three particles, this is roughly $\frac{n}{9}$. It is thus unclear whether this would make an asymptotic difference when summing over all levels. 

Finally, some simulations we conducted suggest that the obtained bound is really too pessimistic. Another way to tackle this analysis that could yield tighter results would be to look at the fitness landscape of this process. When enumerating dominating sets for small $n$ and $k$, we found that the fraction of redundant dominating sets was quite high, only really dropping one level before the optimum. This could be used to analyze the process as a random walk on the state space and to estimate the time until a redundant set is encountered. Since this analysis would look at the trajectory of this process directly without going through several other models, we expect this to better capture the actual complexity.

Additionally, the graph class of cycles is quite simple in terms of the structure of the dominating sets, and the respective performance of the RLS might already be considered poor. To better analyze the applicability of RLS to the Dominating Set problem, other operators and more complex graph classes need to be considered. Other possible graph classes, where Dominating Set is still easy, could be Trees or Cactus graphs.
For Trees, the problem can be solved with dynamic programming \cite{haynes_fundamentals_2023}, and for cactus graphs, there is a linear time algorithm, which finds dominating sets in cycles as a subroutine \cite{hedetniemi_linear_1986}.

    \pagestyle{plain}

    \renewcommand*{\bibfont}{\small}
    \printbibheading
    \addcontentsline{toc}{chapter}{Bibliography}
    \printbibliography[heading = none]

@article{hochba_approximation_1997,
    title = {Approximation {Algorithms} for {NP}-{Hard} {Problems}},
    volume = {28},
    issn = {0163-5700},
    url = {https://dl.acm.org/doi/10.1145/261342.571216},
    doi = {10.1145/261342.571216},
    abstract = {Approximation algorithms have developed in response to the impossibility of solving a great variety of important optimization problems. Too frequently, when attempting to get a solution for a problem, one is confronted with the fact that the problem is NP-hard. This, in the words of Garey and Johnson, means "I can't find an efficient algorithm, but neither can all of these famous people." While this is a significant theoretical step, it hardly qualifies as a cheering piece of news.If the optimal solution is unattainable then it is reasonable to sacrifice optimality and settle for a "good" feasible solution that can be computed efficiently. Of course, we would like to sacrifice as little optimality as possible, while gaining as much as possible in efficiency. Trading-off optimality in favor of tractability is the paradigm of approximation algorithms.The main themes of this book revolve around the design of such algorithms and the "closeness" to optimum that is achievable in polynomial time. To evaluate the limits of approximability, it is important to derive lower bounds or inapproximability results. In some cases, approximation algorithms must satisfy additional structural requirements such as being on-line, or working within limited space. This book reviews the design techniques for such algorithms and the developments in this area since its inception about three decades ago.},
    number = {2},
    urldate = {2025-10-21},
    journal = {SIGACT News},
    author = {Hochba, Dorit S.},
    month = jun,
    year = {1997},
    pages = {40--52},
}

@incollection{eiselt_heuristic_2000,
    address = {Berlin, Heidelberg},
    title = {Heuristic {Algorithms}},
    isbn = {978-3-662-04197-0},
    url = {https://doi.org/10.1007/978-3-662-04197-0_11},
    abstract = {In mathematical programming, a heuristic method, or heuristic for short, is a procedure that determines good or near-optimal solutions to an optimization problem. Early heuristics were distinct methods developed to address specific optimization problems. As an example, consider the minimum makespan problem that arises in machine scheduling. In this problem, n given jobs with known processing times must be processed on m machines. The objective is to find a schedule which minimizes the latest finish time of the last job completed. This problem is known to be strongly NP-complete. A simple heuristic to solve this problem is the list heuristic. This method lists the jobs in some given order, then takes the first m jobs and assigns each to exactly one machine, so that at this point, the completion time of each machine corresponds to the processing time of the job assigned to it. Then, the remaining (n — m) jobs are selected in order of the list and assigned one by one to the machine that has currently the earliest finish time. This heuristic has the advantage of being computationally simple, and it can easily be implemented in real time. Better solutions to the problem, i.e., schedules with a shorter makespan, can be determined by improving on the heuristic in the following way: first sort the jobs on the list in nonincreasing order of processing time, and then apply the same machine selection process as presented above. Both heuristics solve the problem; the first is faster computationally, whereas the second is likely to produce better solutions.},
    urldate = {2025-10-21},
    booktitle = {Integer {Programming} and {Network} {Models}},
    publisher = {Springer},
    author = {Eiselt, H. A. and Sandblom, C.-L.},
    editor = {Eiselt, H. A. and Sandblom, C.-L.},
    year = {2000},
    doi = {10.1007/978-3-662-04197-0_11},
    keywords = {Heuristic Algorithm, Neighborhood Search, Simulated Annealing, Tabu List, Tabu Search},
    pages = {229--258},
}

@book{ausiello_complexity_1999,
    address = {Berlin, Heidelberg},
    title = {Complexity and {Approximation}},
    copyright = {http://www.springer.com/tdm},
    isbn = {978-3-642-58412-1},
    url = {http://link.springer.com/10.1007/978-3-642-58412-1},
    urldate = {2025-10-21},
    publisher = {Springer Berlin Heidelberg},
    author = {Ausiello, Giorgio and Marchetti-Spaccamela, Alberto and Crescenzi, Pierluigi and Gambosi, Giorgio and Protasi, Marco and Kann, Viggo},
    year = {1999},
    doi = {10.1007/978-3-642-58412-1},
}

@article{chvatal_greedy_1979,
    title = {A {Greedy} {Heuristic} for the {Set}-{Covering} {Problem}},
    volume = {4},
    issn = {0364-765X},
    url = {https://pubsonline.informs.org/doi/10.1287/moor.4.3.233},
    doi = {10.1287/moor.4.3.233},
    abstract = {Let A be a binary matrix of size m × n, let cT be a positive row vector of length n and let e be the column vector, all of whose m components are ones. The set-covering problem is to minimize cTx subject to Ax ≥ e and x binary. We compare the value of the objective function at a feasible solution found by a simple greedy heuristic to the true optimum. It turns out that the ratio between the two grows at most logarithmically in the largest column sum of A. When all the components of cT are the same, our result reduces to a theorem established previously by Johnson and Lovasz.},
    number = {3},
    urldate = {2025-10-21},
    journal = {Mathematics of Operations Research},
    author = {Chvatal, V.},
    month = aug,
    year = {1979},
    note = {INFORMS},
    keywords = {evaluation of greedy heuristic, integer programming, set covering},
    pages = {233--235},
}

@article{blasius_solving_2023,
    title = {Solving {Vertex} {Cover} in {Polynomial} {Time} on {Hyperbolic} {Random} {Graphs}},
    volume = {67},
    issn = {1432-4350, 1433-0490},
    url = {https://link.springer.com/10.1007/s00224-021-10062-9},
    doi = {10.1007/s00224-021-10062-9},
    abstract = {The VertexCover problem is proven to be computationally hard in diﬀerent ways: It is NPcomplete to ﬁnd an optimal solution and even NP-hard to ﬁnd an approximation with reasonable factors. In contrast, recent experiments suggest that on many real-world networks the run time to solve VertexCover is way smaller than even the best known FPT-approaches can explain. Similarly, greedy algorithms deliver very good approximations to the optimal solution in practice. We link these observations to two properties that are observed in many real-world networks, namely a heterogeneous degree distribution and high clustering. To formalize these properties and explain the observed behavior, we analyze how a branch-and-reduce algorithm performs on hyperbolic random graphs, which have become increasingly popular for modeling real-world networks. In fact, we are able to show that the VertexCover problem on hyperbolic random graphs can be solved in polynomial time, with high probability. The proof relies on interesting structural properties of hyperbolic random graphs. Since these predictions of the model are interesting in their own right, we conducted experiments on real-world networks showing that these properties are also observed in practice. When utilizing the same structural properties in an adaptive greedy algorithm, further experiments suggest that, on real instances, this leads to better approximations than the standard greedy approach within reasonable time.},
    number = {1},
    urldate = {2025-10-21},
    journal = {Theory of Computing Systems},
    author = {Bläsius, Thomas and Fischbeck, Philipp and Friedrich, Tobias and Katzmann, Maximilian},
    month = feb,
    year = {2023},
    pages = {28--51},
}

@article{benoist_randomized_2011,
    title = {Randomized {Local} {Search} for {Real}-{Life} {Inventory} {Routing}},
    volume = {45},
    issn = {0041-1655},
    url = {https://pubsonline.informs.org/doi/abs/10.1287/trsc.1100.0360},
    doi = {10.1287/trsc.1100.0360},
    abstract = {In this paper, a new practical solution approach based on randomized local search is presented for tackling a real-life inventory routing problem. Inventory routing refers to the optimization of transportation costs for the replenishment of customers' inventories: based on consumption forecasts, the vendor organizes delivery routes. Our model takes into account pickups, time windows, drivers' safety regulations, orders, and many other real-life constraints. This generalization of the vehicle-routing problem was often handled in two stages in the past: inventory first, routing second. On the contrary, a characteristic of our local search approach is the absence of decomposition, made possible by a fast volume assignment algorithm. Moreover, thanks to a large variety of randomized neighborhoods, a simple first-improvement descent is used instead of tuned, complex metaheuristics. The problem being solved every day with a rolling horizon, the short-term objective needs to be carefully designed to ensure long-term savings. To achieve this goal, we propose a new surrogate objective function for the short-term model, based on long-term lower bounds. An extensive computational study shows that our solution is effective, efficient, and robust, providing long-term savings exceeding 20\% on average, compared to solutions built by expert planners or even a classical urgency-based constructive algorithm. Confirming the promised gains in operations, the resulting decision support system is progressively deployed worldwide.},
    number = {3},
    urldate = {2025-10-21},
    journal = {Transportation Science},
    author = {Benoist, Thierry and Gardi, Frédéric and Jeanjean, Antoine and Estellon, Bertrand},
    month = aug,
    year = {2011},
    note = {INFORMS},
    keywords = {decision support system, high-performance algorithm engineering, logistics, randomized local search, real-life inventory routing},
    pages = {381--398},
}

@article{franti_randomised_2000,
    title = {Randomised {Local} {Search} {Algorithm} for the {Clustering} {Problem}},
    volume = {3},
    issn = {1433-7541},
    url = {https://doi.org/10.1007/s100440070007},
    doi = {10.1007/s100440070007},
    abstract = {We consider clustering as a combinatorial optimisation problem. Local search provides a simple and effective approach to many other combinatorial optimisation problems. It is therefore surprising how seldom it has been applied to the clustering problem. Instead, the best clustering results have been obtained by more complex techniques such as tabu search and genetic algorithms at the cost of high run time. We introduce a new randomised local search algorithm for the clustering problem. The algorithm is easy to implement, sufficiently fast, and competitive with the best clustering methods. The ease of implementation makes it possible to tailor the algorithm for various clustering applications with different distance metrics and evaluation criteria.},
    number = {4},
    urldate = {2025-10-21},
    journal = {Pattern Analysis \& Applications},
    author = {Fränti, P. and Kivijärvi, J.},
    month = dec,
    year = {2000},
    keywords = {Keywords:Clustering; Combinatorial optimisation; Compression; Image processing; Local search; Vector quantisation},
    pages = {358--369},
}

@article{droste_analysis_2002,
    title = {On the analysis of the (1+1) evolutionary algorithm},
    volume = {276},
    copyright = {https://www.elsevier.com/tdm/userlicense/1.0/},
    issn = {03043975},
    url = {https://linkinghub.elsevier.com/retrieve/pii/S0304397501001827},
    doi = {10.1016/S0304-3975(01)00182-7},
    abstract = {Many experimental results are reported on all types of Evolutionary Algorithms but only few results have been proved. A step towards a theory on Evolutionary Algorithms, in particular, the so-called (1 + 1) Evolutionary Algorithm, is performed. Linear functions are proved to be optimized in expected time O(nlnn) but only mutation rates of size 8(1=n) can ensure this behavior. For some polynomial of degree 2 the optimization needs exponential time. The same is proved for a unimodal function. Both results were not expected by several other authors. Finally, a hierarchy result is proved. Moreover, methods are presented to analyze the behavior of the (1 + 1) Evolutionary Algorithm. c© 2002 Elsevier Science B.V. All rights reserved.},
    number = {1-2},
    urldate = {2025-10-21},
    journal = {Theoretical Computer Science},
    author = {Droste, Stefan and Jansen, Thomas and Wegener, Ingo},
    month = apr,
    year = {2002},
    pages = {51--81},
}

@article{nacher_minimum_2016,
    series = {Pan-omics analysis of biological data},
    title = {Minimum dominating set-based methods for analyzing biological networks},
    volume = {102},
    issn = {1046-2023},
    url = {https://www.sciencedirect.com/science/article/pii/S1046202315300967},
    doi = {10.1016/j.ymeth.2015.12.017},
    abstract = {The fast increase of ‘multi-omics’ data does not only pose a computational challenge for its analysis but also requires novel algorithmic methodologies to identify complex biological patterns and decipher the ultimate roots of human disorders. To that end, the massive integration of omics data with disease phenotypes is offering a new window into the cell functionality. The minimum dominating set (MDS) approach has rapidly emerged as a promising algorithmic method to analyze complex biological networks integrated with human disorders, which can be composed of a variety of omics data, from proteomics and transcriptomics to metabolomics. Here we review the main theoretical foundations of the methodology and the key algorithms, and examine the recent applications in which biological systems are analyzed by using the MDS approach.},
    urldate = {2025-10-24},
    journal = {Methods},
    author = {Nacher, Jose C. and Akutsu, Tatsuya},
    month = jun,
    year = {2016},
    keywords = {Complex networks, Minimum dominating set, Network controllability, Protein-protein interaction networks},
    pages = {57--63},
}

@article{samuel_dtn_2009,
    title = {{DTN} {Based} {Dominating} {Set} {Routing} for {MANET} in {Heterogeneous} {Wireless} {Networking}},
    volume = {14},
    issn = {1572-8153},
    url = {https://doi.org/10.1007/s11036-008-0131-8},
    doi = {10.1007/s11036-008-0131-8},
    abstract = {In mobile communications, effective inter-networking is mandatory in order to support user roaming among various types of wireless networks while maintaining connectivity. In this paper, we propose a super node system architecture to achieve the connectivity over interconnected heterogeneous wireless access networks, which employs the delay-tolerant network (DTN) concept to overcome the problem of potential intermittent connections caused by user roaming and ensures message delivery in the presence of a long disconnection period. By introducing the concept of virtual network topology, we present a new routing technique for mobile ad hoc networks (MANETs) within the system architecture, which redefines the dominating-set based routing for the challenged network environment. A time based methodology is presented to predict the probability of future contacts between node pairs for constructing the virtual network topology. Simulation results demonstrate the effectiveness of the proposed dominating-set based routing scheme under the DTN system architecture.},
    number = {2},
    urldate = {2025-10-23},
    journal = {Mobile Networks and Applications},
    author = {Samuel, Hany and Zhuang, Weihua and Preiss, Bruno},
    month = apr,
    year = {2009},
    keywords = {delay tolerant network (DTN), intermittent links, mobile ad hoc network (MANET), routing, user mobility},
    pages = {154--164},
}

@inproceedings{shen_multi-document_2010,
    address = {USA},
    series = {{COLING} '10},
    title = {Multi-document summarization via the minimum dominating set},
    abstract = {Multi-document summarization has been an important problem in information retrieval. It aims to distill the most important information from a set of documents to generate a compressed summary. Given a sentence graph generated from a set of documents where vertices represent sentences and edges indicate that the corresponding vertices are similar, the extracted summary can be described using the idea of graph domination. In this paper, we propose a new principled and versatile framework for multi-document summarization using the minimum dominating set. We show that four well-known summarization tasks including generic, query-focused, update, and comparative summarization can be modeled as different variations derived from the proposed framework. Approximation algorithms for performing summarization are also proposed and empirical experiments are conducted to demonstrate the effectiveness of our proposed framework.},
    urldate = {2025-10-23},
    booktitle = {Proceedings of the 23rd {International} {Conference} on {Computational} {Linguistics}},
    publisher = {Association for Computational Linguistics},
    author = {Shen, Chao and Li, Tao},
    month = aug,
    year = {2010},
    pages = {984--992},
}

@incollection{haynes_fundamentals_2023,
    title = {Fundamentals of {Domination}},
    isbn = {978-3-031-09496-5},
    url = {https://link.springer.com/chapter/10.1007/978-3-031-09496-5_2},
    abstract = {As we have seen in Chapter 1, domination in graphs has roots in many sources, including defense strategies, games such as chess, computer communication networks, and network surveillance and security. In this chapter, we discuss the graph theoretical core concepts of...},
    urldate = {2025-08-24},
    booktitle = {Domination in {Graphs}: {Core} {Concepts}},
    publisher = {Springer, Cham},
    author = {Haynes, Teresa W. and Hedetniemi, Stephen T. and Henning, Michael A.},
    year = {2023},
    doi = {10.1007/978-3-031-09496-5_2},
    note = {ISSN: 2196-9922},
    pages = {27--47},
}

@article{chlebik_approximation_2008,
    title = {Approximation hardness of dominating set problems in bounded degree graphs},
    volume = {206},
    copyright = {https://www.elsevier.com/tdm/userlicense/1.0/},
    issn = {08905401},
    url = {https://linkinghub.elsevier.com/retrieve/pii/S0890540108000928},
    doi = {10.1016/j.ic.2008.07.003},
    number = {11},
    urldate = {2025-10-23},
    journal = {Information and Computation},
    author = {Chlebík, M. and Chlebíková, J.},
    month = nov,
    year = {2008},
    pages = {1264--1275},
}

@article{sanchis_experimental_2002,
    title = {Experimental {Analysis} of {Heuristic} {Algorithms} for the {Dominating} {Set} {Problem}},
    volume = {33},
    issn = {1432-0541},
    url = {https://doi.org/10.1007/s00453-001-0101-z},
    doi = {10.1007/s00453-001-0101-z},
    abstract = {We say a vertex v  in a graph Gcovers  a vertex w  if v=w  or if v  and w  are adjacent. A subset of vertices of G  is a dominating set  if it collectively covers all vertices in the graph. The dominating set problem, which is NP-hard, consists of finding a smallest possible dominating set for a graph. The straightforward greedy strategy for finding a small dominating set in a graph consists of successively choosing vertices which cover the largest possible number of previously uncovered vertices. Several variations on this greedy heuristic are described and the results of extensive testing of these variations is presented. A more sophisticated procedure for choosing vertices, which takes into account the number of ways in which an uncovered vertex may be covered, appears to be the most successful of the algorithms which are analyzed. For our experimental testing, we used both random graphs and graphs constructed by test case generators which produce graphs with a given density and a specified size for the smallest dominating set. We found that these generators were able to produce challenging graphs for the algorithms, thus helping to discriminate among them, and allowing a greater variety of graphs to be used in the experiments.},
    number = {1},
    urldate = {2025-10-23},
    journal = {Algorithmica},
    author = {Sanchis, L. A.},
    month = may,
    year = {2002},
    keywords = {Key words. Dominating set, Approximation algorithms, Test cases.},
    pages = {3--18},
}

@article{hedetniemi_linear_1986,
    title = {A linear algorithm for finding a minimum dominating set in a cactus},
    volume = {13},
    copyright = {https://www.elsevier.com/tdm/userlicense/1.0/},
    issn = {0166218X},
    url = {https://linkinghub.elsevier.com/retrieve/pii/0166218X86900892},
    doi = {10.1016/0166-218X(86)90089-2},
    number = {2-3},
    urldate = {2025-10-25},
    journal = {Discrete Applied Mathematics},
    author = {Hedetniemi, S.T and Laskar, Renu and Pfaff, John},
    month = mar,
    year = {1986},
    pages = {287--292},
}

@inproceedings{hedar_hybrid_2010,
    address = {Berlin, Heidelberg},
    title = {Hybrid {Genetic} {Algorithm} for {Minimum} {Dominating} {Set} {Problem}},
    isbn = {978-3-642-12189-0},
    doi = {10.1007/978-3-642-12189-0_40},
    abstract = {The minimum dominating set (MDS) problem is one of the central problems of algorithmic graph theory and has numerous applications especially in graph mining. In this paper, we propose a new hybrid method based on genetic algorithm (GA) to solve the MDS problem, called shortly HGA-MDS. The proposed method invokes a new fitness function to effectively measure the solution qualities. The search process in HGA-MDS uses local search and intensification schemes beside the GA search methodology in order to achieve faster performance. Finally, the performance of the HGA-MDS is compared with the standard GA. The new invoked design elements in HGA-MDS show its promising performance compared with standard GA.},
    booktitle = {Computational {Science} and {Its} {Applications} – {ICCSA} 2010},
    publisher = {Springer},
    author = {Hedar, Abdel-Rahman and Ismail, Rashad},
    editor = {Taniar, David and Gervasi, Osvaldo and Murgante, Beniamino and Pardede, Eric and Apduhan, Bernady O.},
    year = {2010},
    keywords = {Genetic algorithm, Graph theory, Hybrid methods, Meta-heuristics, Minimum dominating set},
    pages = {457--467},
}

@article{ho_enhanced_2006,
    title = {An {Enhanced} {Ant} {Colony} {Optimization} {Metaheuristic} for the {Minimum} {Dominating} {Set} {Problem}},
    volume = {20},
    issn = {0883-9514},
    url = {https://doi.org/10.1080/08839510600940132},
    doi = {10.1080/08839510600940132},
    abstract = {This paper proposes an enhanced Ant Colony Optimization (ACO) metaheuristic called ACO-TS to attack the minimum dominating set (MDS) problem. One of the recognized difficulties faced by ACO in its original form is premature convergence, which produces less satisfactory solutions. We propose a way to encourage a higher degree of exploration of the search space by incorporating a technique based on a concept borrowed from genetic algorithms called tournament selection. Instead of always following the standard mechanism for selecting the next solution component, an ant would make its decision based on the outcome of a tournament between randomly selected allowable components. The frequency of the tournament selection is controlled by a probability measure. The use of tournament selection is coupled with an iteration-best pheromone update. To evaluate the enhanced ACO, we consider the MDS problem formulated from ad hoc network clustering. A comparison with its original form shows that the enhanced ACO produces better solutions using fewer number of cycles. We also empirically demonstrate that the proposed ACO produces better solutions than a genetic algorithm. Finally, we argue, based on empirical results, why the tournament selection approach is preferable to a pure random selection method.},
    number = {10},
    urldate = {2025-10-25},
    journal = {Applied Artificial Intelligence},
    author = {Ho, Chin Kuan and Singh, Yashwant Prasad and Ewe, Hong Tat},
    month = jun,
    year = {2006},
    note = {Taylor \& Francis},
    pages = {881--903},
}

@article{chaurasia_hybrid_2015,
    title = {A hybrid evolutionary algorithm with guided mutation for minimum weight dominating set},
    volume = {43},
    issn = {1573-7497},
    url = {https://doi.org/10.1007/s10489-015-0654-1},
    doi = {10.1007/s10489-015-0654-1},
    abstract = {This paper presents a hybrid evolutionary algorithm with guided mutation (EA/G) to solve the minimum weight dominating set problem (MWDS) which is \${\textbackslash}mathcal \{N\}{\textbackslash}mathcal \{P\}\$-hard in nature not only for general graphs, but also for unit disk graphs (UDG). MWDS finds practical applications in diverse domains such as clustering in wireless networks, intrusion detection in adhoc networks, multi-document summarization in information retrieval, query selection in web databases etc. EA/G is a recently proposed evolutionary algorithm that tries to overcome the shortcomings of genetic algorithms (GAs) and estimation of distribution algorithms (EDAs) both, and that can be considered as a cross between the two. The solution obtained through EA/G algorithm is further improved through an improvement operator. We have compared the performance of our hybrid evolutionary approach with the state-of-the-art approaches on general graphs as well as on UDG. Computational results show the superiority of our approach in terms of solution quality as well as execution time.},
    number = {3},
    urldate = {2025-10-25},
    journal = {Applied Intelligence},
    author = {Chaurasia, Sachchida Nand and Singh, Alok},
    month = oct,
    year = {2015},
    keywords = {Constrained optimization, Dominating set, Estimation of distribution algorithm, Evolutionary algorithm, Guided mutation, Heuristic},
    pages = {512--529},
}

@article{chalupa_order-based_2018,
    title = {An order-based algorithm for minimum dominating set with application in graph mining},
    volume = {426},
    issn = {0020-0255},
    url = {https://www.sciencedirect.com/science/article/pii/S0020025517310277},
    doi = {10.1016/j.ins.2017.10.033},
    abstract = {Dominating set is a set of vertices of a graph such that all other vertices have a neighbour in the dominating set. We propose a new order-based randomised local search (RLSo) algorithm to solve minimum dominating set problem in large graphs. Experimental evaluation is presented for multiple types of problem instances. These instances include unit disk graphs, which represent a model of wireless networks, random scale-free networks, as well as samples from two social networks and real-world graphs studied in network science. Our experiments indicate that RLSo performs better than both a classical greedy approximation algorithm and two metaheuristic algorithms based on ant colony optimisation and local search. The order-based algorithm is able to find small dominating sets for graphs with tens of thousands of vertices. In addition, we propose a multi-start variant of RLSo that is suitable for solving the minimum weight dominating set problem. The application of RLSo in graph mining is also briefly demonstrated.},
    urldate = {2025-10-23},
    journal = {Information Sciences},
    author = {Chalupa, David},
    month = feb,
    year = {2018},
    keywords = {Complex networks, Heuristics, Minimum dominating set, Order-based representation, Randomised local search},
    pages = {101--116},
}

@article{casado_iterated_2023,
    title = {An iterated greedy algorithm for finding the minimum dominating set in graphs},
    volume = {207},
    issn = {0378-4754},
    url = {https://www.sciencedirect.com/science/article/pii/S0378475422005055},
    doi = {10.1016/j.matcom.2022.12.018},
    abstract = {A dominating set in a graph is a set of vertices such that every vertex outside the set is adjacent to a vertex in the set. The domination number is the minimum cardinality of a dominating set in the graph. The problem of finding the minimum dominating set is a combinatorial optimization problem that has been proved to be NP-hard. Given the difficulty of this problem, an Iterated Greedy algorithm is proposed for its solution and it is compared to the solution given by an exact algorithm and by the state-of-art algorithms. Computational results show that the proposal is able to find optimal or near-optimal solutions within a short computational time. Specifically, from the set of instances which can be optimally solved, the proposed method presents an average deviation of 0.04\%. Regarding the more complex set of instances, where the exact method is not able to reach the optimal value, the proposed method achieves an average deviation of 1.23\% with respect to the best-known solution.},
    urldate = {2025-10-25},
    journal = {Mathematics and Computers in Simulation},
    author = {Casado, A. and Bermudo, S. and López-Sánchez, A. D. and Sánchez-Oro, J.},
    month = may,
    year = {2023},
    keywords = {Domination number, Exact algorithm, Greedy heuristics, Iterated greedy, Minimum dominating set},
    pages = {41--58},
}

@article{neumann_randomized_2007,
    title = {Randomized local search, evolutionary algorithms, and the minimum spanning tree problem},
    volume = {378},
    issn = {0304-3975},
    url = {https://www.sciencedirect.com/science/article/pii/S0304397506008589},
    doi = {10.1016/j.tcs.2006.11.002},
    abstract = {Randomized search heuristics, among them randomized local search and evolutionary algorithms, are applied to problems whose structure is not well understood, as well as to problems in combinatorial optimization. The analysis of these randomized search heuristics has been started for some well-known problems, and this approach is followed here for the minimum spanning tree problem. After motivating this line of research, it is shown that randomized search heuristics find minimum spanning trees in expected polynomial time without employing the global technique of greedy algorithms.},
    number = {1},
    urldate = {2025-06-02},
    journal = {Theoretical Computer Science},
    author = {Neumann, Frank and Wegener, Ingo},
    month = jun,
    year = {2007},
    keywords = {Analysis of expected optimization time, Minimum spanning trees, Parallel random search},
    pages = {32--40},
}

@inproceedings{baguley_analysis_2022,
    address = {Boston Massachusetts},
    title = {Analysis of a gray-box operator for vertex cover},
    isbn = {978-1-4503-9237-2},
    url = {https://dl.acm.org/doi/10.1145/3512290.3528848},
    doi = {10.1145/3512290.3528848},
    abstract = {Combinatorial optimization problems are a prominent application area of evolutionary algorithms, where the (1+1) EA is one of the most investigated. We extend this algorithm by introducing some problem knowledge with a specialized mutation operator which works under the assumption that the number of 1s of a solution is critical, as frequently happens in combinatorial optimization. This slight modification increases the chance to correct wrongly placed bits while preserving the simplicity and problem independence of the (1+1) EA.},
    urldate = {2024-12-02},
    booktitle = {Proceedings of the {Genetic} and {Evolutionary} {Computation} {Conference}},
    publisher = {ACM},
    author = {Baguley, Samuel and Friedrich, Tobias and Kötzing, Timo and Li, Xiaoyue and Pappik, Marcus and Zeif, Ziena},
    month = jul,
    year = {2022},
    pages = {1363--1371},
}

@inproceedings{sudholt_analysis_2010,
    address = {Berlin, Heidelberg},
    title = {Analysis of an {Iterated} {Local} {Search} {Algorithm} for {Vertex} {Coloring}},
    isbn = {978-3-642-17517-6},
    doi = {10.1007/978-3-642-17517-6_31},
    abstract = {Hybridizations of evolutionary algorithms and local search are among the best-performing algorithms for vertex coloring. However, the theoretical knowledge about these algorithms is very limited and it is agreed that a solid theoretical foundation is needed. We consider an iterated local search algorithm that iteratively tries to improve a coloring by applying mutation followed by local search. We investigate the capabilities and the limitations of this approach using bounds on the expected number of iterations until an optimal or near-optimal coloring is found. This is done for two different mutation operators and for different graph classes: bipartite graphs, sparse random graphs, and planar graphs.},
    booktitle = {Algorithms and {Computation}},
    publisher = {Springer},
    author = {Sudholt, Dirk and Zarges, Christine},
    editor = {Cheong, Otfried and Chwa, Kyung-Yong and Park, Kunsoo},
    year = {2010},
    keywords = {Bipartite Graph, Chromatic Number, Local Search, Mutation Operator, Planar Graph},
    pages = {340--352},
}

@inproceedings{doerr_drift_2011,
    address = {New York, NY, USA},
    series = {{GECCO} '11},
    title = {Drift analysis},
    isbn = {978-1-4503-0690-4},
    url = {https://dl.acm.org/doi/10.1145/2001858.2002138},
    doi = {10.1145/2001858.2002138},
    abstract = {Drift analysis, introduced to the field of evolutionary computation by He and Yao ten years ago, quickly became one of the strongest tools to prove upper and lower bounds on the run-times of evolutionary algorithms. It has, however, the reputation of being difficult to use, both because it relies on deeper mathematical tools and because it needs a clever guess of a potential function.In this tutorial, after presenting the classical results, I will focus on the recently developed multiplicative drift analysis method. It often is easier to employ and yields stronger results, e.g., run-time bounds that hold with high probability. I will end with a number of open problems of different difficulties.The intended audience of the tutorial has some basic experience in theory, though no particular prerequisites are required.},
    urldate = {2025-10-25},
    booktitle = {Proceedings of the 13th annual conference companion on {Genetic} and evolutionary computation},
    publisher = {ACM},
    author = {Doerr, Benjamin},
    month = jul,
    year = {2011},
    pages = {1311--1320},
}

@article{cockayne_properties_1978,
    title = {Properties of {Hereditary} {Hypergraphs} and {Middle} {Graphs}},
    volume = {21},
    copyright = {https://www.cambridge.org/core/terms},
    issn = {0008-4395, 1496-4287},
    url = {https://www.cambridge.org/core/product/identifier/S000843950005966X/type/journal_article},
    doi = {10.4153/CMB-1978-079-5},
    abstract = {The middle graph of a graph G - ( V, E) is the graph M\{\vphantom{\}}G) = ( V U E , E'), in which two vertices u, v are adjacent if either M is a vertex in V and v is an edge in E containing u, or u and v are edges in E having a vertex in common. Middle graphs have been characterized in terms of line graphs by Hamada and Yoshimura [7], who also investigated their traversability and connectivity properties. In this paper another characterization of middle graphs is presented, in which they are viewed as a class of intersection (representative) graphs of hereditary hypergraphs. Graph theoretic parameters associated with the concepts of vertex independence, dominance, and irredundance for middle graphs are discussed, and equalities relating the chromatic number of a graph to these parameters are obtained.},
    number = {4},
    urldate = {2025-10-25},
    journal = {Canadian Mathematical Bulletin},
    author = {Cockayne, E. J. and Hedetniemi, S. T. and Miller, D. J.},
    month = dec,
    year = {1978},
    pages = {461--468},
}

@book{aarts_local_2003,
    title = {Local {Search} in {Combinatorial} {Optimization}},
    url = {https://www.jstor.org/stable/j.ctv346t9c},
    abstract = {In the past three decades, local search has grown from a simple heuristic idea into a mature field of research in combinatorial optimization that is attracting ever-increasing attention. Local search is still the method of choice for NP-hard problems as it provides a robust approach for obtaining high-quality solutions to problems of a realistic size in reasonable time. \textit{Local Search in Combinatorial Optimization} covers local search and its variants from both a theoretical and practical point of view, each topic discussed by a leading authority. This book is an important reference and invaluable source of inspiration for students and researchers in discrete mathematics, computer science, operations research, industrial engineering, and management science.  In addition to the editors, the contributors are Mihalis Yannakakis, Craig A. Tovey, Jan H. M. Korst, Peter J. M. van Laarhoven, Alain Hertz, Eric Taillard, Dominique de Werra, Heinz Mühlenbein, Carsten Peterson, Bo Söderberg, David S. Johnson, Lyle A. McGeoch, Michel Gendreau, Gilbert Laporte, Jean-Yves Potvin, Gerard A. P. Kindervater, Martin W. P. Savelsbergh, Edward J. Anderson, Celia A. Glass, Chris N. Potts, C. L. Liu, Peichen Pan, Iiro Honkala, and Patric R. J. Östergård.},
    urldate = {2025-10-25},
    publisher = {Princeton University Press},
    editor = {Aarts, Emile and Lenstra, Jan Karel},
    year = {2003},
    doi = {10.2307/j.ctv346t9c},
}

@book{mitzenmacher_probability_2017,
    address = {USA},
    edition = {2},
    title = {Probability and {Computing}: {Randomization} and {Probabilistic} {Techniques} in {Algorithms} and {Data} {Analysis}},
    isbn = {978-1-107-15488-9},
    shorttitle = {Probability and {Computing}},
    abstract = {Greatly expanded, this new edition requires only an elementary background in discrete mathematics and offers a comprehensive introduction to the role of randomization and probabilistic techniques in modern computer science. Newly added chapters and sections cover topics including normal distributions, sample complexity, VC dimension, Rademacher complexity, power laws and related distributions, cuckoo hashing, and the Lovasz Local Lemma. Material relevant to machine learning and big data analysis enables students to learn modern techniques and applications. Among the many new exercises and examples are programming-related exercises that provide students with excellent training in solving relevant problems. This book provides an indispensable teaching tool to accompany a one- or two-semester course for advanced undergraduate students in computer science and applied mathematics.},
    publisher = {Cambridge University Press},
    author = {Mitzenmacher, Michael and Upfal, Eli},
    month = jun,
    year = {2017},
}

@misc{kotzing_theory_2024,
    title = {Theory of {Stochastic} {Drift}},
    url = {http://arxiv.org/abs/2406.14589},
    doi = {10.48550/arXiv.2406.14589},
    abstract = {In studying randomized search heuristics, a frequent quantity of interest is the first time a (real-valued) stochastic process obtains (or passes) a certain value. The processes under investigation commonly show a bias towards this goal, the {\textbackslash}emph\{stochastic drift\}. Turning an iteration-wise expected bias into a first time of obtaining a value is the main result of {\textbackslash}emph\{drift theorems\}. This thesis introduces the theory of stochastic drift, providing examples and reviewing the main drift theorems available. Furthermore, the thesis explains how these methods can be applied in various contexts, including those where drift theorems seem a counterintuitive choice. Later sections examine related methods and approaches.},
    urldate = {2025-10-26},
    publisher = {arXiv},
    author = {Kötzing, Timo},
    month = jun,
    year = {2024},
    note = {arXiv:2406.14589},
    keywords = {Computer Science - Discrete Mathematics, Mathematics - Probability},
}

@inproceedings{giel_evolutionary_2003,
    address = {Berlin, Heidelberg},
    title = {Evolutionary {Algorithms} and the {Maximum} {Matching} {Problem}},
    isbn = {978-3-540-36494-8},
    doi = {10.1007/3-540-36494-3_37},
    abstract = {Randomized search heuristics like evolutionary algorithms are mostly applied to problems whose structure is not completely known but also to combinatorial optimization problems. Practitioners report surprising successes but almost no results with theoretically well-founded analyses exist. Such an analysis is started in this paper for a fundamental evolutionary algorithm and the well-known maximum matching problem. It is proven that the evolutionary algorithm is a polynomial-time randomized approximation scheme (PRAS) for this optimization problem, although the algorithm does not employ the idea of augmenting paths. Moreover, for very simple graphs it is proved that the expected optimization time of the algorithm is polynomially bounded and bipartite graphs are constructed where this time grows exponentially.},
    booktitle = {{STACS} 2003},
    publisher = {Springer},
    author = {Giel, Oliver and Wegener, Ingo},
    editor = {Alt, Helmut and Habib, Michel},
    year = {2003},
    keywords = {Bipartite Graph, Evolutionary Algorithm, Maximum Match, Mutation Step, Search Point},
    pages = {415--426},
}

@book{diestel_graph_2025,
    address = {Berlin, Heidelberg},
    series = {Graduate {Texts} in {Mathematics}},
    title = {Graph {Theory}},
    volume = {173},
    copyright = {https://www.springernature.com/gp/researchers/text-and-data-mining},
    isbn = {978-3-662-70107-2},
    url = {https://link.springer.com/10.1007/978-3-662-70107-2},
    urldate = {2025-10-28},
    publisher = {Springer},
    author = {Diestel, Reinhard},
    year = {2025},
    doi = {10.1007/978-3-662-70107-2},
    keywords = {Combinatorics, Discrete mathematics, Finite and infinite Graphs, Graph, Graph Minors, Graph theory, Matching},
}

@book{levin_markov_2017,
    edition = {2},
    title = {Markov {Chains} and {Mixing} {Times}},
    volume = {107},
    isbn = {978-1-4704-2962-1},
    abstract = {This book is an introduction to the modern theory of Markov chains, whose goal is to determine the rate of convergence to the stationary distribution, as a function of state space size and geometry. This topic has important connections to combinatorics, statistical physics, and theoretical computer science. Many of the techniques presented originate in these disciplines.  The central tools for estimating convergence times, including coupling, strong stationary times, and spectral methods, are developed. The authors discuss many examples, including card shuffling and the Ising model, from statistical mechanics, and present the connection of random walks to electrical networks and apply it to estimate hitting and cover times.  The first edition has been used in courses in mathematics and computer science departments of numerous universities. The second edition features three new chapters (on monotone chains, the exclusion process, and stationary times) and also includes smaller additions and corrections throughout. Updated notes at the end of each chapter inform the reader of recent research developments.},
    publisher = {American Mathematical Soc.},
    author = {Levin, David A. and Peres, Yuval},
    month = oct,
    year = {2017},
    doi = {https://doi.org/10.1090/mbk/107},
    keywords = {Mathematics / Probability \& Statistics / General, Mathematics / Probability \& Statistics / Stochastic Processes},
}

@Preamble
{
    {
    \newcommand{\bibciac}[2]{Proceedings of the #1 Conference on Algorithms and Complexity (CIAC'#2)}
    \newcommand{\bibdac}[2]{Proceedings of the #1 Annual Design Automation Conference (DAC'#2)}
    \newcommand{\bibinvisau}[1]{Proceedings of the Australian Symposium on Information Visualisation (invis.au #1)}
    \newcommand{\bibieeepdp}[2]{Proceedings of the #1 IEEE Symposium on Parallel and Distributed Processing #2}
    \newcommand{\bibieeecs}[1]{Proceedings of the IEEE International Symposium on Circuits and Systems #1}
    \newcommand{\bibcccg}[2]{Proceedings of the #1 Canadian Conference on Computational Geometry (CCCG'#2)}
    \newcommand{\bibswat}[2]{Proceedings of the #1 Scandinavian Workshop on Algorithm Theory (SWAT'#2)}
    \newcommand{\bibipco}[2]{Proceedings of the #1 International Conference on Integer Programming and Combinatorial Optimization (IPCO'#2)}
    \newcommand{\bibsofsem}[2]{Proceedings of the #1 Conference on Current Trends in Theory and Practice of Computer Science (SOFSEM'#2)}
    \newcommand{\bibstoc}[2]{Proceedings of the #1 Annual ACM Symposium on Theory of Computing (STOC'#2)}
    \newcommand{\bibfocs}[2]{Proceedings of the #1 Annual Symposium on Foundations of Computer Science (FOCS'#2)}
    \newcommand{\bibsoda}[2]{Proceedings of the #1 Annual ACM-SIAM Symposium on Discrete Algorithms (SODA'#2)}
    \newcommand{\bibgd}[2]{Proceedings of the #1 International Symposium on Graph Drawing (GD'#2)}
    \newcommand{\bibinfovis}[1]{Proceedings of the IEEE Symposium on Information Visualization (InfoVis'#1)}
    \newcommand{\bibvis}[1]{Proceedings of the IEEE Conference on Visualization (Vis'#1)}
    \newcommand{\bibpvis}[1]{Proceedings of the IEEE Pacific Visualisation Symposium (PacificVis'#1)}
    \newcommand{\bibsoftvis}[2]{Proceedings of the #1 ACM Symposium on Software Visualization (SoftVis'#2)}
    \newcommand{\bibeurocg}[2]{Proceedings of the #1 European Workshop on Computational Geometry (EuroCG'#2)}
    \newcommand{\bibsocg}[2]{Proceedings of the #1 Annual Symposium on Computational Geometry (SoCG'#2)}
    \newcommand{\bibwads}[2]{Proceedings of the #1 International Symposium on Algorithms and Data Structures (WADS'#2)}
    \newcommand{\bibwg}[2]{Proceedings of the #1 Workshop on Graph-Theoretic Concepts in Computer Science (WG'#2)}
    \newcommand{\bibgta}{Proceedings of the Conference at Graph Theory and Applications}
    \newcommand{\bibisaac}[2]{Proceedings of the #1 International Symposium on Algorithms and Computation (ISAAC'#2)}
    \newcommand{\bibcocoon}[2]{Proceedings of the #1 Annual International Conference on Computing and Combinatorics (COCOON'#2)}
    \newcommand{\bibtamc}[2]{Proceedings of the #1 Annual Conference on Theory and Applications of Models of Computation (TAMC'#2)}
    \newcommand{\bibicalp}[2]{Proceedings of the #1 International Colloquium on Automata, Languages and Programming (ICALP'#2)}
    \newcommand{\biblatin}[2]{Proceedings of the #1 Latin American Symposium (LATIN'#2)}
    \newcommand{\bibesa}[2]{Proceedings of the #1 Annual European Symposium on Algorithms (ESA'#2)}
    }
}

@String{Elsevier
= {Elsevier Science Publishers}}

@String{ACM
= {ACM Press}}

@String{INFORMS
= {INFORMS}}

@String{Algorithmica
= {Algorithmica}}

@String{Order
= {Order}}

@String{Integration
= {Integration, the \{VLSI\} Journal}}

@String{OR
= {Operations Research}}

    \addchap{Declaration of Authorship}
    I hereby declare that this thesis is my own unaided work. All direct or indirect sources used are acknowledged as references.\\[6 ex]

\begin{flushleft}
    Potsdam, \today
    \hspace*{2 em}
    \raisebox{-0.9\baselineskip}
    {
        \begin{tabular}{p{5 cm}}
            \hline
            \centering\footnotesize\printAuthor
        \end{tabular}
    }
\end{flushleft}

\end{document}